\newtheorem{lemma}{Lemma}
\newtheorem{theorem}{Theorem}
\newtheorem{algorithm}{Algorithm}
\def\LSB{\left[}        
\def\RSB{\right]}       
\def\LB{\left(}         
\def\RB{\right)}        
\newfont{\bbb}{msbm10 scaled 500}
\newfont{\bb}{msbm10 scaled 1100}
\newcommand{\CC}{\mbox{\bb C}}
\newcommand{\RR}{\mbox{\bb R}}
\newcommand{\fv}{{\bf f}}
\newcommand{\gv}{{\bf g}}
\newcommand{\hv}{{\bf h}}
\newcommand{\mv}{{\bf m}}
\newcommand{\nv}{{\bf n}}
\newcommand{\qv}{{\bf q}}
\newcommand{\wv}{{\bf w}}
\newcommand{\vv}{{\bf v}}
\newcommand{\xv}{{\bf x}}
\newcommand{\yv}{{\bf y}}
\newcommand{\zv}{{\bf z}}
\newcommand{\Am}{{\bf A}}
\newcommand{\Bm}{{\bf B}}
\newcommand{\Fm}{{\bf F}}
\newcommand{\Hm}{{\bf H}}
\newcommand{\Id}{{\bf I}}
\newcommand{\Mm}{{\bf M}}
\newcommand{\Nm}{{\bf N}}
\newcommand{\Qm}{{\bf Q}}
\newcommand{\Rm}{{\bf R}}
\newcommand{\Tm}{{\bf T}}
\newcommand{\Xm}{{\bf X}}
\newcommand{\Ym}{{\bf Y}}
\newcommand{\alphav}{\hbox{\boldmath$\alpha$}}
\newcommand{\betav}{\hbox{\boldmath$\beta$}}
\newcommand{\gammav}{\hbox{\boldmath$\gamma$}}
\newcommand{\deltav}{\hbox{\boldmath$\delta$}}
\newcommand{\lambdav}{\hbox{\boldmath$\lambda$}}
\newcommand{\muv}{\hbox{\boldmath$\mu$}}
\newcommand{\rhov}{\hbox{\boldmath$\rho$}}
\newcommand{\Gammam}{\hbox{\boldmath$\Gamma$}}
\newcommand{\Lambdam}{\hbox{\boldmath$\Lambda$}}
\newcommand{\Deltam}{\hbox{\boldmath$\Delta$}}
\newcommand{\Sigmam}{\hbox{\boldmath$\Sigma$}}
\newcommand{\Phim}{\hbox{\boldmath$\Phi$}}
\newcommand{\diag}{{\hbox{diag}}}
\newcommand{\trace}{{\hbox{tr}}}
\renewcommand{\Im}{{\rm Im}}
\newcommand{\defines}{{\,\,\stackrel{\scriptscriptstyle \bigtriangleup}{=}\,\,}}
\newtheorem{proposition}[theorem]{Proposition}
\newcommand{\beqa}{\begin{eqnarray}}
\newcommand{\eeqa}{\end{eqnarray}}
\newcommand{\dsp}{\displaystyle}
\begin{document}

\title{Coordinated Multi-cell Beamforming for Massive MIMO: A Random Matrix Approach}
\author{Subhash~Lakshminarayana,~\IEEEmembership{Member, IEEE}
, Mohamad Assaad,~\IEEEmembership{Senior Member, IEEE,} and Merouane Debbah,~\IEEEmembership{Fellow, IEEE}
\thanks{
This paper was presented in part at the IEEE International
Symposium on Personal Indoor and Mobile Radio Communications
(PIMRC), September, 2010. S.Lakshminarayana is with the Singapore University of Technology and Design, Singapore. M.Assaad is with the ``Laboratoire des Signaux et Systemes (L2S, UMR CNRS 8506), CENTRALESUPELEC", France. M. Debbah is with CentraleSupelec, France and the Huawei Mathematical and Algorithmic Sciences Lab. e-mail: subhashl@princeton.edu,  mohamad.assaad@centralesupelec.fr and  merouane.debbah@centralesupelec.fr. The research of M.Debbah has been supported by the ERC Starting Grant 305123 MORE and by the French po\^le de compétitivité SYSTEM@TIC within the project 4G in Vitro. The research of M. Assaad was partially supported by the Celtic Project ``SHARING."
\newline The authors would like to thank Jakob Hoydis for his helpful discussions. 
\newline Copyright (c) 2014 IEEE. Personal use of this material is permitted.  However, permission to use this material for any other purposes must be obtained from the IEEE by sending a request to pubs-permissions@ieee.org.
} }

\maketitle

%
\begin{abstract}
We consider the problem of coordinated multi-cell downlink beamforming in massive multiple input multiple output (MIMO) systems consisting of $N$ cells, $N_t$ antennas per base station (BS) and $K$ user terminals (UTs) per cell. 
Specifically, we formulate a multi-cell beamforming
algorithm for massive MIMO systems which requires limited amount of information exchange between the BSs. The design objective is to minimize the aggregate transmit power across all the BSs subject to satisfying the user signal to interference noise ratio (SINR) constraints. The algorithm requires the BSs to exchange parameters which
can be computed solely based on the channel statistics rather than the instantaneous CSI.
We make use of tools from random matrix theory to formulate the decentralized algorithm. 
We also characterize a lower bound on the set of target SINR values for which the decentralized multi-cell beamforming algorithm is feasible.
We further show that the performance of our algorithm asymptotically matches the performance of 
the centralized algorithm with full CSI sharing. 
While the original result focuses on minimizing the aggregate transmit power across all the BSs, we formulate a heuristic extension of this algorithm to incorporate a practical constraint in multi-cell systems, namely the individual BS transmit power constraints.
Finally, we investigate the impact of imperfect CSI and pilot contamination 
effect on the performance of the decentralized algorithm, and propose a heuristic extension of the algorithm
to accommodate these issues. 
Simulation results illustrate that our algorithm closely satisfies the target SINR constraints and achieves minimum power 
in the regime of massive MIMO systems. In addition, it also provides substantial power savings as compared to zero-forcing beamforming when the number of antennas per BS is of the same orders of magnitude as the number of UTs per cell. 
\end{abstract}

\begin{keywords}
Massive MIMO, coordinated beamforming, decentralized design, random matrix theory.
\end{keywords}

\section{Introduction}
Massive multiple input multiple output (MIMO) 
has been identified as an essential ingredient 
in the design of next generation cellular systems, as it provides
substantial improvement in both spectral and energy
efficiency \cite{LarssonMarzettaCommag2014}.
It refers to the idea of scaling up the number of antennas 
on the base station (BS) to a few hundreds, serving many
tens of user terminals (UTs) on the same resource block.
The basic idea is to exploit large number of antennas
to achieve greater spatial resolution and array gain,
resulting in a higher throughput and greater energy efficiency.

The idea of massive MIMO was first proposed in the seminal work of \cite{MarzettaMassiveMIMO2010}. The main finding of the paper was that as the number of antennas on the BS grows without bound, the effects of fast fading and interference vanish, and the system performance
is ultimately limited by only pilot contamination \cite{JoseAshikmin2011}.
The other attractive feature is that simple signal processing techniques at the BS, such as the use of eigen beamforming and matched filter were optimal under this setting.
Subsequently, massive MIMO systems was also studied 
from an energy efficiency point of view and shown to  achieve dramatic improvement in this regard \cite{NgoLarssonMarzetta2013}.
The impact of channel estimation, pilot contamination, and
antenna correlation in massive MIMO systems was
investigated in \cite{HoydisBrinkDebbah2013}. In
particular, it was concluded that 
sophisticated beamforming techniques such as regularized zero forcing (RZF) 
outperform  eigen beamforming in a massive MIMO setting when the number of antennas is of the same orders of magnitude as 
the number of users. Reference \cite{BjornsonSangDebbah2014} addressed the
question of how to select the system parameters
in a massive MIMO system (number of antennas per BS, number of users, transmit power etc.) to maximize the energy 
efficiency. The work in  \cite{LiuLau2014} proposes a hierarchical interference mitigation scheme in massive MIMO systems 
based on a two level precoding with the objective of maximizing the system utility. 
Although all these studies point to impressive gains in massive MIMO both in terms 
of spectral efficiency and energy efficiency, constructing such large dimensional arrays can result in significant 
additional hardware cost of the analog front ends. Moreover, extra physical dimensions are required in order reduce the mutual
coupling between the antenna elements.

Subsequently, it was proposed
that in a multi-cellular environment, the gains obtained by massive MIMO systems can be replicated
by using much lesser number of antennas if BSs are allowed to cooperate with each other.  In this context, \cite{HuhCaire2012} proposed a TDD architecture based network MIMO like system with BS cooperation and zero-forcing (ZF) beamforming, and showed that massive MIMO performance can be achieved with one order of magnitude fewer antennas per active user per cell.
These results motivate us to consider multi-cell cooperation in massive MIMO systems. However, since massive MIMO systems are inherently large, enabling BS cooperation in a multi-cellular environment requires tremendous amount of information exchange between them.

In order to address this issue, in this work, we propose an optimal decentralized multi-cell beamforming algorithm for massive MIMO systems that requires limited amount of information exchange between the BSs. We primarily focus on the so called \emph{coordinated beamforming}, in which BSs formulate their beamforming vectors taking into account the interference they cause to the neighboring cells \cite{Dahrouj2010}. This is accomplished by exchanging the CSI information between them.
However, unlike network MIMO systems, no user data exchange takes place between the BSs. The design objective considered is to minimize the aggregate transmit power across all the BSs subject to satisfying the user signal to noise ratio (SINR) constraints.
Reference \cite{Dahrouj2010} provides an optimal centralized algorithm
to solve this problem. However, the centralized solution demands high computational ability, and exchange of the fast fading CSI co-efficient between the  BSs.
Such an algorithm requires high capacity backhaul links, especially when implemented in a massive MIMO setting.

In order to overcome the heavy backhaul requirement,  we propose in this work a decentralized approach to compute the multi-cell beamforming vectors. In our algorithm, the BSs must exchange parameters at the time scale of slow fading coefficients rather than the instantaneous channel realizations (fast fading coefficients). We use tools from random matrix theory (RMT) to formulate our algorithm.

The multi-cell beamforming strategy involving exchange of  parameters based on channel statistics was first proposed in 
\cite{Lakshminarayana2010} using tools from RMT. It was shown with the help of simulations
that such an algorithm performs well for large system dimensions. 
Similar ideas for multi-cell beamforming were subsequently proposed in \cite{Randa2010}, \cite{ZhakHanlyJSAC2013} and theoretical arguments for the asymptotic optimality of this algorithm were provided in the special case of a two cell Wyner model, with symmetric SINR constraints for all the UTs. 
However, the analyses in \cite{Randa2010}, \cite{ZhakHanlyJSAC2013} rely on obtaining closed form expressions for the system
parameters (such as the uplink and downlink power), and are not extendable to more practical channel models.
In this work, we provide a comprehensive design of the RMT based
decentralized beamforming under a massive MIMO multi-cell setting with a distance based pathloss model. Further, we provide arguments for asymptotic optimality of this algorithm.
Specifically, our main contributions are as follows:

For the original problem of minimizing the aggregate transmit power across all the BSs subject to UT SINR constraints, we present the following results: 
\begin{itemize}
\item We propose a \emph{reduced overhead} beamforming algorithm (ROBF) in a massive MIMO multi-cell setting. In this algorithm,
the BSs require the knowledge of local CSI (of the UTs they are serving and also the UTs present in the other cells).
In addition, the BSs have to compute parameters that depend only on the channel statistics, which they exchange between them to
compute the beamforming vectors. 
\item Using a large system analysis, we provide closed form expression for the lower
bound on the set of target SINR, for which the decentralized multi-cell beamforming algorithm is feasible.
\item We prove that when the dimensions of the system become large, the achieved SINR in the uplink and downlink by the ROBF algorithm exactly match the target SINR.
Moreover, we also prove that when the dimensions of the system become large, the performance of our algorithm in terms of uplink and downlink transmit powers perfectly match that of the optimal algorithm proposed in \cite{Dahrouj2010}.
\end{itemize}
With this algorithm as reference, we present heuristic extensions to incorporate two practical constraints in MIMO multi-cell systems (1) individual BS transmit power constraints (2) impact of imperfect CSI and pilot contamination. The contributions are as follows:
\begin{itemize}
\item We formulate a heuristic extension of the decentralized multi-cell beamforming algorithm to incorporate the individual BS transmit power constraints, and present numerical
results to show the convergence as well as the performance of this algorithm.
\item Finally, we investigate the impact of CSI estimation and pilot contamination on the performance of the ROBF algorithm and propose a heuristic adaptation of the ROBF algorithm that can provide better performance in the presence of pilot contamination. 
\end{itemize}
In addition, our work contains several novel ideas of
combining RMT results with optimization theory
and uplink-downlink duality in MIMO systems,
which are of independent interest.

It is worth noting that RMT results have been used extensively to assess the performance of linear beamforming strategies in multi-user MIMO systems 
\cite{NguyenEvans2008}, \cite{sebastian2010}, \cite{HoydisBrinkDebbah2013}, \cite{HuhCaire2012}.
However, all these works consider ``pre-defined" transmit strategies such as eigen beamforming, zero forcing (ZF), regularized zero forcing (RZF) etc., and 
use RMT as a tool for \emph{performance analysis} of these schemes in the large dimensional regime. In contrast, in this work, we use RMT results for the purpose of \emph{system design} (as opposed to performance analysis), i.e., design of optimal beamforming vectors in MIMO multi-cell systems\footnote{In an unrelated context, RMT results have also been used in the context of system design in \cite{TulinoVerdu2001} and \cite{HuhTulinoCaire2012}}.
Such an approach is novel and is facilitated by combining RMT results with optimization theory. 
Moreover, optimizing the system performance imposes additional technical difficulties in applying RMT results. For e.g., implementing a power control algorithm (both in the uplink and downlink) implies that the transmit powers explicitly depend on the channels (and hence the randomness associated with the channel realization). Such dependency of transmit powers on the channel realization makes it unsuitable to apply RMT results. Our approach in this work is to first propose an algorithm that depends only on the second order statistics of the channel vectors (the path-loss in our case). Then, we apply such an algorithm to the original system set-up, and prove that this algorithm is optimal in the large system domain.

Although the theoretical results prove the optimality of ROBF algorithm in the asymptotic regime, we provide numerical results to 
show that the performance of ROBF algorithm closely matches that of the centralized algorithm for moderate system dimensions (when the number of
antennas are comparable to the number of UTs per cell), both in terms of satisfying the user SINR targets and minimizing the downlink transmission
power. Moreover, these results indicate that ROBF algorithm provides substantial trasnmit power savings
as compared to other beamforming strategies such as zero-forcing n the regime where the number of antennas is comparable to
the number of UTs.

Finally, we remark that all the analysis in this work is performed assuming independent and identical (i.i.d.) channel vectors and time division duplex (TDD) mode of operation. 
The performance of massive MIMO systems with correlated channel models have been studied in prior works such as \cite{HoydisBrinkDebbah2013}.
Recently, there has also been an interest in exploring 
massive MIMO systems with frequency division duplex (FDD) mode of operation \cite{LiuLau2014, AdhikaryCaireMassiveMIMOFDD2014, MasoodMassiveMIMOFDD2015}. As this work is a first step towards exploring power control and the design of optimal beamforming in massive MIMO systems, in order to keep the analysis simple, we restrict our attention to i.i.d. channels and TDD mode of operation. The extension to the case of correlation channels and FDD mode of operation will be a topic of future research.

The rest of the paper is organized as follows. We provide the system model and describe our {reduced overhead} multicell beamforming in section \ref{sec:SysmodAlg}. In section \ref{sec:AlgAnal}, we provide the asymptotic analysis
of the {reduced overhead} algorithm formulated in the previous section. In Section \ref{sec:Pilot}, we investigate the impact of 
imperfect CSI and pilot contamination on the performance of the ROBF algorithm.
We summarize the simulation results in section \ref{sec:SimRes}. Finally, we provide concluding remarks in section \ref{sec:Conc}.
Appendix A provides some relevant results from RMT
which will be used in formulating our algorithm. Appendices B, C, D, E, F, G and H provide the proofs of some of the 
results stated in the paper.

\emph{Notations:} Throughout this work, we use boldface lowercase and
uppercase letters to designate column vectors and matrices,
respectively. For a matrix $\Xm$, $\Xm(p,q)$ denotes the $(p,q)$
entry of $\Xm$. $\Xm^T,$  $\Xm^H,$ $\trace(\Xm),$ $||\Xm||,$ and $\rho(\Xm)$ denote the transpose, the complex conjugate
transpose, the trace, the spectral norm and the spectral radius of the matrix $\Xm,$ respectively.
For two matrices $\Xm$ and $\Ym,$ the notation $\Xm \leq \Ym$ denotes
element wise inequality ($\Xm(p,q) \leq \Ym(p,q) \ \forall p,q$) and 
similarly for vectors.
We denote an identity matrix of
size $M$ as $\Id_M$ and diag$(x_1, . . . , x_M)$ is a diagonal matrix of
size M with the elements $x_i$ on its main diagonal. We use
$\xv \sim \mathcal{CN}(\mv,\Rm)$ to state that the vector $\xv$ has a complex
Gaussian distribution with mean $\mv$ and covariance matrix $\Rm$.
We use the notation $\xrightarrow{\text{a.s.}}$ to denote almost sure convergence.
Let $a_N$ and $b_N$ denote a pair of infinite sequences. We write $a_N \asymp b_N$, iff $a_N - b_N \xrightarrow{\text{a.s.}} 0.$ We denote the expectation of a random variable by the notation
$\mathbb{E} \LSB.\RSB$ Let $N_t,K \in \mathbb{N}^+,$ we use the notation $N_t,K \to \infty$ to denote the following condition on $N_t$ and $K,$
$0 < \liminf_{K \to \infty} \frac{N_t}{ K} \leq \limsup_{K \to \infty} \frac{N_t}{ K} < \infty.$ Finally, the notation $(x)^+$ is used 
to denote $\max(x,0).$

\section{System Model and Algorithm Description}
\label{sec:SysmodAlg}
\subsection{System Model}
We consider the problem of  multi-cell beamforming across $N$ cells and $K$ UTs per cell where each BS is equipped with $N_t$ antennas
and each UT has a single antenna. Each BS serves only the UTs in its cell.
Let $\hv_{i,j,k} \in \mathbb{C}^{N_t}$ denote the channel from the BS $i$ to the $k$-th UT in cell $j.$ 
We consider reciprocity between the uplink and downlink channels, and hence the TDD mode of operation, as it is the preferred mode of
operation in massive MIMO systems \cite{LarssonMarzettaCommag2014}.
We assume that the elements of the channel vector are independent and identically distributed (i.i.d.) with Gaussian distribution, i.e., $\hv_{i,j,k} \sim \mathcal{CN}(0,\sigma_{i,j,k}\Id_{N_t}),$ the variance $\sigma_{i,j,k}$ of the channel depends upon the path loss model  between BS $i$ and UT$(j,k).$
Recent works on channel measurements indicate that i.i.d. assumption is a reasonable model for massive MIMO arrays \cite{MaMIMORealProp2014}.
We assume that the BSs have perfect CSI of the downlink channels
to all the users in the system (i.e., $\hv_{i,n,k}, \forall n,k$).
Let $\wv_{i,j} \in \mathbb{C}^{N_t}$ denote the transmit downlink beamforming vector for the $j$-th UT in cell $i.$ Likewise, let $\Lambda^{\text{DL}}_{i,j} $ denote the received SINR for the
$j$th UT in cell $i$ and $\gamma_{i,j}$ the corresponding target SINR. The received signal $y_{i,j} \in \mathbb{C}$ for the $j$th UT in cell $i,$  is given by 
\begin{align*}
y_{i,j} =  \hv^H_{i,i,j} \wv_{i,j}x_{i,j}+\sum_{(n,k) \neq (i,j)} \hv^H_{n,i,j} \wv_{n,k}x_{n,k}+z_{i,j}
\end{align*}
where $x_{i,j}\in\CC$ represents the information signal for the $j$-th user in cell $i$ and
$z_{i,j} \sim \mathcal{CN}(0,N_0)$ is the corresponding additive white Gaussian complex noise.
Under this model, the achieved SINR in downlink for
the UT$_{i,j}$ is given by\footnote{Note that the downlink SINR expression in \eqref{eqn:jaja693} assumes that the channels are known perfectly at the UTs. However, the main focus of this work 
is in the massive MIMO regime, i.e., $N_t$ and $K$ being very large. Fortunately, as shown in \cite{HoydisBrinkDebbah2013}, under this regime, the UTs only need to have
the knowledge of the average effective channels.}
\begin{align}
\Lambda^{\text{DL}}_{i,j} = \frac{|\wv^H_{i,j}\hv_{i,i,j}|^2}{ \sum_{(n,k) \neq (i,j)}|\wv^H_{n,k}\hv_{n,i,j}|^2+N_0}. \label{eqn:jaja693}
\end{align}
The denominator terms of \eqref{eqn:jaja693} represent the
intra-cell interference, inter-cell interference and the noise (in order as
they appear).
The downlink sum power minimization problem can be formulated as the following optimization problem given by
\beqa
 &\dsp  \min_{\wv_{i,j} \ \forall i,j} & \sum_{i,j}  \wv^H_{i,j}\wv_{i,j} \label{eqn:OPT_basic}\\
& s.t. & \Lambda^{\text{DL}}_{i,j} \geq \gamma_{i,j} \qquad \forall i,j. \nonumber
 \eeqa
 
 \subsection{Algorithm Design}
 As show in  \cite{WeiselShamai2006},
 the optimization problem \eqref{eqn:OPT_basic} can be reformulated  as a  second order conic programming (SOCP) problem and, strong duality holds for this problem.
Following the approach of \cite{Dahrouj2010},
we solve this problem using duality theory.
Accordingly, we introduce the Lagrange multiplier $\frac{\lambda_{i,j}}{N_t}$
associated with the downlink SINR constraints. The Lagrangian
is given by
\begin{align}
L(\wv,\lambdav) & = \sum_{i,j}  \wv^H_{i,j} \wv_{i,j}  - \sum_{i,j} \frac{\lambda_{i,j}}{N_t} \Big{[}  \frac{|\wv^H_{i,j} \hv_{i,i,j}|^2}{\gamma_{i,j}} \nonumber \\ & -\sum_{(n,k) \neq (i,j)} |\wv_{n,k} \hv_{n,i,j}|^2 -N_0 \Big{]} \label{eqn:LagDL}.
\end{align}
Note that the Lagrange multiplier is scaled by the factor $N_t$
in order to ensure that the sum power in the system is finite,
when the dimensions of the system grow large (in terms 
of the number of antennas on the BS and number of UTs).
In fact, the Lagrange multipliers $\frac{\lambda_{i,j}}{N_t}$ can be interpreted as the dual uplink powers in the formulation of the dual uplink problem obtained in the following manner.

Rearranging \eqref{eqn:LagDL}, we obtain
\begin{align}
& L(\wv,\lambdav)  = \sum_{i,j}  \frac{\lambda_{i,j} N_0}{N_t}   + \sum_{i,j}  \wv^H_{i,j} \Bm_{i,j} \wv_{i,j} \label{eqn:LagUL},  
\end{align}
where the matrix $\Bm_{i,j}$ is given by
\begin{align}
& \Bm_{i,j}  
    = \Id \ -\LB 1+\frac{1}{\gamma_{i,j}}\RB \frac{\lambda_{i,j}}{N_t} \hv_{i,i,j} \hv^H_{i,i,j} \nonumber \\ & \qquad \qquad \qquad + \sum_{n,k}  \frac{\lambda_{n,k}}{N_t} \hv_{i,n,k} \hv^H_{i,n,k}  \nonumber \\
  & \ =  \Id \ - \frac{\lambda_{i,j}}{\gamma_{i,j} N_t} \hv_{i,i,j} \hv^H_{i,i,j}+ \sum_{(n,k) \neq (i,j)}  \frac{\lambda_{n,k}}{N_t} \hv_{i,n,k} \hv^H_{i,n,k} \label{eqn:refhere0011}.
\end{align}
The dual uplink problem  corresponding to the optimization in \eqref{eqn:OPT_basic}
is formulated as
\beqa
 &\dsp \min_{\lambda_{i,j}, \ \forall i,j} & \sum_{i,j} \frac{\lambda_{i,j}}{N_t}N_0 \label{eqn:OPT_basic2}\\
& s.t. & \Lambda^{\text{UL}}_{i,j} \geq \gamma_{i,j}, \qquad \forall i,j \nonumber
 \eeqa
 where the left hand side of the constraint equation represents the uplink SINR given by
\begin{align*}
\Lambda^{\text{UL}}_{i,j} = \frac{\frac{\lambda_{i,j}}{N_t}|\hat{\wv}^H_{i,j}\hv_{i,i,j}|^2}{\sum_{(n,k) \neq (i,j)}\frac{\lambda_{n,k}}{N_t}|\hat{\wv}^H_{i,j}\hv_{i,n,k}|^2+ ||{\hat{\wv}}_{i,j}||_2^2}
\end{align*}
where $\hat{\wv}_{i,j}$ denotes the corresponding uplink receive filter. 

We now provide a brief description of the beamforming algorithm presented in \cite{Dahrouj2010}.
Before introducing the algorithm,
we define the following matrices. 
\begin{align*}
& \Hm_{i,n}  = [\hv_{i,n,1}, \dots ,\hv_{i,n,K}] \in \CC^{N_t \times K} \\
& \Hm_i = \LSB \Hm_{i,1},\dots,\Hm_{i,N}\RSB \in \CC^{N_t \times NK} \\
& \lambdav_i = [\lambda_{i,1}, \dots, \lambda_{i,K}] \in \CC^{K \times 1}, \\
&\Lambdam = \diag \LSB \lambdav_1,\dots,\lambdav_N\RSB \in \CC^{NK \times NK}.
\end{align*} 
We also define the matrix $\Sigmam^\lambda_i = \frac{1}{N_t}\Hm_i\Lambdam \Hm^H_i \in \CC^{N_t \times N_t}.$ 
\vspace{0.05in} \hrule
\vspace{0.01in}\hrule\vspace{0.05in}
\begin{algorithm}[Centralized Algorithm - {\bf CBF}]
\label{alg:centralized}
Perform the following steps.
\begin{itemize}
\item Starting from any initial $\lambda^0_{i,j} > 0 \ \forall i,j$ the uplink power allocation is given by 
${\lambda}_{i,j} \defines \lim_{t \to \infty} {\lambda}^t_{i,j},$ where 
\begin{align} \label{eqn:lam_org}
\lambda^{t+1}_{i,j} = \frac{1}{\frac{1}{N_t}(1+\frac{1}{\gamma_{i,j}})\hv^H_{i,i,j}(\Sigmam^{\lambda^t}_i+\Id_{N_t})^{-1}\hv_{i,i,j}} \ \forall i,j 
\end{align}
where $\Sigmam^{\lambda^t}_i = \frac{1}{N_t}\Hm_i\Lambdam^t \Hm^H_i$ and $\Lambdam^t = \diag \LSB \lambdav^t_1,\dots,\lambdav^t_N\RSB.$
\item The optimal receive uplink receive filter is given by
\begin{align} \label{eqn:ULBF}
\hat{\wv}_{i,j} = \frac{1}{\sqrt{N_t}}\Big( \sum_{n,k} \frac{\lambda_{n,k} N_0}{N_t} \hv_{i,n,k} \hv^H_{i,n,k}+  N_0 \Id \Big)^{-1} \hv_{i,i,j}.
\end{align}
\item The optimal transmit downlink beamforming vectors are given by
$\wv_{i,j} = \sqrt{\frac{\delta_{i,j}}{N_t}}\hat{\wv}_{i,j},$ 
where $\delta_{i,j}$ is given as
\begin{align*}
\deltav = \Fm^{-1} {\bf 1} N_0.
\end{align*}
Here, 
\begin{align*}
& \deltav_i = [\deltav_{i,1}, \dots, \deltav_{i,K}] \in \CC^{K \times 1} 
\\ & \deltav =  \LSB \deltav_1,\dots,\deltav_N\RSB \in \CC^{NK \times 1} \\
& {\bf 1} \in [1,\dots,1]^T \in \RR^{NK \times 1}
\end{align*}
and the elements of the matrix $\Fm \in \CC^{NK \times NK}$ and 
the submatrix $\Fm^{i,j} \in \CC^{K \times K}$ are given by,
\begin{equation}
\Fm = \left(
\begin{array}{ccc}
\Fm^{1,1} &  \ldots & \Fm^{1,N} \\
\vdots &   \ddots & \vdots \\
\Fm^{N,1} &  \ldots & \Fm^{N,N}\\
\end{array} \right)
\end{equation}
\begin{equation} \label{eqn:FMtx}
\Fm^{i,n}_{j,k} \defines \begin{cases} \frac{1}{\gamma_{i,j}N_t} |\hat{\wv}^H_{i,j}\hv_{i,i,j}|^2 , & n=i ,\ k = j \\
                 \frac{-1}{N_t}|\hat{\wv}^H_{n,k}\hv_{n,i,j}|^2  ,& ( n,k) \neq (i,j). 
                   \end{cases}
\end{equation}
\end{itemize}
\end{algorithm}
\hrule 
\vspace{0.01in} \hrule \vspace{0.05in}
We remark that the scaling of $\sqrt{N_t}$ in 
the expression for the uplink receive filter \eqref{eqn:ULBF},
and in the definition of the scaling factor $\delta_{i,j}$ ensure that the total power in the system is finite, when the dimensions of the system grow large.

As mentioned before, the solution provided in \cite{Dahrouj2010} cannot be implemented in a distributed manner.
The computation of dual uplink power ($\lambda_{i,j}$) and the scaling factors ($\delta_{i,j}$) requires a central station which has the global CSI knowledge.
In what follows, we overcome this problem.

We now formulate our \emph{reduced overhead} beamforming algorithm.
The main idea behind this algorithm is that under the massive MIMO regime (i.e. when $N_t$ and $K$
become large), the parameters in \eqref{eqn:lam_org} and \eqref{eqn:FMtx} can be
approximated by their asymptotic equivalents using results from RMT. Moreover, the computation of these parameters
will only depend on the second order statistics (path-loss), and not on the fast fading component
of the channel vectors. 
However, note that RMT results are not directly applicable to this scenario. This is due to
the fact that the computation of $\lambda_{i,j}$ in \eqref{eqn:lam_org} explicitly depend 
on the channel vectors (RMT results require that the matrix $\Lambdam$ in \eqref{eqn:lam_org}
are independent of the channel matrices). This imposes additional technical difficulties 
in the application of RMT results in our scenario. Our approach in this work is to first propose an algorithm that depends only on the second order statistics of the channel vectors (the path-loss in our case). Mathematically/theoretically, it is not ensured that it achieves the optimal solution.
However, we apply such an algorithm to the original system set-up, and prove that this algorithm is optimal in the large system domain.

We hereby represent the dual uplink power, the uplink and downlink beamforming vectors
of the decentralized algorithm by the notation $\mu_{i,j},\hat{\vv}_{i,j}$ and $\vv_{i,j},$ respectively, which are 
the counterparts of $\lambda_{i,j},\hat{\wv}_{i,j}$ and $\wv_{i,j}$ of the CBF algorithm.
\vspace{0.05in} \hrule
\vspace{0.01in}\hrule\vspace{0.05in}
\begin{algorithm}[Reduced Overhead Beamforming algorithm - {\bf ROBF}] Perform the following steps.
\label{alg:decentralized}
\begin{itemize}
\item Starting from any initial $\mu^0_{i,j} > 0 \ \forall i,j$ the uplink power allocation is given by ${\mu}_{i,j} \defines \lim_{t \to \infty} \mu^t_{i,j},$ where 
\begin{align}
\mu^{t+1}_{i,j} =   \frac{\gamma_{i,j}}{\sigma_{i,i,j}\bar{m}^t_{i}} \qquad \forall i,j \label{eqn:lam_stil}
\end{align}
and $\bar{m}^t_i$ is evaluated as  $\bar{m}^{t}_i \defines \lim_{p \to \infty} \bar{m}^{t,p}_i$  (initializing with any $\bar{m}^{t,0}_i > 0, \forall i$)
\begin{align}
& \bar{m}^{t,p}_i  = \LB \frac{1}{N_t}\sum^N_{n = 1} \sum^K_{k = 1} \frac{\sigma_{i,n,k} \mu^t_{n,k}}{1+\sigma_{i,n,k}\mu^{t}_{n,k} \bar{m}^{t,{p-1}}_i}+1 \RB^{-1}. \label{eqn:FPeqn}
\end{align}
\item The optimal receive uplink receive filter is given by
\begin{align} \label{eqn:ULBF_asymp}
\hat{\vv}_{i,j} = \sqrt{\frac{1}{N_t}}\Big( \sum_{n,k} \frac{\mu_{n,k}N_0}{N_t} \hv_{i,n,k} \hv^H_{i,n,k}+ N_0 \Id\Big)^{-1} \hv_{i,i,j}.
\end{align}
\item The optimal transmit downlink beamforming vectors are given by
$\vv_{i,j} = \sqrt{\frac{\bar{\delta}_{i,j}}{N_t}}\hat{\vv}_{i,j}.$
The scaling factor $\bar{\delta}_{i,j}$ is given as
\begin{align}
\bar{\deltav} = (\Id-\Gammam \Deltam)^{-1} \rhov, \label{eqn:lin_eq}
\end{align}
where
\begin{align*}
& \bar{\deltav}_i = [\bar{\delta}_{i,1}, \bar{\delta}_{i,2},\dots, \bar{\delta}_{i,K}]^T \in  \RR^{K \times 1}\\
& \bar{\deltav} = [\bar{\deltav}_1 , \bar{\deltav}_2,\dots, \bar{\deltav}_N]^T \in \RR^{NK \times 1} \\
 & \gammav_i = \LSB  \frac{\gamma_{i,1}}{\sigma_{i,i,1}\bar{G}_{i,i,1}\bar{m}^2_{i}},\dots,\frac{\gamma_{i,K}}{\sigma_{i,i,K}\bar{G}_{i,i,K}\bar{m}^2_{i}}\RSB^T \\
& \gammav = [\gammav_1,\dots,\gammav_N]^T \\ & \Gammam = \text{diag}(\gammav)
\end{align*}
and the matrix $\Deltam \in \CC^{NK \times NK} $ is defined as
\begin{equation}
\Deltam = \left(
\begin{array}{ccc}
\Deltam^{1,1} & \ldots & \Deltam^{1,N} \\
\vdots  & \ddots & \vdots \\
\Deltam^{N,1} & \ldots & \Deltam^{N,N}\\
\end{array} \right)
\end{equation}
where each submatrix $\Deltam^{i,j} \in \CC^{K \times K}$ is given by 
\begin{equation} 
\Deltam^{i,n}_{j,k} \defines \begin{cases} 0 , & n=i ,\ k = j \\
                     \frac{1}{N_t}\bar{G}_{n,i,j} \bar{G}_{n,n,k}  \bar{m}^{\prime}_{n}  ,& (n,k) \neq (i,j).
                   \end{cases} \label{eqn:coeff_mtx}
\end{equation}
$\bar{m}^{\prime}_{i}$ can be evaluated from $\bar{m}_i$ as
\begin{align}
\bar{m}^\prime_i = \frac{\bar{m}^2_i}{1-\frac{1}{N_t} \sum^N_{n = 1} \sum^K_{k=1} \frac{(\sigma_{i,n,k} \mu_{n,k}\bar{m}_i)^2}{(1+\sigma_{i,n,k} \mu_{n,k}\bar{m}_i)^2}}
\end{align}
and the term
\begin{align}
\bar{G}_{i,n,k} & = \frac{\sigma_{i,n,k}}{\LB 1+\mu_{n,k}\sigma_{i,n,k}\bar{m}_{i}\RB^2} \label{eqn:haha14}.
\end{align}
The vector $\rhov_i = \LSB \frac{N_0}{\sigma_{i,i,1}\bar{G}_{i,i,1}\bar{m}^2_{i}}, \dots,\frac{N_0}{\sigma_{i,i,K}\bar{G}_{i,i,K}\bar{m}^2_{i}}\RSB^T$
and $\rhov = [\rhov _1 , \rhov _2,\dots, \rhov_N]^T \in \RR^{NK \times 1}.$
\end{itemize}
\end{algorithm}
\hrule 
\vspace{0.01in} \hrule \vspace{0.1in}

We now provide some remarks on this algorithm.
\begin{lemma}
The iterative algorithm \eqref{eqn:lam_stil} converges to a fixed
point. 
\end{lemma}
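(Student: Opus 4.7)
The plan is to recognize the outer iteration \eqref{eqn:lam_stil} as an instance of Yates's framework of \emph{standard interference functions}, whose iterates converge from any positive initialization whenever a fixed point exists.

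First, I would establish that $\bar m_i(\muv^t)$ is well-defined. For each positive $\muv^t$, the inner recursion \eqref{eqn:FPeqn} is the standard Stieltjes-transform fixed-point equation of Silverstein--Bai type recalled in Appendix~A; classical random matrix theory gives existence and uniqueness of a positive solution, convergence of \eqref{eqn:FPeqn} from any positive $\bar m^{t,0}_i$, and smooth strictly positive dependence of the limit on $\muv^t$.

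Second, defining $T_{i,j}(\muv) := \gamma_{i,j}/(\sigma_{i,i,j}\bar m_i(\muv))$ so that \eqref{eqn:lam_stil} reads $\muv^{t+1}=T(\muv^t)$, I would check the three Yates axioms. \emph{Positivity} is immediate from Step~1. For \emph{monotonicity}, in the defining equation $1/\bar m = 1+\frac{1}{N_t}\sum_{n,k}\frac{\sigma_{i,n,k}\mu_{n,k}}{1+\sigma_{i,n,k}\mu_{n,k}\bar m}$ the right-hand side is pointwise increasing in $\muv$ at fixed $\bar m$, so increasing $\muv$ forces the unique positive root $\bar m_i$ to decrease, whence $T(\muv)\le T(\muv')$ when $\muv\le\muv'$. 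For \emph{scalability}, fix $\alpha>1$ and let $m^\star=\bar m_i(\muv)$. A direct substitution of $\tilde m = m^\star/\alpha$ into the equation defining $\bar m_i(\alpha\muv)$, together with the identity $1/m^\star = 1+\frac{1}{N_t}\sum_{n,k}\frac{\sigma_{i,n,k}\mu_{n,k}}{1+\sigma_{i,n,k}\mu_{n,k}m^\star}$, yields a residual equal to $\alpha-1 > 0$. Since the derivative of this residual with respect to $\bar m$ equals $-\bar m^{-2}\LB 1 - \frac{1}{N_t}\sum_{n,k}\frac{(\sigma_{i,n,k}\alpha\mu_{n,k}\bar m)^2}{(1+\sigma_{i,n,k}\alpha\mu_{n,k}\bar m)^2}\RB$, which is strictly negative by the same positivity that makes $\bar m'_i$ well-defined in the algorithm, the residual is strictly decreasing in $\bar m$, so the unique zero $\bar m_i(\alpha\muv)$ must strictly exceed $\tilde m = m^\star/\alpha$. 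Rearranging gives $T(\alpha\muv) < \alpha\,T(\muv)$.

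Having verified the three axioms, Yates's theorem yields convergence of $\muv^{t+1}=T(\muv^t)$ to the unique positive fixed point whenever one exists; existence follows from the feasibility conditions on the targets $\gamma_{i,j}$ that the paper addresses separately via the closed-form lower bound on admissible SINR. The main obstacle lies in the scalability check: because $\bar m_i(\muv)$ is only implicitly defined through a transcendental equation, the strict inequality must be extracted indirectly by signing a residual of that equation and combining it with monotonicity of the equation in $\bar m$. The latter step rests on the positivity of $1-\frac{1}{N_t}\sum(\sigma_{i,n,k}\mu_{n,k}\bar m)^2/(1+\sigma_{i,n,k}\mu_{n,k}\bar m)^2$, a nontrivial Stieltjes-transform property that is already implicit in the algorithm's definition of $\bar m'_i$ and must be invoked explicitly here.
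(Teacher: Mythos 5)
Your proposal is correct in outline and shares the paper's skeleton: both cast \eqref{eqn:lam_stil} as a Yates standard interference function and verify positivity, monotonicity and scalability, with convergence then following whenever a fixed point exists (existence being tied to the feasibility conditions treated separately). Where you genuinely diverge is in \emph{how} the monotonicity and scalability of $\muv\mapsto\gamma_{i,j}/(\sigma_{i,i,j}\bar m_i(\muv))$ are established. The paper first reduces scalability to showing that $\bar m_i(z)$ is decreasing in $z$ via the identity $\bar m_i(1;\beta\muv)=\tfrac1\beta\bar m_i(1/\beta;\muv)$, and then proves that monotonicity by constructing $L$-fold extended random channel matrices $\Hm_i^L$, obtaining the strict inequality for the random resolvent traces $m_i^L(z)$ from the matrix identity $\Am^{-1}-\Bm^{-1}=-\Am^{-1}(\Am-\Bm)\Bm^{-1}$, and transferring it to the deterministic equivalents as $L\to\infty$; monotonicity in $\muv$ is handled the same way. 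You instead work entirely on the deterministic fixed-point equation: signing a residual (which you correctly compute to be $\alpha-1$) and exploiting the sign of $1-\tfrac1{N_t}\sum(\sigma_{i,n,k}\mu_{n,k}\bar m)^2/(1+\sigma_{i,n,k}\mu_{n,k}\bar m)^2$. Your route is more elementary and self-contained (no auxiliary probabilistic construction), while the paper's sidesteps exactly the delicate sign analysis you flag as the main obstacle. Two small repairs would make your version airtight: (i) for monotonicity, in the form $1/\bar m=1+\tfrac1{N_t}\sum\frac{\sigma\mu}{1+\sigma\mu\bar m}$ \emph{both} sides decrease in $\bar m$, so the crossing argument is cleaner after multiplying through by $\bar m$ to get $1=\bar m+\tfrac1{N_t}\sum\frac{\sigma\mu\bar m}{1+\sigma\mu\bar m}$, whose right side is strictly increasing in both $\bar m$ and $\muv$; (ii) the residual's derivative need not be negative for \emph{all} $\bar m>0$ when $NK>N_t$, but the desired conclusion $\bar m_i(\alpha\muv)>m^\star/\alpha$ still follows from uniqueness of the positive root together with the boundary behavior ($R\to+\infty$ as $\bar m\to0^+$ and $R\to-1$ as $\bar m\to\infty$), so the root separates the positive and negative regions of $R$.
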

\begin{proof}
The proof is provided in Appendix B.
\end{proof}
In section III , we characterize the solution provided by this fixed point, and show that it is asymptotically optimal in the sense that when the dimensions of the system grow large, the achieved uplink and downlink SINR by this algorithm exactly match the target SINR, and the allocated uplink and downlink powers
match the optimal solution.

In the rest of the paper, we address the decentralized beamforming algorithm
by the acronym ROBF. We now discuss the practical advantages of the ROBF algorithm over the CBF algorithm. 
\subsection{ROBF Algorithm - Signaling Overhead and Complexity Reduction} 
We now provide a brief discussion on the signaling overhead and complexity reduction associated with the ROBF algorithm.
{\bf Signaling Overhead:} \\
Recall that in the ROBF algorithm only the statistical information (path loss) must be exchanged between the BSs where as in the CBF algorithm, 
the fast fading co-efficients need to be exchanged. Let us
denote the channel coherence time by $\tau_{\text{coh}}$ units, and the long term time constant by $\tau_{\text{LT}}$ units (over which the statistical properties of the channel change).
In has been demonstrated in some prior works related to channel measurements, 
that in an urban setting the statistical information of the channel can be viewed as constant roughly for over $100$ channel coherence intervals \cite{SpatialLongTerm2002}. 

We now characterize the signaling overhead for the two algorithms.  
For the CBF algorithm, exchanging full CSI would imply $N N_t K $ complex channel coefficients of $2 N N_t K $ real numbers every $\tau_{\text{coh}}$ units of time. This would amount to exchanging 
$N N_t K $ complex channel coefficients or $2 N N_t K $ real co-efficients. 
Therefore, the rate of information exchange for CBF algorithm would be
$$R_{\textbf{CBF}} = \frac{2 N N_t K }{\tau_{\text{coh}}} \ \text{real coefficients/sec} .$$ 
For the ROBF algorithm, $N K$ real numbers (path loss information assuming i.i.d. channel model) must be exchanged 
every $\tau_{\text{LT}}$ units of time, and the resulting rate of information exchange is given by $$R_{\textbf{ROBF}} = \frac{N K }{\tau_{\text{LT}}} \ \text{real coefficients/sec}.$$
In order to get real feel of these numbers, assume $N = 3, N_t = 100, K = 40, \tau_{\text{LT}} = 22.6 s, \tau_{\text{coh}} = 180 ms$ \cite{SpatialLongTerm2002}. Therefore, 
\begin{align*}
R_{\textbf{CBF}} & = 1.3 \times 10^5 \ \text{real coefficients/sec}; \\ 
R_{\textbf{ROBF}} & = 5.3  \ \text{real coefficients/sec},
\end{align*}
Therefore, the ratio of two quantities can be given by
\begin{align*}
\frac{R_{\textbf{CBF}}}{R_{\textbf{ROBF}}} =2 N_t \LB \frac{ \tau_{\text{LT}}}{\tau_{\text{coh}}} \RB \approx 10^4.
\end{align*}
However, for correlated channel model, $R_{\textbf{ROBF}} = \frac{N N_t K }{\tau_{\text{LT}}} \ \text{real coefficients/sec}.$ Therefore
\begin{align*}
\frac{R_{\textbf{CBF}}}{R_{\textbf{ROBF}}} =2  \LB \frac{ \tau_{\text{LT}}}{\tau_{\text{coh}}} \RB \approx 100.
\end{align*}
Finally, note that the exact number of bits to be exchanged depends on how these channel co-efficients are quantized (which is beyond the scope of this paper).

{\bf Implementation Complexity:} \\
An exact characterization of the complexity associated with the ROBF algorithm is beyond the scope of this work. Nevertheless, we provide 
a brief analysis of the same. \\
{\bf Computing the uplink power allocation} \\
 Recall that implementing the iterations for the computation of the uplink power in the CBF algorithm requires matrix inversion operations, where as the ROBF algorithm only requires performing scalar operations.
This tremendously reduces the computational complexity with respect to the ROBF algorithm (the complexity of inverting a matrix is provided next).
Moreover, in the CBF algorithm the uplink power must be evaluated for every channel realization (fast fading CSI), where as the ROBF algorithm
requires parameters to be computed only once (at the time scale of changing of slow fading CSI). Therefore, there is a huge reduction 
in the computational complexity.

{\bf Formulation of the downlink beamforming vectors} \\
Note that the downlink beamforming vector in the ROBF algorithm (and also the CBF algorithm) is in the form of a regularized zero forcing (RZF) beamforming, which requires computation of a matrix inverse 
of dimension $N_t \times NK.$
Therefore, its computational complexity scales as $N_t(NK)^2.$ This can be computationally demanding especially in a massive MIMO setting. Fortunately, alternate  
schemes are being developed to implement RZF based on
truncated polynomial expansion incur much less computational complexity as compared to matrix inversion, which are shown to have performance very close to RZF beamforming \cite{KammounMuller2013}.

The price to pay for the reduction in information exchange between the BSs is that in the ROBF algorithm, the target SINR values are not met perfectly 
for every channel realization. In fact, the achieved SINR in the downlink fluctuates around the target SINR. However, we show through simulations that even for practical values of the system dimensions, the fluctuations
of the achieved SINR around the target SINR value are small.
In order to make sure that the target SINR requirements are satisfied, one could solve the optimization
problem with ROBF algorithm by considering a higher value target SINR (than the actual desired one) in order to
compensate for the fluctuations. 

Next, we show that the performance of ROBF algorithm perfectly matches the CBF algorithm when
the number of antennas per BS and the number of UTs become large, i.e., in the regime of massive MIMO systems.
We also provide simulation results to examine the performance of the ROBF algorithm.

\section{Algorithm Analysis}
\label{sec:AlgAnal}
In this section, we provide extensive analysis
of the ROBF algorithm. 
In the rest of the paper, we
use the phrase ``large system" to refer to 
the regime when the number of antennas per BS and 
the number of UTs per-cell become large, i.e.,
$N_t, K \to \infty$ while their ratio
$\frac{N_t}{K}$ tends to a finite constant 
$0 <\beta < \infty,$ as considered in 
some past works in this field \cite{HoydisBrinkDebbah2013}.
However, we mention that our results provide tight approximations for practical system dimensions of massive MIMO systems.
Specifically, we focus on the 
following aspects:
\begin{itemize}
\item We first characterize a lower bound on
the feasible SINR targets for the ROBF algorithm.
\item We prove that the uplink and the downlink SINR achieved 
by the ROBF algorithm asymptotically converge to their target
value in the large system regime.
\item We prove that the uplink and the downlink power allocations yielded by the ROBF algorithm asymptotically converge to the respective values of the CBF algorithm, hence making the ROBF algorithm optimal in the large system regime.
\end{itemize}
We first start with the characterization of feasible SINR 
targets for the ROBF algorithm.
\subsection{Feasible SINR targets for the ROBF algorithm}
Throughout this subsection, we use the notations
$\gammav$ to denote the vector of target SINRs as follows:
\begin{align*}
\gammav_i = [\gamma_{i,1}, \dots,\gamma_{i,K}]; \ \
\gammav = [\gammav_{1}, \dots,\gammav_{N}].
\end{align*}
Similarly, we define the vectors
\begin{align*}
\lambdav_i = [\lambda_{i,1}, \dots,\lambda_{i,K}]; \ \
\lambdav = [\lambdav_{1}, \dots,\lambdav_{N}], \\
\muv_i = [\mu_{i,1}, \dots,\mu_{i,K}]; \ \
\muv = [\muv_{1}, \dots,\muv_{N}]. 
\end{align*}
We now define the notion of feasible SINR target for the
ROBF algorithm. Feasible SINR target implies that the following three conditions must be satisfied :
\begin{itemize}
\item[{\bf [C1]}] The iterations of the 
fixed point equation \eqref{eqn:lam_stil} must converge
to a finite value $\muv$.
This implies that given a target SINR 
vector $\gammav,$ there exists
$\muv < \infty$ that satisfies
the fixed point equation
\begin{align}
\gamma_{i,j} = \sigma_{i,i,j} \mu_{i,j} \bar{m}_i \qquad \forall i,j \label{eqn:FPconv}.
\end{align}

Further from the property of the fixed point equation,
for a given $\gammav$ there exists a unique
$\muv$ satisfying \eqref{eqn:FPconv} (if the target SINR
vector is feasible). Let us define
such a pair of vector by $\{ \gammav,\muv\}.$ 
\item[{\bf [C2]}] 
For every pair of vectors $\{ \gammav,\muv\}$ that satisfy \eqref{eqn:FPconv},
the matrix $\Id-\Gammam \Delta$ must be invertible.
\item[{\bf [C3]}] The elements of the vector
$(\Id-\Gammam \Delta)^{-1} \rhov$ must be positive
(in order for $\sqrt{\frac{\delta_{i,j}}{N_t}}$ to be real).
\end{itemize}
We focus on the conditions {\bf [C1]} and {\bf [C2]}
and defer {\bf [C3]} to the end of the section.
We first state the main result of this subsection and 
later on provide the proof.
\begin{theorem}
Every target SINR vector $\gammav$ 
whose elements satisfy the condition 
\begin{align}
 \frac{1}{N_t}  \sum^K_{k = 1} \frac{\gamma_{i,k}}{1+\gamma_{i,k}}
 +\frac{1}{N_t} \sum^N_{\substack{
   n =1 \\
   n \neq i
  }} \sum^K_{k = 1} \frac{\frac{\sigma_{max}(n)}{\sigma_{n,n,k}} \gamma_{n,k}}{1+\frac{\sigma_{max}(n)}{\sigma_{n,n,k}} \gamma_{n,k}}  < 1. \label{eqn:feascondROBF}
\end{align}
are feasible for the ROBF algorithm, where $\sigma_{max}(n) = $.
\end{theorem}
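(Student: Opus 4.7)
The plan is to establish \textbf{[C1]} and \textbf{[C2]} separately, defer \textbf{[C3]} as the text does, and read $\sigma_{max}(n)$ as $\max_{i\neq n,\,k}\sigma_{i,n,k}$ (the worst cross-channel gain out of cell $n$). The key device is to reparametrize the outer iteration via the natural relation $\sigma_{i,i,j}\mu_{i,j}\bar m_i=\gamma_{i,j}$, which reduces the problem from $NK$ variables $\mu_{i,j}$ to $N$ scalar variables $\bar m_i$.

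For \textbf{[C1]}, I would first recognise that \eqref{eqn:lam_stil} has the form $\muv^{t+1}=I(\muv^t)$ with $I_{i,j}(\muv):=\gamma_{i,j}/(\sigma_{i,i,j}\bar m_i(\muv))$ and $\bar m_i(\muv)$ the unique positive solution of \eqref{eqn:FPeqn}. Differentiating \eqref{eqn:FPeqn} implicitly shows that $\bar m_i(\muv)$ is strictly positive and componentwise strictly decreasing in $\muv$, from which positivity, monotonicity, and sub-homogeneity of $I$ follow routinely, so $I$ is a standard interference function in Yates' sense. Yates' theorem then reduces convergence to existence of a finite $\muv^\star$ with $I(\muv^\star)\le\muv^\star$. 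To exhibit such a point under \eqref{eqn:feascondROBF}, substitute $\mu_{n,k}=\gamma_{n,k}/(\sigma_{n,n,k}\bar m_n)$ into \eqref{eqn:FPeqn} and multiply by $\bar m_i$; writing $\tilde\gamma_{i,n,k}:=(\sigma_{i,n,k}/\sigma_{n,n,k})\gamma_{n,k}$ this produces the scalar system
\begin{align*}
1 = \bar m_i + \frac{1}{N_t}\sum_k\frac{\gamma_{i,k}}{1+\gamma_{i,k}} + \frac{1}{N_t}\sum_{n\neq i}\sum_k\frac{\tilde\gamma_{i,n,k}\,\bar m_i/\bar m_n}{1+\tilde\gamma_{i,n,k}\,\bar m_i/\bar m_n}, \quad i=1,\dots,N.
\end{align*}
Continuity of the induced map $\bar{\mv}\mapsto\phi(\bar{\mv})$ on the cube $(0,1]^N$ together with Brouwer's theorem gives a fixed point; the feasibility hypothesis enters to show it has strictly positive components. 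Evaluating the system at $i^\star\in\arg\min_i \bar m_i$, every ratio $\bar m_{i^\star}/\bar m_n\le 1$, so monotonicity of $x\mapsto x/(1+x)$ and $\sigma_{i^\star,n,k}\le\sigma_{max}(n)$ bound the interference sum by the left-hand side of \eqref{eqn:feascondROBF}, yielding $\bar m_{i^\star}\ge 1-\text{LHS}>0$. Hence $\muv^\star$ is finite and the iteration converges.

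For \textbf{[C2]}, since $\Gammam\Deltam\ge 0$ entrywise, Perron--Frobenius reduces invertibility of $\Id-\Gammam\Deltam$ to a strict row-sum (or weighted row-sum) bound $<1$ on $\Gammam\Deltam$. Substituting the definitions of $\bar G_{n,n,k}$ and $\bar m'_n$ into \eqref{eqn:coeff_mtx}, and using the identity $\sigma_{i,i,j}\mu_{i,j}\bar m_i=\gamma_{i,j}$ together with the scalar identity derived above, the $(i,j)$-th row sum simplifies after cancellation of the $\bar m^2$ and $\bar m'$ factors to a sum of interference-type terms governed by the same quantities $\tilde\gamma_{n,i,j}$ appearing in \textbf{[C1]}. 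Applying the same majorisation by $\sigma_{max}(n)/\sigma_{n,n,k}$ then bounds each row sum by \eqref{eqn:feascondROBF}, forcing $\rho(\Gammam\Deltam)<1$.

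The principal obstacle is the coupled, nested nature of the inner and outer fixed points: one cannot control the $\mu_{i,j}$ directly from the outer iteration, but only through the intermediate scalars $\bar m_i$ which are themselves implicitly defined by \eqref{eqn:FPeqn}. The argument above sidesteps this by localising on the minimising cell $i^\star$, which turns a matrix-level question into a scalar inequality matching \eqref{eqn:feascondROBF} exactly. A secondary hurdle is the algebraic simplification in \textbf{[C2]}, where the quadratic-correction factor $\bar m'_n$ must cancel precisely against the $\bar G$ factors before the majorisation by $\sigma_{max}(n)$ can be applied.
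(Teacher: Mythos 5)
Your route is genuinely different from the paper's, and the difference is exactly where the gaps sit. For \textbf{[C1]} the paper never attempts a direct fixed-point existence argument on the original system. It introduces a \emph{modified system} in which every inter-cell gain $\sigma_{i,n,k}$ ($n\neq i$) is replaced by $\sigma_{\max}(n)$; there the cross terms $\sigma_{\max}(n)\mu_{n,k}\bar m_i$ collapse to $\frac{\sigma_{\max}(n)}{\sigma_{n,n,k}}\gamma_{n,k}$, the system reduces to the scalar relation $\bar m_i = 1-\mathrm{LHS}$ of \eqref{eqn:feascondROBF}, and positivity of $\bar m_i$ is read off immediately. Feasibility is then transferred back to the original system by Lemma \ref{lem:ModOrgFeas}: since $\sigma_{\max}(n)\geq\sigma_{i,n,k}$, the monotonicity of the standard function shows the modified system's power allocation over-achieves the targets in the original system. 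Your substitute — Brouwer on the reduced system in $(\bar m_1,\dots,\bar m_N)$ — has a real hole: $(0,1]^N$ is not compact, and no invariant compact convex set is exhibited. The self-map property fails on any cube $[\epsilon,1]^N$ because for $i$ \emph{not} at the argmin the ratios $\bar m_i/\bar m_n$ can be as large as $1/\epsilon$, so the interference sum $B_i$ is not majorised by the left-hand side of \eqref{eqn:feascondROBF} and $\phi_i$ can fall below $\epsilon$. Your argmin localisation therefore only gives an a posteriori lower bound on a fixed point whose existence is assumed; it does not produce one. You need either the paper's modified-system-plus-comparison device or an explicitly constructed invariant set.

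For \textbf{[C2]} the paper does not bound row sums of $\Gammam\Deltam$ at all, and I doubt your version can be pushed through: the unweighted row sums of $\Gammam\Deltam$ are downlink interference-to-signal ratios and are not implied to be below one by \eqref{eqn:feascondROBF}; the ``weighted'' alternative requires exhibiting the weights, which is the whole difficulty. The paper's Lemma \ref{lem:updownfeas} gets \textbf{[C2]} for free from \textbf{[C1]}: the fixed-point relation $\gamma_{i,j}=\sigma_{i,i,j}\mu_{i,j}\bar m_i$ is rewritten as the uplink linear system $\muv=\Gammam(\Deltam^\prime)^T\muv+\kappa$ with $\kappa>0$ and $\Deltam^\prime\geq\Deltam\geq 0$ entrywise; the existence of the positive solution $\muv$ forces $\rho(\Gammam(\Deltam^\prime)^T)<1$, and Lemma \ref{lem:Rombaldi} together with the identity of the spectra of $\Gammam\Deltam$ and $\Gammam\Deltam^T$ yields invertibility of $\Id-\Gammam\Deltam$. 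In effect the Perron weights you are missing are supplied by $\muv$ itself. Your instinct to reduce everything to the scalars $\bar m_i$ via $\sigma_{i,i,j}\mu_{i,j}\bar m_i=\gamma_{i,j}$ is exactly the paper's, but both halves of the argument need the uplink/standard-function machinery rather than Brouwer and row sums.
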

We remark that in our result, the condition \eqref{eqn:feascondROBF} is only a sufficient condition.
The proof of this theorem involves several steps which will
be illustrated in the rest of this subsection.
We proceed as follows.

First, we establish a relationship 
between the two conditions {\bf [C1]} and {\bf [C2]}.
\begin{lemma}
\label{lem:updownfeas}
For every pair of vectors $\{ \gammav,\muv\}$ that satisfy \eqref{eqn:FPconv}, the matrix $\Id-\Gammam \Deltam$ is invertible.
\end{lemma}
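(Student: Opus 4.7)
\textbf{Proof plan for Lemma \ref{lem:updownfeas}.} The plan is to establish the stronger claim that the spectral radius $\rho(\Gammam\Deltam) < 1$, from which invertibility of $\Id - \Gammam\Deltam$ follows by standard Perron--Frobenius/M-matrix theory (in fact this also gives $(\Id - \Gammam\Deltam)^{-1} \ge 0$, which will be useful later for condition \textbf{[C3]}).

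First I would verify that $\Gammam\Deltam$ is entrywise non-negative. This amounts to checking that $\gamma_{i,j} > 0$, $\bar{G}_{n,i,j} > 0$, and $\bar{m}^\prime_n > 0$ hold at the fixed point. The first two are immediate from the definitions and the positivity of $\sigma_{i,n,k}$, $\mu_{i,j}$, $\bar{m}_i$. For $\bar{m}^\prime_n > 0$ the only non-trivial point is that the denominator
\[
1 - \frac{1}{N_t}\sum_{i,k}\frac{(\sigma_{n,i,k}\mu_{i,k}\bar{m}_n)^2}{(1+\sigma_{n,i,k}\mu_{n,k}\bar{m}_n)^2}
\]
is strictly positive; this is a standard property of the derivative of the Stieltjes transform and follows from the same fixed-point argument used in Appendix~B to establish convergence of \eqref{eqn:FPeqn} (it is essentially the statement that the self-consistency map defining $\bar m_n$ is a strict contraction in its derivative).

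Next I would invoke the Collatz--Wielandt characterization of the spectral radius of a non-negative matrix: it suffices to exhibit a strictly positive vector $\xv$ such that $\Gammam\Deltam\,\xv < \xv$ component-wise. The natural candidate is motivated by the downlink scaling in \eqref{eqn:lin_eq}, namely
\[
x_{(i,j)} \;=\; \frac{\mu_{i,j}}{\sigma_{i,i,j}\bar{G}_{i,i,j}\bar{m}_i^{\,2}},
\]
so that $\xv$ is parallel to the putative solution $\bar{\deltav}$ of $(\Id-\Gammam\Deltam)\bar{\deltav}=\rhov$ in the noiseless limit $\rhov \to 0$. Using the fixed-point identity $\gamma_{i,j} = \sigma_{i,i,j}\mu_{i,j}\bar m_i$ and the formula for $\bar m^\prime_n$, the inequality $(\Gammam\Deltam\,\xv)_{(i,j)} < x_{(i,j)}$ should, after cancellation, reduce to an inequality that is enforced by \eqref{eqn:FPeqn} together with the strict positivity of the denominator of $\bar m^\prime_n$, the latter providing the required slack for the strict inequality.

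The hardest step will be the algebraic reduction in the preceding paragraph: the off-diagonal blocks of $\Deltam$ couple cells via the cross-cell quantities $\bar{G}_{n,i,j}\bar{G}_{n,n,k}\bar{m}^\prime_n$, whereas the witness vector only involves the within-cell quantities $\sigma_{i,i,j}\bar{G}_{i,i,j}\bar{m}_i^{\,2}$, so the contraction has to be verified by carefully tracking how $\bar m_i$ and $\bar m_n$ for $i\neq n$ interact through \eqref{eqn:FPeqn}. Should this direct approach become unwieldy, a fallback is asymptotic uplink--downlink duality: I would construct an explicit strictly positive $\bar{\deltav}^{\star}$ from the uplink power vector $\muv$ that satisfies $(\Id-\Gammam\Deltam)\bar{\deltav}^{\star}=\rhov$, and then invoke the characterization of M-matrices, which states that a Z-matrix admitting a strictly positive solution for some strictly positive right-hand side is automatically non-singular with non-negative inverse.
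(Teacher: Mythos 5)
Your overall strategy --- establish $\rho(\Gammam\Deltam)<1$ from non-negativity plus a Perron--Frobenius witness tied to the uplink fixed point --- is the same one the paper follows, and your preliminary checks (entrywise non-negativity of $\Gammam\Deltam$, strict positivity of the denominator of $\bar m^{\prime}_n$) are fine. The gap is in the decisive step: the witness $x_{(i,j)}=\mu_{i,j}/(\sigma_{i,i,j}\bar G_{i,i,j}\bar m_i^{2})$ applied to the \emph{un-transposed} matrix $\Gammam\Deltam$ does not reduce to \eqref{eqn:FPeqn}. Writing out $(\Gammam\Deltam\,\xv)_{(i,j)}<x_{(i,j)}$, the common factor $1/(\sigma_{i,i,j}\bar G_{i,i,j}\bar m_i^{2})$ cancels and, after using $\gamma_{i,j}=\sigma_{i,i,j}\mu_{i,j}\bar m_i$, you need
\begin{align*}
\sigma_{i,i,j}\,\bar m_i\cdot\frac{1}{N_t}\sum_{(n,k)\neq(i,j)}\frac{\bar G_{n,i,j}\,\mu_{n,k}\,\bar m^{\prime}_n}{\sigma_{n,n,k}\,\bar m_n^{2}}\;<\;1 .
\end{align*}
This row sum aggregates, for a fixed UT $(i,j)$, quantities $\bar G_{n,i,j},\bar m_n,\bar m^{\prime}_n$ indexed by the \emph{interfering} cells $n$, whereas \eqref{eqn:FPeqn} for cell $i$ only controls sums of cell-$i$ quantities $\sigma_{i,n,k}\mu_{n,k}/(1+\sigma_{i,n,k}\mu_{n,k}\bar m_i)$ over $(n,k)$. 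Nothing in the fixed point forces the displayed inequality in an asymmetric network (cell $i$ lightly loaded, a neighbouring cell $n$ heavily loaded with $\bar m^{\prime}_n/\bar m_n^{2}$ large and $\sigma_{n,i,j}$ large), so the "algebraic reduction" you are counting on will not close. Your fallback suffers from the circularity you half-acknowledge: the M-matrix characterization requires you to actually exhibit a strictly positive solution of $(\Id-\Gammam\Deltam)\bar{\deltav}^{\star}=\rhov$, which is precisely what the lemma is needed to guarantee.

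The missing idea is a transposition. The paper substitutes the deterministic equivalent of the uplink SINR into $\gamma_{i,j}=\sigma_{i,i,j}\mu_{i,j}\bar m_i$ and observes that the fixed point is \emph{literally} the linear system $\muv=\Gammam(\Deltam^{\prime})^{T}\muv+\kappa$ with $\kappa>0$, where $\Deltam^{\prime}\geq\Deltam$ entrywise ($\Deltam^{\prime}$ retains the self terms that $\Deltam$ zeroes out). Hence $\muv$ itself is a valid Collatz--Wielandt witness, but for the \emph{transposed} matrix: $\Gammam(\Deltam^{\prime})^{T}\muv<\muv$ componentwise, giving $\rho(\Gammam(\Deltam^{\prime})^{T})<1$; there the row sums are exactly the ones \eqref{eqn:FPeqn} controls. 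Monotonicity of the spectral radius on non-negative matrices (Lemma \ref{lem:Rombaldi}) then gives $\rho(\Gammam\Deltam^{T})\leq\rho(\Gammam(\Deltam^{\prime})^{T})<1$, and because $\Gammam$ is diagonal, $\Gammam\Deltam$ and $\Gammam\Deltam^{T}$ share the same characteristic polynomial, so $\rho(\Gammam\Deltam)<1$ and $\Id-\Gammam\Deltam$ is invertible. If you repair your argument along these lines --- witness $\muv$ for the transposed system, then domination and the transpose identity --- the rest of your write-up (non-negativity, positivity of $\bar m^{\prime}_n$, and the consequence for condition \textbf{[C3]}) goes through as you describe.
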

\begin{proof}
The proof can be found in Appendix C, part I. 
\end{proof}
Lemma \ref{lem:updownfeas} implies that the condition
{\bf [C2]} is automatically satisfied for every 
pair of vectors $\{ \gammav,\muv\}$ that satisfy
the condition {\bf [C1]}.
Following the result of Lemma \ref{lem:updownfeas},
we consider characterizing only the set of target SINR vectors $\gammav$ that satisfy the condition {\bf [C1]}.

The exact set of feasible target SINR vectors satisfying the condition {\bf [C1]}  is difficult to be characterized in closed form. Therefore, we only establish a lower bound on this set.
In order to do so, we consider a modified system in which the inter-cell interference path loss coefficients are replaced by 
\begin{align}
\sigma_{\max}(n) \defines \sup_k \sigma_{i,n,k}, \ \forall n \neq i.
\end{align}
where $i = 1,\dots,N.$
Therefore, in the modified system $\sigma_{i,n,k} = \sigma_{\max}(n).$

Let us consider the ROBF algorithm applied to both the original
and the modified systems.
The fixed point equation for the computation of the uplink power allocation for the original system must satisfy the following equations (we use the superscript "org" to represent the original system):
\begin{align}
\gamma_{i,j} & = \sigma_{i,i,j} \mu^{\text{org}}_{i,j} \bar{m}^{\text{org}}_i \qquad \forall i,j,
\\  \frac{1}{\bar{m}^{\text{org}}_i}  & = \frac{1}{N_t}  \sum^K_{k = 1}\frac{\sigma_{i,i,k} \mu^{\text{org}}_{i,k}}{1+\sigma_{i,i,k}\mu^{\text{org}}_{i,k} \bar{m}^{\text{org}}_i} \nonumber \\ & +\frac{1}{N_t} \sum^N_{\stackrel{n = 1} { n \neq i}} \sum^K_{k = 1} \frac{\sigma_{i,n,k} \mu^{\text{org}}_{n,k}}{1+\sigma_{i,n,k}\mu^{\text{org}}_{n,k} \bar{m}^{\text{org}}_i}+1.
\end{align}
Similarly, the fixed point equation for the \emph{modified system} must satisfy the following equations (we use the superscript "mod" to represent the modified system):
\begin{align}
\gamma_{i,j} & = \sigma_{i,i,j} \mu^{\text{mod}}_{i,k} \bar{m}^{\text{mod}}_i \label{eqn:jaja666}\\
\frac{1}{\bar{m}^{\text{mod}}_i} & = \frac{1}{N_t}  \frac{\sigma_{i,i,k}  \mu^{\text{mod}}_{i,k}}{1+\sigma_{i,i,k}\mu^{\text{mod}}_{i,k} \bar{m}^{\text{mod}}_i} \nonumber \\ & +\frac{1}{N_t} \sum^N_{\stackrel{n = 1} { n \neq i}} \sum^K_{k = 1} \frac{\sigma_{max} (n) \mu^{\text{mod}}_{n,k}}{1+\sigma_{max} (n) \mu^{\text{mod}}_{n,k} \bar{m}^{\text{mod}}_i}+1. \label{eqn:ref668}
\end{align}

In what follows, we consider the ROBF algorithm applied to the \emph{modified system} and characterize the feasible SINR targets corresponding to this system. The set of feasible target SINR 
of the ROBF algorithm applied to the \emph{modified system} will act as a lower bound on the set of feasible target SINR 
of the ROBF algorithm applied to the original system. Intuitively, this is not hard to see.
In the \emph{modified system}, the path losses corresponding to the inter-cell interference 
links are scaled up to $\sigma_{max}(n).$ Therefore, the \emph{modified system} represents a more interference
limited regime as compared to the original system. Hence, any SINR feasible for ROBF applied to the \emph{modified system} should be feasible for the ROBF algorithm applied to the original system as well. We will later on make rigorous arguments to prove that the above statement in Lemma \ref{lem:ModOrgFeas}.

\begin{proposition}
\label{prop:FeasCond}
The set of feasible SINR targets for the ROBF algorithm applied to the modified system must satisfy for $i = 1,\dots,N,$
\begin{align}
 \frac{1}{N_t}  \sum^K_{k = 1} \frac{\gamma_{i,k}}{1+\gamma_{i,k}}
 +\frac{1}{N_t} \sum^N_{\substack{
   n =1 \\
   n \neq i
  }} \sum^K_{k = 1} \frac{\frac{\sigma_{max}(n)}{\sigma_{n,n,k}} \gamma_{n,k}}{1+\frac{\sigma_{max}(n)}{\sigma_{n,n,k}} \gamma_{n,k}} < 1. \label{eqn:feas_cond_final_finite}
\end{align}
\end{proposition}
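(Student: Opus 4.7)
The plan is to extract the stated inequality by substituting the fixed-point relation \eqref{eqn:jaja666} back into \eqref{eqn:ref668} and then invoking positivity of $\bar{m}^{\text{mod}}_i$. Feasibility of $\gammav$ in the modified system means, per condition \textbf{[C1]}, that there exist positive finite $\muv^{\text{mod}}$ and $\bar{m}^{\text{mod}}_1,\dots,\bar{m}^{\text{mod}}_N$ solving both equations. From \eqref{eqn:jaja666} one reads off $\mu^{\text{mod}}_{n,k} = \gamma_{n,k}/(\sigma_{n,n,k}\bar{m}^{\text{mod}}_n)$. Plugging this into \eqref{eqn:ref668} and multiplying through by $\bar{m}^{\text{mod}}_i$ collapses it to the clean identity
\begin{align*}
1 - \bar{m}^{\text{mod}}_i = \frac{1}{N_t}\sum_{k=1}^K \frac{\gamma_{i,k}}{1+\gamma_{i,k}} + \frac{1}{N_t}\sum_{n\neq i}\sum_{k=1}^K \frac{x_{i,n,k}}{1+x_{i,n,k}},
\end{align*}
with $x_{i,n,k} \defines \frac{\sigma_{max}(n)\gamma_{n,k}}{\sigma_{n,n,k}}\cdot \frac{\bar{m}^{\text{mod}}_i}{\bar{m}^{\text{mod}}_n}$. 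Since $\bar{m}^{\text{mod}}_i>0$, the LHS and hence the RHS is strictly less than $1$.

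To land on the form stated in the proposition, I need to replace each $x_{i,n,k}$ by $y_{n,k}\defines \sigma_{max}(n)\gamma_{n,k}/\sigma_{n,n,k}$, i.e.\ strip away the ratio $\bar{m}^{\text{mod}}_i/\bar{m}^{\text{mod}}_n$. Because $t\mapsto t/(1+t)$ is strictly increasing on $[0,\infty)$, the effect of this swap is controlled by the sign of $\bar{m}^{\text{mod}}_i - \bar{m}^{\text{mod}}_n$. For the distinguished index $i^{\star} = \arg\max_n \bar{m}^{\text{mod}}_n$ the ratio is at least $1$ for every $n\neq i^{\star}$, so $x_{i^{\star},n,k}\ge y_{n,k}$ and therefore $\frac{x_{i^{\star},n,k}}{1+x_{i^{\star},n,k}}\ge \frac{y_{n,k}}{1+y_{n,k}}$, which upgrades the displayed identity into the desired strict inequality at $i=i^{\star}$.

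The hard part will be extending this from the single maximizer $i^{\star}$ to every $i=1,\dots,N$. For cells with below-maximum $\bar{m}^{\text{mod}}_i$ the monotonicity estimate flips and the naive swap goes the wrong way. My strategy is a contradiction argument: assume the stated inequality fails at some $i_0$, i.e.\ $\frac{1}{N_t}\sum_{k} \frac{\gamma_{i_0,k}}{1+\gamma_{i_0,k}} + \frac{1}{N_t}\sum_{n\neq i_0}\sum_{k} \frac{y_{n,k}}{1+y_{n,k}} \ge 1$, and then exploit the identity at $i=i_0$ together with its counterparts at the other cells to show that no jointly positive vector $(\bar{m}^{\text{mod}}_1,\dots,\bar{m}^{\text{mod}}_N)$ can satisfy the coupled system, contradicting feasibility. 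A possibly cleaner alternative route is to recast the coupled $\bar{\mv}^{\text{mod}}$ recursion as a standard interference function iteration and interpret the condition ``$S_i<1$ for every $i$'' as the row-sum (Perron) bound forcing the associated non-negative interference matrix to have spectral radius below $1$, so that the mere existence of a positive fixed point enforces the row condition at each $i$ simultaneously.
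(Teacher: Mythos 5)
Your opening derivation is exactly the paper's: substitute $\mu^{\text{mod}}_{n,k}=\gamma_{n,k}/(\sigma_{n,n,k}\bar{m}^{\text{mod}}_n)$ from \eqref{eqn:jaja666} into \eqref{eqn:ref668}, multiply through by $\bar{m}^{\text{mod}}_i$, and arrive at the identity $1-\bar{m}^{\text{mod}}_i=\frac{1}{N_t}\sum_k\frac{\gamma_{i,k}}{1+\gamma_{i,k}}+\frac{1}{N_t}\sum_{n\neq i}\sum_k\frac{x_{i,n,k}}{1+x_{i,n,k}}$ with $x_{i,n,k}=\frac{\sigma_{max}(n)}{\sigma_{n,n,k}}\gamma_{n,k}\cdot\frac{\bar{m}^{\text{mod}}_i}{\bar{m}^{\text{mod}}_n}$. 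The divergence is precisely at the crux you flag. The paper disposes of the ratio $\bar{m}^{\text{mod}}_i/\bar{m}^{\text{mod}}_n$ in one line: it observes that after this substitution the fixed-point equation \eqref{eqn:refhere401} holds for every $i$ and concludes ``by symmetry'' that $\bar{m}^{\text{mod}}_i=\bar{m}^{\text{mod}}$ for all $i$. Then $x_{i,n,k}=\frac{\sigma_{max}(n)}{\sigma_{n,n,k}}\gamma_{n,k}$ exactly, and strict positivity of $\bar{m}^{\text{mod}}$ yields \eqref{eqn:feas_cond_final_finite} for every $i$ simultaneously. That equalization of the $\bar{m}^{\text{mod}}_i$ across cells is the idea your proposal is missing.

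As written, your argument only establishes the inequality at the single index $i^{\star}=\arg\max_n\bar{m}^{\text{mod}}_n$; that part is correct. The two routes you sketch for the remaining indices are not carried out, and the contradiction route does not obviously close: if the inequality failed at some $i_0$, combining the identity at $i_0$ with monotonicity of $t\mapsto t/(1+t)$ only forces $\bar{m}^{\text{mod}}_n>\bar{m}^{\text{mod}}_{i_0}$ for at least one $n\neq i_0$, which is not by itself a contradiction; propagating this through the cross-cell coupling into an impossibility is essentially equivalent to proving the equalization that the paper simply asserts. So the proposal has a genuine gap for every $i\neq i^{\star}$. It is fair to note that the paper's own symmetry claim is terse --- the intra-cell sums $\frac{1}{N_t}\sum_k\gamma_{i,k}/(1+\gamma_{i,k})$ and the exclusion $n\neq i$ still depend on $i$ --- but that one line is what carries the paper's proof, and a complete write-up must either justify it or supply the global argument you only outline.
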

\begin{proof}
The proof is provided in Appendix C, part II.
\end{proof}
We now present the following Corollary. \\
{\bf Corollary 1:} \\
In the large system regime, the set of feasible SINR targets for the ROBF algorithm applied to the modified system must satisfy for $i = 1,\dots,N,$
\begin{align}
\limsup_{N_t,K \to \infty} \LSB \frac{1}{N_t}  \sum^K_{k = 1} \frac{\gamma_{i,k}}{1+\gamma_{i,k}}
 +\frac{1}{N_t} \sum^N_{\substack{
   n =1 \\
   n \neq i
  }} \sum^K_{k = 1} \frac{\frac{\sigma_{max}(n)}{\sigma_{n,n,k}} \gamma_{n,k}}{1+\frac{\sigma_{max}(n)}{\sigma_{n,n,k}} \gamma_{n,k}}\RSB \nonumber \\ < 1. \label{eqn:feas_cond_final}
\end{align}
The arguments for the Corollary is also provided in Appendix C, part II.

Finally, we show that the feasibility condition of \eqref{eqn:feas_cond_final_finite} will
act as a lower bound on the set of feasibile SINR targets for the ROBF algorithm applied to the original system.
\begin{lemma}
\label{lem:ModOrgFeas}
Any target SINR feasible for the ROBF algorithm applied to the \emph{modified system} is feasible for the ROBF algorithm applied to the original system.
Hence the feasible SINR targets for the \emph{modified system} will act as a lower bound for the feasible SINR target of the original system.
\end{lemma}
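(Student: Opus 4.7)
The plan is a monotone coupling of the two fixed-point iterations. Both the original and the modified systems share the outer update $\mu_{i,j}^{t+1} = \gamma_{i,j}/(\sigma_{i,i,j}\,\bar{m}_i^{t})$ from \eqref{eqn:lam_stil}, and differ only through the inner equation \eqref{eqn:FPeqn} for $\bar{m}_i$, in which the modified system substitutes the larger coefficients $\sigma_{\max}(n) \geq \sigma_{i,n,k}$ for every $n \neq i$. I will first establish that for every positive $\muv$ the inner fixed point satisfies $\bar{m}_i^{\text{mod}}(\muv) \leq \bar{m}_i^{\text{org}}(\muv)$, and then deduce that the outer uplink-power mapping $I^{\text{org}}$ is dominated componentwise by $I^{\text{mod}}$.

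For the inner monotonicity, I view \eqref{eqn:FPeqn} as an equation $x = f(x;\{\sigma\})$. Each summand $\sigma\mu/(1+\sigma\mu x)$ is strictly decreasing in $x$ on $(0,\infty)$ and strictly increasing in the coefficient $\sigma$, so $x \mapsto f(x;\{\sigma\}) - x$ is strictly decreasing, which guarantees a unique positive root and implies that raising any $\sigma_{i,n,k}$ can only shrink that root. Perturbing coordinatewise $\sigma_{i,n,k} \to \sigma_{\max}(n)$ over all inter-cell pairs $(n,k)$ then gives $\bar{m}_i^{\text{mod}}(\muv) \leq \bar{m}_i^{\text{org}}(\muv)$, and \eqref{eqn:lam_stil} converts this into $I^{\text{org}}_{i,j}(\muv) \leq I^{\text{mod}}_{i,j}(\muv)$. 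The same argument shows that $I^{\text{org}}$ is a standard interference function in the sense of Yates (positive, monotone nondecreasing in $\muv$, and scalable): raising $\muv$ inflates the summands in \eqref{eqn:FPeqn}, hence lowers $\bar{m}_i^{\text{org}}$ and raises $I^{\text{org}}_{i,j}$, while scalability is a direct computation on the rational terms. Running both iterations from $\muv^{0}=0$ and combining the componentwise bound $I^{\text{org}}\leq I^{\text{mod}}$ with monotonicity of $I^{\text{org}}$, a routine induction yields $\muv^{\text{org},t} \leq \muv^{\text{mod},t}$ for every $t$. If the modified system is feasible, the right-hand side converges to a finite limit, so the bounded nondecreasing sequence $\muv^{\text{org},t}$ also converges to a finite fixed point of the original system. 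This establishes condition \textbf{[C1]} for the original system, and Lemma \ref{lem:updownfeas} supplies \textbf{[C2]}.

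The main technical obstacle I foresee is making the fixed-point monotonicity of $\bar{m}_i(\muv)$ in the coefficients fully rigorous: \emph{a priori}, a perturbation of $\{\sigma_{i,n,k}\}$ could move the implicit equation onto a different branch. I will handle this by exploiting the strict monotonicity of $x \mapsto f(x;\{\sigma\})-x$, which forces uniqueness of the positive root and continuous monotone dependence on each coefficient, so the perturbation $\sigma_{i,n,k} \to \sigma_{\max}(n)$ can be executed one coordinate at a time without branch jumps. Combined with the uniqueness of the fixed point of \eqref{eqn:lam_stil} already established in Lemma 1, this gives a complete and self-contained argument that any $\gammav$ feasible for the modified system is feasible for the original one.
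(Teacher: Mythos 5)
Your proof is correct and rests on the same key fact as the paper's: the unique positive root $\bar{m}_i$ of the inner fixed-point equation \eqref{eqn:FPeqn} is nonincreasing in the coefficients $\sigma_{i,n,k}\mu_{n,k}$, so replacing $\sigma_{i,n,k}$ by $\sigma_{\max}(n)\geq\sigma_{i,n,k}$ can only decrease $\bar{m}_i$ and hence increase the uplink-power map. Where you differ is in how this monotonicity is converted into feasibility. The paper evaluates the inequality at the single point $\muv=\muv^{\text{mod}}$: it shows $\bar{m}^{\text{mod1}}_i\geq\bar{m}^{\text{mod}}_i$ and therefore $\sigma_{i,i,j}\mu^{\text{mod}}_{i,j}\bar{m}^{\text{mod1}}_i\geq\gamma_{i,j}$, i.e., the modified system's power vector already over-achieves the targets in the original system, which is exactly the standard-function feasibility criterion $I^{\text{org}}(\muv^{\text{mod}})\leq\muv^{\text{mod}}$. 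You instead couple the two iterations from $\zerov$ and prove $\muv^{\text{org},t}\leq\muv^{\text{mod},t}$ by induction, deducing convergence of the original iterates from monotone boundedness; the two deductions are equivalent, but yours makes the convergence explicit rather than delegating it to Yates' framework. One caveat on your justification of the inner monotonicity: since \eqref{eqn:FPeqn} defines $\bar{m}_i$ as the reciprocal of a sum of terms decreasing in $x$, the map $x\mapsto f(x;\{\sigma\})$ is \emph{increasing} in $x$, so ``each summand is decreasing in $x$'' does not by itself yield that $f(x;\{\sigma\})-x$ is strictly decreasing. The clean fix is to rewrite the equation as $\frac{1}{N_t}\sum_{n,k}\frac{\sigma_{i,n,k}\mu_{n,k}x}{1+\sigma_{i,n,k}\mu_{n,k}x}+x=1$: the left-hand side is strictly increasing both in $x$ and in each coefficient, which simultaneously gives uniqueness of the positive root and its monotone decrease under the perturbation $\sigma_{i,n,k}\to\sigma_{\max}(n)$ (this is also, in effect, how the paper's Appendix~B monotonicity argument operates, through the products $\sigma_{i,n,k}\mu_{n,k}$).
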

\begin{proof}
The proof is provided in Appendix C, part III.
\end{proof}
Lastly, we establish the condition {\bf [C3]}.
This is rather straight forward and can be seen by the following steps: From 
Appendix C, we have established that for target SINR values that satisfy the condition in Proposition \ref{prop:FeasCond}, $\rho(\Gammam \Deltam) < 1.$
Therefore, using series expansion for the matrix $(\Id-\Gammam \Deltam)^{-1}$ \cite{MtxBook1971}, we have
\begin{align}
(\Id-\Gammam \Deltam)^{-1} \rhov = \Big{(} \sum^{\infty}_{j = 1} (\Gammam \Deltam)^j \Big{)} \rhov.
\end{align}
Since the elements of $\Gammam \Deltam$ and $\rhov$ are positive,
their sum will also be positive. Therefore, the condition 
{\bf [C3]} holds true.

We end this section by providing the feasibility conditions in two special cases in which the feasibility conditions are both necessary and sufficient: \\
{\bf Example 1: Single Cell Case} \\
For the isolated single cell case, \eqref{eqn:feas_cond_final} reduces to the following
\begin{align}
 \frac{1}{N_t}\sum^K_{k = 1} \frac{\gamma_{k}}{1+\gamma_{k}} < 1. \label{eqn:ref001}
\end{align}
When all the UTs are demanding the same SINR $\gamma_k = \gamma \ \forall k,$ the set
of feasible SINR is given by
$\gamma < \LB \frac{K}{\min \{ N_t,K \} }-1 \RB^{-1}.$
The result can be interpreted as follows.
In the case of a single cell, as long as $N_t \geq K,$  any finite SINR target 
is supportable by the ROBF algorithm (for the case when all $\gamma_{i,j}$ are equal). In other words, when $N_t \geq K,$ the BS has enough degrees of freedom to sever all the UTs in the system.
Note that \eqref{eqn:ref001} matches the feasibility conditions derived in \cite{WeiselShamai2006} for the case of a single cell system. 
 \\
{\bf Example 2: 2-Cell Wyner Model} \\
Consider a perfectly symmetric multi-cell system in which the path loss for the intra-cell links are equal to $1,$ and the path loss of the inter-cell links are equal to $\epsilon.$ Every UT demands the same SINR target given by $\gamma.$
In this case, \eqref{eqn:feas_cond_final} reduces to
\begin{align}
\frac{K}{N_t}\LB \frac{\gamma}{1+\gamma} + \frac{ \epsilon \gamma}{1+ \epsilon \gamma} \RB < 1. \label{eqn:Wyner_Rg}
\end{align}
This condition allows us to examine the dependency of the feasible SINR target on various system parameters. 
In particular, we examine the dependency of the feasible $\gamma$ on $N_t$ by a simple numerical example. We plot the downlink transmit power (obtained by running the ROBF algorithm) as a function of $\gamma$ for different values of $N_t$ in Figure \ref{fig:feas_SINR}. We consider $K = 50$ UTs per cell and $\epsilon = 0.5$
It can be seen that beyond a certain cut off value of the target SINR, the downlink power grows unbounded.
This is precisely the value of the target SINR at which the ROBF algorithm becomes infeasible.
It can be verified that the cut off value of $\gamma$ is the one for which the condition in \eqref{eqn:Wyner_Rg} is not satisfied.
Finally, note that higher the number of transmit antennas per BS,  higher is the cut off value of the target SINR. Once again, this is due to the availability of greater number of spatial degrees of freedom. In this cases when $N_t \geq 2 K,$ any finite target SINR is achievable (note that $2K$ is the total number 
of UTs in two cells). 
\begin{figure}[!t]
 \includegraphics[width=3 in]{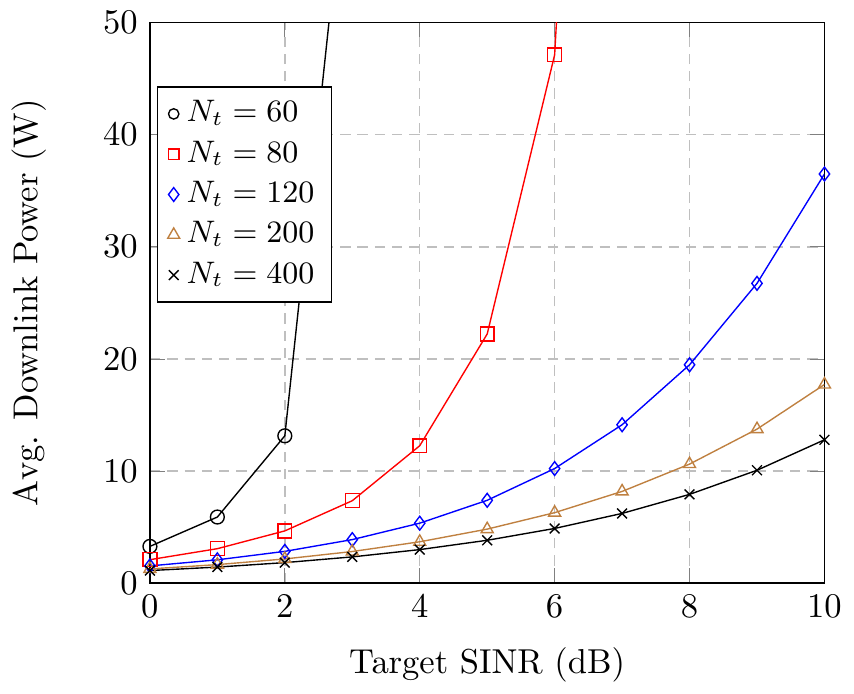} 
 \caption{Downlink power Vs target SINR for $K = 50$ UTs per cell.}
   \label{fig:feas_SINR} 
  \end{figure}
The feasibility conditions for the special case of two cell Wyner model 
was derived in \cite{Randa2010} as well. 

\subsection{Convergence of the Uplink and Downlink
SINR}
Next we focus on the achieved SINR in the uplink and downlink 
for the ROBF algorithm.
We first start with the analysis of the uplink of the ROBF algorithm.
Before we proceed, we make the following observations.
Recall that the achieved SINR in the uplink for the ROBF algorithm is given by
\begin{align}
\Lambda^{\text{UL}}_{i,j} (\muv) = \frac{\frac{\mu_{i,j}}{N_t}|\hat{\vv}^H_{i,j}\hv_{i,i,j}|^2}{\sum_{(n,k) \neq (i,j)} \frac{\mu_{n,k}}{N_t}|\hat{\vv}^H_{i,j}\hv_{i,n,k}|^2+ ||{\hat{\vv}}_{i,j}||_2^2}. \label{eqn:UL_form1}
\end{align}
Note that in our notation $\Lambda^{\text{UL}}_{i,j} (\muv),$
we have explicitly mentioned the achieved uplink SINR as a function of the parameters of the ROBF algorithm $\muv$
(slightly deviating from the 
notation for the uplink SINR introduced in Section \ref{sec:SysmodAlg}).
Since the uplink receive filter in the ROBF algorithm  \eqref{eqn:ULBF_asymp} 
are minimum mean square error (MMSE) form,
the expression for 
the achieved uplink SINR can be given in alternate form as
\begin{align}
\Lambda^{\text{UL}}_{i,j}(\muv) =  \frac{\mu_{i,j}}{N_t} \hv^H_{i,i,j}(\Sigmam^{{\prime}^{\mu}}_i+\Id_{N_t})^{-1}\hv_{i,i,j} \label{eqn:UL_form2}
\end{align}
where $\Sigmam^{\mu}_i = \sum_{i,j} \frac{\mu_{i,j}}{N_t}\hv_{i,i,j}\hv^H_{i,i,j}$ and $\Sigmam^{{\prime}^\mu}_i = \Sigmam^{\mu}_i-\frac{\mu_{i,j}}{N_t}\hv_{i,i,j}\hv^H_{i,i,j}.$ 
Similarly since the uplink receive filter \eqref{eqn:ULBF} in the CBF algorithm are MMSE, the achieved uplink SINR for the CBF algorithm
is given by
\begin{align}
\Lambda^{\text{UL}}_{i,j}(\lambdav) = \frac{ \lambda_{i,j}}{N_t} \hv^H_{i,i,j}(\Sigmam^{{\prime}^\lambda}_i+\Id_{N_t})^{-1}\hv_{i,i,j}  \qquad \forall i,j. \label{eqn:sinruse2}
\end{align}
Also, since the CBF algorithm is optimal, 
\begin{align}
\Lambda^{\text{UL}}_{i,j}(\lambdav) = \gamma_{i,j} \qquad \forall i,j.
\end{align}
Also, recall that the downlink SINR for the ROBF algorithm is given by
\begin{align}
\Lambda^{\text{DL}}_{i,j}(\muv) = \frac{|\vv^H_{i,j}\hv_{i,i,j}|^2}{\sum_{k \neq j}|\vv^H_{i,k}\hv_{i,i,j}|^2+\sum_{n \neq i,k}|\vv^H_{n,k}\hv_{n,i,j}|^2+N_0}. \label{eqn:DLSINR_ROBF}
\end{align}
We now provide the following convergence result:
\begin{theorem} 
\label{thm:ULDLSINRConv}
In the large system regime,
the achieved uplink and downlink SINR for the ROBF algorithm
converge almost surely to the target SINR $\gamma_{i,j}.$
Mathematically stating
\begin{align}
& \Lambda^{\text{UL}}_{i,j}(\muv)  \xrightarrow[N_t,K \to \infty]{\text{a.s.}} {\gamma}_{i,j} \qquad \forall i,j \label{eqn:ULSINRConv}, \\
&\Lambda^{\text{DL}}_{i,j}(\muv) \xrightarrow[N_t,K \to \infty]{\text{a.s.}} {\gamma}_{i,j} \qquad \forall i,j \label{eqn:DLSINRConv}.
\end{align}
\end{theorem}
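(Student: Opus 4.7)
The plan is to treat the uplink and downlink convergence separately, in each case reducing the almost-sure limit of the SINR to an identity that the ROBF parameters $\muv$ and $\bar{\deltav}$ were explicitly constructed to satisfy. The technical workhorses throughout are the trace (quadratic-form concentration) lemma, the matrix-inversion lemma, and the rank-one perturbation lemma, all available from Appendix~A.

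For the uplink, I would start from the MMSE form \eqref{eqn:UL_form2} and observe that $\hv_{i,i,j}$ is independent of $\Sigmam^{{\prime}^{\mu}}_i$, so the trace lemma gives
\[
\tfrac{1}{N_t}\hv^H_{i,i,j}(\Sigmam^{{\prime}^{\mu}}_i+\Id)^{-1}\hv_{i,i,j}\asymp \tfrac{\sigma_{i,i,j}}{N_t}\tr(\Sigmam^{{\prime}^{\mu}}_i+\Id)^{-1}.
\]
The rank-one perturbation lemma then permits replacing $\Sigmam^{{\prime}^{\mu}}_i$ by the full $\Sigmam^{\mu}_i$ inside the trace, and a Silverstein--Bai-style fixed-point argument identifies $\tfrac{1}{N_t}\tr(\Sigmam^{\mu}_i+\Id)^{-1}\asto\bar{m}_i$, with $\bar{m}_i$ the unique positive solution of \eqref{eqn:FPeqn}. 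Combined with the ROBF update rule \eqref{eqn:lam_stil}, which is engineered so that $\mu_{i,j}\sigma_{i,i,j}\bar{m}_i=\gamma_{i,j}$ by construction, this yields $\Lambda^{\text{UL}}_{i,j}(\muv)\asto\gamma_{i,j}$ and hence \eqref{eqn:ULSINRConv}.

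For the downlink I would substitute $\vv_{n,k}=\sqrt{\bar{\delta}_{n,k}/N_t}\,\hat{\vv}_{n,k}$ into \eqref{eqn:DLSINR_ROBF} and compute a deterministic equivalent for each term in the ratio. The signal power $|\vv^H_{i,j}\hv_{i,i,j}|^2$ involves a single ``self'' quadratic form: one application of the matrix-inversion lemma isolates the rank-one contribution of $\hv_{i,i,j}$ from the resolvent, and the trace lemma then returns a deterministic equivalent proportional to $\bar{\delta}_{i,j}\sigma_{i,i,j}\bar{G}_{i,i,j}\bar{m}_i^2$, with $\bar{G}$ as in \eqref{eqn:haha14}. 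The intra- and inter-cell interference terms $|\vv^H_{n,k}\hv_{n,i,j}|^2$ with $(n,k)\neq(i,j)$ are genuine cross quadratic forms in which both $\hat{\vv}_{n,k}$ and $\hv_{n,i,j}$ depend on the same resolvent; I would apply the matrix-inversion lemma twice to peel off the rank-one contributions of $\hv_{n,n,k}$ and $\hv_{n,i,j}$ in turn, and then invoke the Stieltjes-derivative identity for $\bar{m}'_n$ to obtain the equivalent $\tfrac{1}{N_t}\bar{G}_{n,i,j}\bar{G}_{n,n,k}\bar{m}'_n$, which is precisely the entry $\Deltam^{i,n}_{j,k}$ from \eqref{eqn:coeff_mtx}.

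Assembling these pieces and applying the continuous mapping theorem to the ratio produces a deterministic equivalent for the downlink SINR of the form
\[
\Lambda^{\text{DL}}_{i,j}(\muv)\asymp \frac{\bar{\delta}_{i,j}\sigma_{i,i,j}\bar{G}_{i,i,j}\bar{m}_i^2}{\sum_{(n,k)\neq(i,j)}\bar{\delta}_{n,k}\Deltam^{i,n}_{j,k}+N_0},
\]
where common normalization constants arising from $\hat{\vv}_{i,j}$ cancel between numerator and denominator. Identifying this deterministic ratio with $\gamma_{i,j}$ is then a matter of reading the linear system \eqref{eqn:lin_eq} componentwise: with the explicit entries of $\Gammam$ and $\rhov$ from Algorithm~\ref{alg:decentralized}, rearranging $(\Id-\Gammam\Deltam)\bar{\deltav}=\rhov$ gives exactly this identity, proving \eqref{eqn:DLSINRConv}. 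The main obstacle is the cross-interference step above: since independence-based trace-lemma estimates do not apply when the two vectors in the quadratic form are measurable with respect to the same resolvent, carefully tracking the two successive rank-one perturbation errors, and correctly identifying the derivative-type object $\bar{m}'_n$ (rather than $\bar{m}_n$ itself) as the relevant limit, will form the technical core of the proof.
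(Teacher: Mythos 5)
Your proposal is correct and follows essentially the same route as the paper: trace-lemma concentration of the MMSE quadratic form combined with the fixed-point identity $\mu_{i,j}\sigma_{i,i,j}\bar{m}_i=\gamma_{i,j}$ for the uplink, and two successive rank-one peels yielding the $\bar{m}^{\prime}_n$-type deterministic equivalents for the cross-interference terms, assembled via the ratio-convergence lemma and matched against the linear system \eqref{eqn:lin_eq} for the downlink. The only details the paper adds beyond your outline are (i) taking sufficiently high moments ($k\geq 6$) so that Markov plus Borel--Cantelli gives the convergence uniformly over the growing index set $(i,j)$, and (ii) an explicit verification that the spectral norm of the matrix sandwiched in the interference quadratic form is bounded before the trace lemma is invoked.
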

\begin{proof}
The details of the proof for the convergence of the achieved uplink SINR \eqref{eqn:ULSINRConv} can be found in Appendix D.

Next we proceed to the convergence proof for the downlink SINR
\eqref{eqn:DLSINRConv}.
The proof utilizes the following lemma on
the convergence result for the downlink interference terms.
\begin{lemma}
\label{lem:DLInt}
The downlink interference term corresponding to the 
intra-cell and inter-cell interference converge 
in the large system regime
to the following:
\begin{align}
\sum_{(n,k) \neq (i,j)}|\vv^H_{n,k}\hv_{n,i,j}|^2 & \asymp \sum_{(n,k) \neq (i,j)} \frac{\bar{\delta}_{n,k}\bar{G}_{n,i,j} \bar{G}_{n,n,k} {\bar{m}}^\prime_n}{N_t} .
\end{align}
\end{lemma}
The lemma is proved in Appendix E.
We now proceed to the convergence of the downlink SINR.
Using Lemma \ref{lem:DLInt} and Lemma \ref{lem:peacock},
it can be concluded that the downlink SINR asymptotically
converges to
\begin{align}
& \Lambda^{\text{DL}}_{i,j}(\muv) \xrightarrow[N_t,K \to \infty]{\text{a.s.}} \nonumber \\
& \frac{\bar{\delta}_{i,j}  \sigma_{i,i,j} \bar{G}_{i,i,j} \bar{m}^2_{i} } {\frac{1}{N_t} \sum_{ (n,k) \neq (i,j)} \bar{\delta}_{n,k}  \bar{G}_{n,i,j} \bar{G}_{n,n,k} {\bar{m}}^\prime_n +N_0} \label{eqn:here1111}.
\end{align}
It can be easily verified from \eqref{eqn:lin_eq} 
that the right hand side of \eqref{eqn:here1111} is 
equal to the target SINR $\gamma_{i,j},$ thus completing the proof.
\end{proof}

\subsection{Asymptotic optimality of the Uplink and Downlink Power Allocation}
We now focus on the optimality of the uplink and downlink
power allocation of the ROBF algorithm in the large system regime. 

Consider the Lagrangian of the downlink minimization
problem in its two forms as in equations \eqref{eqn:LagDL}
and \eqref{eqn:LagUL}. Our proof proceeds by 
plugging in the solution obtained by the ROBF
algorithm $(\vv,\muv)$ into the Lagrangian 
and examining its properties in the large system 
regime. 
\begin{lemma}
\label{lem:LagConv}
The following results hold true for
the Lagrangian in the large system limit:
\begin{align}
& \lim_{N_t,K \to \infty} L(\vv,\muv) =
\lim_{N_t,K \to \infty} \sum_{i,j} \vv^H_{i,j} \vv_{i,j}, 
\label{eqn:LagDLConv}\\
& \lim_{N_t,K \to \infty}L(\vv,\muv)   =
\lim_{N_t,K \to \infty} \sum_{i,j}  \frac{\mu_{i,j} N_0}{N_t}. 
\label{eqn:LagULConv}
\end{align}.
\end{lemma}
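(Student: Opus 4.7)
The plan is to substitute the ROBF pair $(\vv,\muv)$ into the two equivalent expressions for the Lagrangian—the downlink form \eqref{eqn:LagDL} yielding \eqref{eqn:LagDLConv}, and the uplink form \eqref{eqn:LagUL} yielding \eqref{eqn:LagULConv}—and to show that in each case the residual quadratic summation vanishes almost surely as $N_t,K\to\infty$. Both vanishings will be driven by the uplink/downlink SINR convergence in Theorem~\ref{thm:ULDLSINRConv}, exploiting the observation that the extra terms in the two forms of the Lagrangian are, up to bounded scalings, precisely the SINR-constraint gaps.

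For \eqref{eqn:LagDLConv}, I would plug $(\vv,\muv)$ into \eqref{eqn:LagDL} to obtain
\begin{align*}
L(\vv,\muv)=\sum_{i,j}\vv^H_{i,j}\vv_{i,j}-\sum_{i,j}\frac{\mu_{i,j}}{N_t}\Delta_{i,j},
\end{align*}
with $\Delta_{i,j}\defines \frac{|\vv^H_{i,j}\hv_{i,i,j}|^2}{\gamma_{i,j}}-\sum_{(n,k)\neq(i,j)}|\vv^H_{n,k}\hv_{n,i,j}|^2-N_0$. The key identity is that $\gamma_{i,j}\Delta_{i,j}$ equals $|\vv^H_{i,j}\hv_{i,i,j}|^2$ minus $\gamma_{i,j}$ times the downlink interference-plus-noise at UT$_{(i,j)}$, so $\Delta_{i,j}$ is just the SINR-denominator multiplied by $(\Lambda^{\text{DL}}_{i,j}(\muv)-\gamma_{i,j})/\gamma_{i,j}$. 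By Theorem~\ref{thm:ULDLSINRConv} this gap vanishes a.s., and Lemma~\ref{lem:DLInt} together with the deterministic equivalent for $|\vv^H_{i,j}\hv_{i,i,j}|^2$ (already established while proving \eqref{eqn:here1111}) shows the denominator is a.s.\ bounded. Hence $\Delta_{i,j}\asto 0$, and the weighted sum $\sum_{i,j}\frac{\mu_{i,j}}{N_t}\Delta_{i,j}\asto 0$ since each term is $o(1/N_t)$ while $NK/N_t=O(1)$.

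For \eqref{eqn:LagULConv}, I would plug $(\vv,\muv)$ into \eqref{eqn:LagUL} to obtain $L(\vv,\muv)=\sum_{i,j}\frac{\mu_{i,j}N_0}{N_t}+\sum_{i,j}\vv^H_{i,j}\Bm_{i,j}\vv_{i,j}$. Substituting $\vv_{i,j}=\sqrt{\bar\delta_{i,j}/N_t}\,\hat\vv_{i,j}$ and expanding $\Bm_{i,j}$ via \eqref{eqn:refhere0011}, a direct rearrangement gives
\begin{align*}
\vv^H_{i,j}\Bm_{i,j}\vv_{i,j}=-\frac{\bar\delta_{i,j}}{N_t\gamma_{i,j}}\,D_{i,j}\,\bigl(\Lambda^{\text{UL}}_{i,j}(\muv)-\gamma_{i,j}\bigr),
\end{align*}
where $D_{i,j}\defines \sum_{(n,k)\neq(i,j)}\frac{\mu_{n,k}}{N_t}|\hat\vv^H_{i,j}\hv_{i,n,k}|^2+\|\hat\vv_{i,j}\|^2$ is the uplink SINR denominator. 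By the uplink half of Theorem~\ref{thm:ULDLSINRConv} the last factor vanishes a.s., and because $D_{i,j}$, $\bar\delta_{i,j}$, $\gamma_{i,j}$ are all a.s.\ bounded, each term is $o(1/N_t)$; summing over the $NK$ users gives $\sum_{i,j}\vv^H_{i,j}\Bm_{i,j}\vv_{i,j}\asto 0$, establishing \eqref{eqn:LagULConv}.

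The main technical obstacle I anticipate is upgrading the pointwise a.s.\ vanishing of each $\Delta_{i,j}$ (resp.\ each $\vv^H_{i,j}\Bm_{i,j}\vv_{i,j}$) to a.s.\ vanishing of the normalized sum of all $NK$ terms. This requires a uniform-in-$(i,j)$ control on the rate at which the deterministic equivalents approximate the quadratic forms involved, which should be extractable from the RMT estimates already invoked in proving Theorem~\ref{thm:ULDLSINRConv} together with the standing assumption that $N_t/K$ stays bounded above and below.
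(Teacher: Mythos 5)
Your proposal is correct, but it takes a genuinely different route from the paper's Appendix F. You factor the residual in each form of the Lagrangian as a bounded prefactor times the SINR gap: in the downlink form, $\Delta_{i,j}=\frac{I_{i,j}+N_0}{\gamma_{i,j}}\bigl(\Lambda^{\text{DL}}_{i,j}(\muv)-\gamma_{i,j}\bigr)$ with $I_{i,j}$ the interference, and in the uplink form, $\vv^H_{i,j}\Bm_{i,j}\vv_{i,j}=-\frac{\bar\delta_{i,j}D_{i,j}}{N_t\gamma_{i,j}}\bigl(\Lambda^{\text{UL}}_{i,j}(\muv)-\gamma_{i,j}\bigr)$; both identities check out algebraically, and Theorem~\ref{thm:ULDLSINRConv} then kills the residuals. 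The paper instead replaces every quadratic form in the residual by its deterministic equivalent (eqs.~\eqref{eqn:refhere0014}, \eqref{eqn:here1006}--\eqref{eqn:here1003a}) and then verifies by a separate algebraic computation — using the fixed-point equation for $\bar m_i$ and the formula for $\bar m_i'$ — that the deterministic residual is identically zero (eqs.~\eqref{eqn:refhere0015} and \eqref{eqn:uplink0}). Your route is shorter and more conceptual (it makes transparent that the Lagrangian collapse is exactly complementary slackness in the limit), whereas the paper's route is self-contained at the level of deterministic equivalents and does not need the downlink half of Theorem~\ref{thm:ULDLSINRConv} as an input. The one point you must actually discharge is the uniformity issue you flag at the end: pointwise a.s.\ convergence of $NK$ terms each weighted by $O(1/N_t)$ does not by itself give convergence of the sum. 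The fix is exactly the device of Appendix D — the moment bounds $\mathbb{E}[|\Lambda_{i,j}-\gamma_{i,j}|^k]\leq C N_t^{-k/2}$ hold with constants uniform in $(i,j)$, so bounding the sup over $(i,j)$ by the sum of $k$-th moments, taking $k$ large enough, and applying Markov plus Borel--Cantelli controls $\sup_{i,j}$ of the gaps; combined with a.s.\ boundedness of $\mu_{i,j}$, $\bar\delta_{i,j}$, $I_{i,j}$ and $D_{i,j}$ (which follow from feasibility and Lemma~\ref{lem:DLInt}), the normalized sums vanish. With that step made explicit, your argument is complete.
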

\begin{proof}
Please refer to Appendix F.
\end{proof}
We can draw the following inference from the result of 
Lemma \ref{lem:LagConv}. \\

{\bf Corollary 2:}
The uplink and downlink power allocations yielded 
by the ROBF algorithm are equal in the asymptotic limit:
\begin{align}
\lim_{N_t,K \to \infty} \sum_{i,j} \vv^H_{i,j} \vv_{i,j} = 
\lim_{N_t,K \to \infty} \sum_{i,j}  \frac{\mu_{i,j} N_0}{N_t}.
\end{align}
We know that the uplink power allocation of the CBF algorithm
satisfies
\begin{align}
\sum_{i,j}  \frac{\lambda_{i,j} N_0}{N_t} = 
\min_{\wv} \max_{\lambda} L (\wv,\lambdav)
\end{align}
If we show that in the large system regime
\begin{align}
\lim_{N_t,K \to \infty} \sum_{i,j}  \frac{\mu_{i,j} N_0}{N_t} = 
\lim_{N_t,K \to \infty} \min_{\wv} \max_{\lambda} L (\wv,\lambdav),
\end{align}
and the duality gap is zero, then this implies that the solution provided by the ROBF is optimal in the asymptotic domain.
In other words, the optimal downlink power (which is the solution of the primal problem) is equal to $\lim_{N_t,K \to \infty} \min_{\wv} \max_{\lambda} L (\wv,\lambdav).$
In order to do so, we prove the following result:
\begin{lemma}
In the large system regime, the sum of uplink power allocation
of the ROBF algorithm converges to 
the sum of uplink power allocation of the CBF algorithm.
\begin{align}
\sum_{i,j} \frac{\mu_{i,j}}{N_t} \asymp \sum_{i,j} \frac{\lambda_{i,j}}{N_t} \qquad \forall i,j.
\end{align}
\end{lemma}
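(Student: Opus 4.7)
The plan is to establish the per-user statement $\lambda_{i,j}\asymp\mu_{i,j}$ in the large system regime, from which summing over $(i,j)$, dividing by $N_t$, and invoking boundedness yields the lemma. My approach exploits the fact that the CBF uplink powers can themselves be characterized by a fixed-point equation which asymptotically coincides with the one defining the ROBF powers.

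First, I would rewrite the CBF fixed point in its ``achieved-SINR'' form. Since the CBF receive filter is MMSE and the algorithm attains the target SINR exactly, equation~\eqref{eqn:sinruse2} gives
\begin{align*}
\lambda_{i,j}=\frac{\gamma_{i,j}}{\tfrac{1}{N_t}\hv^H_{i,i,j}\bigl(\Sigmam^{{\prime}^\lambda}_i+\Id_{N_t}\bigr)^{-1}\hv_{i,i,j}}.
\end{align*}
Applying the deterministic-equivalent machinery (Lemma~\ref{lem:peacock} together with the Silverstein-type trace identity from Appendix~A, already exploited in the proof of \eqref{eqn:ULSINRConv}) to the quadratic form in the denominator yields
\begin{align*}
\tfrac{1}{N_t}\hv^H_{i,i,j}\bigl(\Sigmam^{{\prime}^\lambda}_i+\Id_{N_t}\bigr)^{-1}\hv_{i,i,j}\asymp\sigma_{i,i,j}\tilde{m}_i,
\end{align*}
where $\tilde{m}_i$ is the unique positive solution of the Silverstein equation~\eqref{eqn:FPeqn} with $\mu$ replaced by $\lambda$. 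Combining these displays, $\lambda_{i,j}\asymp\gamma_{i,j}/(\sigma_{i,i,j}\tilde{m}_i)$, which is structurally identical to the ROBF fixed-point system $\mu_{i,j}=\gamma_{i,j}/(\sigma_{i,i,j}\bar{m}_i)$. Uniqueness of the joint fixed point (established in Appendix~B, where the ROBF iteration was shown to be a standard interference function) then delivers $\tilde{m}_i\asymp\bar{m}_i$ and hence $\lambda_{i,j}\asymp\mu_{i,j}$ for every $(i,j)$. Summing and using the $O(1)$ bound on $\lambda_{i,j},\mu_{i,j}$ then transfers this per-user convergence to the scaled sums in the statement.

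The principal obstacle is justifying the RMT step, since the weights $\lambda_{n,k}$ entering $\Sigmam^{{\prime}^\lambda}_i$ are themselves random functions of the very channels appearing in the quadratic form---exactly the circular dependence the paper flags when introducing the ROBF algorithm. I plan to resolve this by first showing that the $\lambda_{n,k}$ remain uniformly bounded almost surely in the large system regime: upper boundedness follows from the feasibility analysis of the previous subsection applied to iteration~\eqref{eqn:lam_org}, while lower boundedness follows from the positivity of $\gamma_{i,j}$ and a crude operator-norm bound on the resolvent quadratic form. Conditional on this uniform boundedness, a subsequential ``freeze-and-extract'' argument---pass to any a.s.-convergent subsequence of $(\lambda_{n,k})$, apply the RMT trace lemma along that subsequence, and use uniqueness of the Silverstein fixed point to identify the limit---rules out the circular dependence and completes the proof.
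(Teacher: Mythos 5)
Your overall strategy --- prove the per-user convergence $\lambda_{i,j}\asymp\mu_{i,j}$ and then sum --- targets the right statement, and the final identification of the two fixed-point systems would indeed follow from uniqueness once a deterministic equivalent for the CBF quadratic form is in hand. The gap sits exactly at the point you flag as the ``principal obstacle,'' and your proposed resolution does not close it. Uniform almost-sure boundedness of the $\lambda_{n,k}$ plus a subsequential ``freeze-and-extract'' argument does not restore the independence that Lemma~\ref{lem:lemma2} requires: the concentration $\xv^H\Am\xv-\frac{1}{N_t}\trace(\Am)\to 0$ is proved for $\Am$ independent of $\xv$, and conditioning on (or passing to a subsequence of) the weights $\lambda_{n,k}$ --- which are deterministic functions of the entire channel realization, including $\hv_{i,i,j}$ itself --- leaves a matrix that is still statistically coupled to the vector in the quadratic form. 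Boundedness controls the magnitude of the weights but not their correlation with $\hv_{i,i,j}$; a bounded but channel-dependent $\Lambdam$ can make the quadratic form deviate from the normalized trace by $O(1)$. (A secondary issue: the vector $(\lambda_{n,k})$ has growing dimension $NK$, so ``pass to an a.s.-convergent subsequence'' would at best have to be replaced by tightness of an empirical distribution, and even then the coupling problem persists.)

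The paper avoids applying RMT to the $\lambda$-weighted resolvent altogether. In Appendix G it perturbs the targets to $\gamma_{i,j}\pm\delta_{i,j}$ and computes the corresponding ROBF powers $\mu_{i,j}(\pm\delta_{i,j})$, which are deterministic functions of the path losses, so Theorem~\ref{thm:ULDLSINRConv} applies to them legitimately. It then uses only the monotonicity of the standard interference function $f_{i,j}(\cdot)$ associated with \eqref{eqn:lam_org} to conclude $\mu_{i,j}(-\delta_{i,j})\le\lambda_{i,j}\le\mu_{i,j}(\delta_{i,j})$ almost surely for $N_t$ large: the upper bound because $\muv(\deltav)$ is asymptotically feasible for target $\gammav$, the lower bound because iterating $f$ from the asymptotically infeasible point $\muv(-\deltav)$ increases monotonically to $\lambdav$. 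Letting $\delta_{i,j}\downarrow 0$ gives the per-user convergence. If you want to salvage your route, you would need to import this sandwich (or an equivalent interchange-of-limits justification) precisely to legitimize the step ``$\frac{1}{N_t}\hv^H_{i,i,j}(\Sigmam^{{\prime}^{\lambda}}_i+\Id_{N_t})^{-1}\hv_{i,i,j}\asymp\sigma_{i,i,j}\tilde m_i$''; as written, that step is unsupported.
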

\begin{proof}
The result is proved in Appendix G.
\end{proof}
Consequently, the solution provided by the ROBF algorithm is optimal to $\lim_{Nt,K \to \infty} \max_{\wv} \min_{\lambda} L (\wv,\lambdav).$ Also, from the result of Corollary 2,
it follows that the downlink power allocation of the ROBF algorithm is also optimal in the large system regime.
This concludes the proof.

\section{Incorporating Individual BS Transmit Power Constraints}
\label{sec:PeakPower}
In this section, we consider a practical constraint in MIMO multi-cell systems, namely the individual BS transmit power constraints.
Recall that the basic optimization problem considered in this work stated in \eqref{eqn:OPT_basic} does not impose this constraint.
In what follows, we propose a heuristic extension of the ROBF algorithm to incorporate this constraint.

The optimization problem in \eqref{eqn:OPT_basic} along with the individual BS transmit power constraints can be stated as follows:
\beqa
 &\dsp  \min_{\wv_{i,j} \ \forall i,j} & \sum_{i,j}  \wv^H_{i,j}\wv_{i,j} \label{eqn:OPT_basic_maxpower}\\
& s.t. & \Lambda^{\text{DL}}_{i,j} \geq \gamma_{i,j} \qquad \forall i,j, \nonumber \\
& &  \sum_{j}  \wv^H_{i,j}\wv_{i,j} \leq P_{i,\max} \qquad \forall i, \nonumber
 \eeqa
 where $P_{i,\max}$ denotes the peak power of BS$_i.$ In what follows, we first state the heuristic extension of the ROBF algorithm to incorporate the individual BS transmit power constraints,
 and then provide the main intuition behind the development of the algorithm. 
\vspace{0.05in} \hrule
\vspace{0.01in}\hrule\vspace{0.05in}
\begin{algorithm}[ROBF with individual BS transmit power constraints] Perform the following steps.
\label{alg:decentralized_maxpow}
\begin{itemize}
\item[1.] Initialize $\tau = 0,$ and $\alpha_i(0) \geq 0 , \forall i$.
\item[2.] Starting from any initial $\mu^0_{i,j}(\tau) > 0 \ \forall i,j$ compute the uplink power allocation as ${\mu}_{i,j}(\tau) \defines \lim_{t \to \infty} \mu^t_{i,j}(\tau),$ where 
\begin{align}
\mu^{t+1}_{i,j}(\tau) =   \frac{\gamma_{i,j}}{\sigma_{i,i,j}\bar{m}^t_{i} (\tau)} \qquad \forall i,j \label{eqn:lam_stil_maxpow}
\end{align}
and $\bar{m}^t_i(\tau)$ is evaluated as  $\bar{m}^{t}_i(\tau) \defines \lim_{p \to \infty} \bar{m}^{t,p}_i (\tau)$  (initializing with any $\bar{m}^{t,0}_i(\tau) > 0, \forall i$)
\begin{align}
& \bar{m}^{t,p}_i(\tau) = \label{eqn:FPeqn_maxpow} \\ &  \LB \frac{1}{N_t}\sum_{n,k }  \frac{\sigma_{i,n,k} \mu^t_{n,k} (\tau)}{1+\sigma_{i,n,k}\mu^{t}_{n,k}(\tau) \bar{m}^{t,{p-1}(\tau)}_i}+1 + \alpha_i(\tau)\RB^{-1}. \nonumber
\end{align}
\item[3.]  Set the receive uplink filter as
\begin{align} \label{eqn:ULBF_asymp_maxpow}
& \hat{\vv}_{i,j}(\tau) = \\ &  \frac{1}{N_0\sqrt{N_t}}\Big( \sum_{n,k} \frac{\mu_{n,k}(\tau) }{N_t} \hv_{i,n,k} \hv^H_{i,n,k}+ (1 + \alpha_i(\tau)) \Id\Big)^{-1} \hv_{i,i,j}.
\end{align}
\item[4.]  Set the transmit downlink beamforming vectors as
$\vv_{i,j}(\tau) = \sqrt{\frac{\bar{\delta}_{i,j} (\tau)}{N_t}}\hat{\vv}_{i,j},$ where
the scaling factors $\bar{\delta}_{i,j}(\tau)$ are calculated as in  \eqref{eqn:lin_eq}.
\item[5.]  . Set $\tau = \tau+1$ and update $\alpha_i(\tau+1)$ as
\begin{align}
& \alpha_i(\tau+1) = \nonumber \\ & \LB \alpha_i(\tau) + \zeta \Big{(} \sum_{j} \vv^H_{i,j}(\tau) \vv_{i,j}(\tau) -  P_{i,\max}\Big{)} \RB^+ \ \forall i,
\label{eqn:alphaupdate}
\end{align}
where $\zeta > 0$ is a small step size. If $\alpha_i = 0, \forall i \in \{1,\dots,N\},$ then terminate. Else if
for all $ i \in \{1,\dots,N\},$ for which $\alpha_i > 0 $ if
\begin{align}
&  \Big{|}  \sum_{j} \vv^H_{i,j}(\tau) \vv_{i,j}(\tau) -  P_{i,\max} \Big{|} \leq \delta,
\end{align}
where $\delta >0$ is a small non zero quantity, then terminate. Else return to Step 2.
\end{itemize}
\end{algorithm}
\hrule 
\vspace{0.01in} \hrule \vspace{0.1in}
We now proceed to provide the main intuition behind this algorithm.

\subsection{ROBF with individual BS transmit power constraints: Intuition}
Consider the optimization problem in \eqref{eqn:OPT_basic_maxpower}. In order to solve this problem, we proceed by considering the Lagrangian associated with \eqref{eqn:OPT_basic_maxpower}, given by
\begin{align}
L^{\prime}(\wv,\lambdav,\alphav) & = \sum_{i,j}  \wv^H_{i,j} \wv_{i,j}  - \sum_{i,j} \frac{\lambda_{i,j}}{N_t} \Big{[}  \frac{|\wv^H_{i,j} \hv_{i,i,j}|^2}{\gamma_{i,j}} \nonumber \\ & -\sum_{(n,k) \neq (i,j)} |\wv_{n,k} \hv_{n,i,j}|^2 -N_0 \Big{]} \nonumber \\ 
& + \sum_i \alpha_i  \Big{[} \sum_{j}  \wv^H_{i,j}\wv_{i,j} - P_{i,\max} \Big{]}\label{eqn:LagDL_maxpow},
\end{align}
where $\alpha_i$ are the Lagrange multipliers associated with the individual BS transmit power constraints, and $\alphav = [\alpha_1,\dots,\alpha_N]^T$ ($\lambda_{i,j}$ has the same
interpretation as in \eqref{eqn:LagDL}). From \eqref{eqn:LagDL}, we can rewrite \eqref{eqn:LagDL_maxpow} as
\begin{align}
L^{\prime}(\wv,\lambdav,\alphav) = L(\wv,\lambdav) +  \sum_i \alpha_i  \Big{[} \sum_{j}  \wv^H_{i,j}\wv_{i,j} - P_{i,\max} \Big{]}. \label{eqn:LagRelation}
\end{align}
Using duality theory, the solution of the dual problem can be given by $\max_{\alphav,\lambdav} \min_{\wv}L^{\prime}(\wv,\lambdav,\alphav).$ 
Let us denote $$g(\alphav) = \max_{\lambdav} \min_{\wv} L^{\prime}(\wv,\lambdav,\alphav).$$
Our approach proceeds by showing that $\sum_{j}  \wv^H_{i,j}\wv_{i,j} - P_{i,\max}$ 
is a sub-gradient direction of the function $g(\alphav).$ Then, the dual problem 
$\max_{\alphav} [\max_{\lambdav} \min_{\wv}L^{\prime}(\wv,\lambdav,\alphav)]$ can be solved by 
updating the Lagrange multiplier  $\alphav$ in the direction of the sub-gradient \cite{BoydConvexOptbook}.

In order to do so, consider two vectors $\alphav^{(1)}$ and $\alphav^{(2)}.$ Note that
\begin{align*}
g(\alphav^{(1)}) = \max_{\lambdav} \min_{\wv} L^{\prime}(\wv,\lambdav,\alphav^{(1)}) = L^{\prime}(\wv^{(1)},\lambdav^{(1)},\alphav^{(1)}),
\end{align*}
where we use $(\wv^{(1)},\lambdav^{(1)})$ to denote the solution to $\max_{\lambdav} \min_{\wv} L^{\prime}(\wv,\lambdav,\alphav^{(1)}).$
Similarly, let $(\wv^{(2)},\lambdav^{(2)})$ denote the solution to $\max_{\lambdav} \min_{\wv} L^{\prime}(\wv,\lambdav,\alphav^{(2)}).$
Consider $g(\alphav^{(1)}).$ Firstly, it can be noted that
\begin{align}
\min_{\wv}  L^{\prime}(\wv,\lambdav,\alphav^{(1)}) \leq  L^{\prime}(\wv^{(2)},\lambdav,\alphav^{(1)}),
\end{align}
and hence 
\begin{align}
\max_{\lambdav}\min_{\wv}  L^{\prime}(\wv,\lambdav,\alphav^{(1)}) \leq \max_{\lambdav} L^{\prime}(\wv^{(2)},\lambdav,\alphav^{(1)}). \label{eqn:refheren5}
\end{align}
Using the definition of $g(\alphav^{(1)})$ in \eqref{eqn:refheren5}, we conclude that 
\begin{align}
g(\alphav^{(1)}) \leq \max_{\lambdav}  L^{\prime}(\wv^{(2)},\lambdav,\alphav^{(1)}). \label{eqn:refherep1}
\end{align}
Using \eqref{eqn:LagRelation} in the right hand side of \eqref{eqn:refherep1}, we obtain, 
\begin{align}
& g(\alphav^{(1)}) \leq \nonumber \\  &\max_{\lambdav}  L(\wv^{(2)},\lambdav,\alphav^{(1)}) +  \sum_i \alpha^{(1)}_i  \Big{[} \sum_{j}  (\wv^{(2)}_{i,j})^H\wv^{(2)}_{i,j} - P_{i,\max} \Big{]}. \label{eqn:refheren2}
\end{align}
Adding and subtracting the term $\sum_i \alpha^{(2)}_i \Big{[} \sum_{j}  (\wv^{(2)}_{i,j})^H\wv^{(2)}_{i,j} - P_{i,\max} \Big{]}$ to the right hand side
of \eqref{eqn:refheren2}, and rearranging the terms, we obtain
\begin{align}
&   g(\alphav^{(1)}) \leq \max_{\lambdav}  L^{\prime}(\wv^{(2)},\lambdav,\alphav^{(2)})  \nonumber\\
& +  \sum_i \LB \alpha^{(1)}_i -  \alpha^{(2)}_i \RB\Big{[} \sum_{j}  (\wv^{(2)}_{i,j})^H\wv^{(2)}_{i,j} - P_{i,\max} \Big{]} \nonumber \\
&   = g(\alphav^{(2)}) +  \sum_i \LB \alpha^{(1)}_i -  \alpha^{(2)}_i \RB\Big{[} \sum_{j}  (\wv^{(2)}_{i,j})^H\wv^{(2)}_{i,j} - P_{i,\max} \Big{]} \label{eqn:refheren3}
\end{align}
where in \eqref{eqn:refheren3}, we have used the definition of $g(\alphav^{(2)}).$
Consequently from \eqref{eqn:refheren3}, it can be concluded that  $\sum_{j}  (\wv^{(2)}_{i,j})^H\wv^{(2)}_{i,j} - P_{i,\max}$ is a sub-gradient direction 
of the function $g(\alphav)$ (by the definition of a sub-gradient). Therefore, updating the Lagrange multiplier $\alpha_i$ in the direction of the sub-gradient solves 
the dual problem $\max_{\alphav} [\max_{\lambdav} \min_{\wv}L^{\prime}(\wv,\lambdav,\alphav)]$ \cite{BoydConvexOptbook}. Therefore, $\alpha_i$ 
must be updated as
\begin{align}
& \alpha_i(\tau+1) = \nonumber \\ & \LB \alpha_i(\tau) + \zeta \Big{(} \sum_{j} \wv^H_{i,j}(\tau) \wv_{i,j}(\tau) -  P_{i,\max}\Big{)} \RB^+ \ \forall i,
\end{align}
where $\zeta > 0$ is a small step size.

Further, for each value of $\alphav(\tau)$ one needs to solve the problem 
$\max_{\lambdav} \min_{\wv} L^{\prime}(\wv,\lambdav,\alphav).$ This can be solved by
repeating the steps of the CBF algorithm for non-zero values of $\alphav$.

Recall that all the above arguments were made considering the case of finite system dimensions. We now follow the same
approach as in the development of ROBF algorithm, i.e., in the large system domain, utilize RMT results to obtain asymptotic approximations of the quantities
involved. The ROBF algorithm with per the BS peak power constraints developed in this section is obtained by following this idea.
The theoretical analysis of the performance of this algorithm along with the proof of convergence will be a topic of future research. 
Herein, we resort to the numerical results stated in
Section \ref{sec:SimRes} in order to show the performance as well as the convergence.

\section{Impact of Imperfect CSI and Pilot Contamination on the performance of ROBF Algorithm}
\label{sec:Pilot}
In this section, we investigate the impact of CSI estimation errors
and pilot contamination on the performance of the ROBF algorithm.
Throughout this section, we assume that the slow fading co-efficient (path loss information) can be accurately estimated at the BS 
(since they remain constant for 
a long period of time, they are easy to estimate, see for e.g. \cite{AshikminMarzetta2012}).
Further, for the fast fading co-efficients, we assume reciprocity between uplink and downlink channels,
and consider the time division duplexing (TDD) model of 
channel estimation, i.e., estimation via uplink pilots.

A comprehensive design of the optimal beamforming vectors in the presence of 
CSI estimation errors and pilot contamination issue is out of the scope of this paper. 
Alternately, we take the following approach: First, we assume that the BS treats the CSI estimate as the true
CSI and implements the ROBF algorithm directly. In this case, we derive the asymptotic equivalent 
of the downlink SINR achieved by the ROBF algorithm. Using numerical results, we investigate the impact of CSI estimation errors 
and pilot contamination on the performance of the ROBF algorithm both in terms of the achieved SINR and downlink power.
Then, exploiting the fact that the BS has the accurate knowledge of the slow fading co-efficient, we propose a heuristic adaptation of the ROBF algorithm in the presence of imperfect CSI and pilot contamination named as the 
modified ROBF (MROBF) algorithm. We show that under the massive MIMO regime, an algorithm in which parameters can be 
computed based on the channel statistics (rather than the fast fading CSI)  such as the MROBF algorithm 
is more robust to CSI estimation and pilot contamination effects.

\subsection{CSI Estimation}
We now describe the CSI training phase for the estimation of
the fast fading co-efficients.
Let $T$ be the length of the channel coherence interval,
a part of which is dedicated for CSI estimation, and $T_{\text{Tr}}$ be the number of symbols used for pilots.
Therefore, the UTs in every cell $i$ transmit $T_{\text{Tr}}$ mutually 
orthogonal pilot symbols to their respective BSs during the training phase.
We represent the pilot sequences used by the $K$ 
UTs in each cell by the matrix $\sqrt{P_{\text{Tr}}} \Phim \in \CC^{T_{\text{Tr}} \times K}, \ (T_{\text{Tr}} \geq K).$ The pilot sequences are repeated in each cell, hence leading
to the issue of pilot contamination.
The matrix $\Phim$ satisfies $\Phim^H \Phim = \Id_K.$

The signal received during the channel training phase denoted by 
$\Ym_{i,\text{Tr}}  \in \CC^{N_t \times K}$ can be written as 
\begin{align}
\Ym_{i,\text{Tr}} =  \Hm_{i,i}\Phim^T+ \big{(} \sum_{n \neq i} \Hm_{i,n}  \big{)} \Phim^T  + \Nm,
\end{align}
where 
$\Hm_{i,n} = [\hv_{i,n,1},\dots,\hv_{i,n,K}]$ and
$\Nm \in \CC^{N_t \times T_{\text{Tr}}}$ with i.i.d. $\mathcal{CN}(0,1)$
elements represents the noise during channel training phase.
The MMSE estimate of $\hv_{i,n,k}$ given $\Ym_{i,\text{Tr}}$
can be given by \cite{MMSE1993}
\begin{align}
\hat{\hv}_{i,n,k} = {\sigma}^\prime_{i,n,k}\LB \sum^N_{b = 1} \hv_{i,b,k} + \frac{\nv_{\text{Tr}}}{\sqrt{P_{\text{Tr}}}} \RB, \label{eqn:MMSEEstimate}
\end{align}
where
${\sigma}^\prime_{i,n,k} = \sigma_{i,n,k} \LB \sum^N_{b = 1} \sigma_{i,b,k} + \frac{1}{P_{\text{Tr}}} \RB^{-1}.$
It can be verified that $\hat{\hv}_{i,n,k}$ is distributed as
\begin{align}
\hat{\hv}_{i,n,k} \sim \mathcal{CN} \LB 0, \sigma^2_{i,n,k} \LB \sum^N_{b = 1} \sigma_{i,b,k} + \frac{1}{P_{\text{Tr}}} \RB^{-1} \RB.
\end{align}
For notational simplicity, let us denote
$\hat{\sigma}_{i,n,k} = \sigma^2_{i,n,k} \LB \sum^N_{b = 1} \sigma_{i,b,k} + \frac{1}{P_{\text{Tr}}} \RB^{-1}.$
Note that
\begin{align}
\hat{\sigma}_{i,n,k} = {\sigma}^\prime_{i,n,k} \sigma_{i,n,k}. \label{eqn:tilderelation}
\end{align}

\subsection*{ROBF Algorithm With CSI Estimates}
Throughout this subsection, we assume that
the BSs assume the CSI estimates to the true channel vector,
and implement the ROBF algorithm. Note however
that the computation of $\mu_{i,j}$ and $\bar{\delta}_{i,j}$ in \eqref{eqn:lam_stil} and
\eqref{eqn:lin_eq} do not require the estimates of the fast fading co-efficient, and hence they 
can be implemented directly.
The uplink receive filter with imperfect CSI denoted by $\hat{\vv}^{\text{est}}_{i,j}$ can be 
formulated as
\begin{align}
\hat{\vv}^{\text{est}}_{i,j} = \sqrt{\frac{1}{N_t}}  \Psi^{-1}_i \hat{\hv}_{i,i,j} \label{eqn:MMSE}
\end{align}
where
\begin{align}
\Psi_i = \sum_{n,k} \frac{\mu_{i,k}}{N_t}\hat{\hv}_{i,n,k} \hat{\hv}^H_{i,n,k} + \Id_{N_t} \label{eqn:PsiDef}.
\end{align}
The dowlink beamforming vector denoted by ${\vv}^{\text{est}}_{i,j}$ can be then computed as
\begin{align}
{\vv}^{\text{est}}_{i,j} = \sqrt{\frac{\bar{\delta}_{i,j}}{N_t}}\hat{\vv}^{\text{est}}_{i,j}. \label{eqn:RZFBF}
\end{align}

We now investigate the achieved SINR in the downlink 
under the ROBF algorithm with imperfect CSI. The expression
for the dowlink SINR can be given as in \eqref{eqn:DLSINR_ROBF}, by replacing 
$\vv_{i,j}$ with $\hat{\vv}^{\text{est}}_{i,j}.$ The asymptotic equivalent of the dowlink SINR can be obtained by analyzing it in the large system regime. 
\begin{theorem}
The achieved SINR in the downlink by the ROBF algorithm in the 
presence of imperfect CSI converges almost surely to
the right hand side of \eqref{eqn:DLSINRImpCSIConv} in the large
system regime, where the term
$\hat{G}_{n,n,k}$ is defined as 
\begin{align}
\hat{G}_{n,n,k} = \frac{1}{1+\xi_{n,k} \hat{\sigma}_{n,n,k} \bar{m}^{\text{est}}_n} \ \text{and} \
\xi_{i,k} = \frac{ \sum^N_{n = 1} \sigma_{i,n,k} \mu_{n,k} } { \sigma_{i,i,k}} .
\end{align}
Further, $\bar{m}^{\text{est}}_n$ can be computed as the solution 
to the fixed point equation 
\begin{align}
\bar{m}^{\text{est}}_n = \LB \frac{1}{N_t} \sum^K_{k = 1} \frac{\xi_{n,k}  \hat{\sigma}_{n,n,k}}{1+\xi_{n,k} \hat{\sigma}_{n,n,k} \bar{m}^{\text{est}}_n}+1\RB^{-1}, \label{eqn:StilImpCSI}
\end{align}
and $(\bar{m}^{\prime}_n)^{\text{est}}$ can be computed from 
$\bar{m}^{\text{est}}_n$
as
\begin{align}
(\bar{m}^{\prime}_n)^{\text{est}} = \frac{(\bar{m}^{\text{est}}_n)^2}{1-\frac{1}{N_t}  \sum^K_{k=1} \frac{(\hat{\sigma}_{n,n,k} \xi_{n,k}\bar{m}^{\text{est}}_n)^2}{(1+\hat{\sigma}_{n,n,k} \xi_{n,k}\bar{m}^{\text{est}}_n)^2}}. \label{eqn:stilderImpCSI}
\end{align}
\end{theorem}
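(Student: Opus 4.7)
The plan is to compute a deterministic equivalent for each random quantity appearing in $\Lambda^{\text{DL}}_{i,j}$ evaluated at $\vv^{\text{est}}_{i,j}$, using the trace lemma, rank-one perturbation lemma, and Stieltjes-transform fixed-point machinery of Appendix A. The key structural simplification driving the analysis is that pilot contamination makes the estimated channels $\hat{\hv}_{i,n,k}$ (as $n$ ranges over cells sharing pilot $k$) \emph{perfectly aligned}: from \eqref{eqn:MMSEEstimate} one can write $\hat{\hv}_{i,n,k}=\sqrt{\hat{\sigma}_{i,n,k}}\,\bar{\hv}_{i,k}$ for a single standard Gaussian vector $\bar{\hv}_{i,k}\in\mathbb{C}^{N_t}$. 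Consequently $\Psi_i$ collapses to at most $K$ rank-one terms, one per pilot group, and the standard deterministic-equivalent result immediately yields \eqref{eqn:StilImpCSI} for $\bar{m}^{\text{est}}_i\asymp(1/N_t)\tr\Psi_i^{-1}$ together with the derivative identity \eqref{eqn:stilderImpCSI} for $(\bar{m}')^{\text{est}}_n$.

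To handle the numerator $|(\vv^{\text{est}}_{i,j})^H\hv_{i,i,j}|^2$, I would first decompose $\hv_{i,i,j}=\hat{\hv}_{i,i,j}+\tilde{\hv}_{i,i,j}$ via MMSE orthogonality, so that $\tilde{\hv}_{i,i,j}$ is Gaussian and independent of $\bar{\hv}_{i,j}$, and hence of $\Psi_i$. The cross term involving $\tilde{\hv}_{i,i,j}$ vanishes almost surely by the trace lemma. Applying the Sherman--Morrison identity to remove the pilot-$j$ rank-one contribution from $\Psi_i$ decouples $\bar{\hv}_{i,j}$ from the residual $\Psi_{i,-j}$, after which the trace lemma yields the deterministic equivalent $(1/N_t)\hat{\hv}_{i,i,j}^H\Psi_i^{-1}\hat{\hv}_{i,i,j}\asymp\hat{\sigma}_{i,i,j}\,\bar{m}^{\text{est}}_i\,\hat{G}_{i,i,j}$. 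An identical argument delivers the asymptotic norm $\|\hat{\vv}^{\text{est}}_{i,j}\|^2$ needed for the noise normalization.

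For each interference cross term $|(\vv^{\text{est}}_{n,k})^H\hv_{n,i,j}|^2$ the analysis splits into two qualitatively different regimes. When the pilot indices differ ($k\neq j$), $\hv_{n,i,j}$ is independent of $\bar{\hv}_{n,k}$, and after rank-one removal of the $k$-th pilot group from $\Psi_n$ the argument of Lemma \ref{lem:DLInt} carries over, producing a contribution of the form $\bar{G}_{n,i,j}\bar{G}_{n,n,k}(\bar{m}')^{\text{est}}_n/N_t$ with the imperfect-CSI path losses substituted. The delicate case is $k=j$: the vector $\hv_{n,i,j}$ is a summand of the observation defining $\bar{\hv}_{n,j}$, so I would write $\hv_{n,i,j}=(\sigma_{n,i,j}/\sigma_{n,n,j})\hat{\hv}_{n,n,j}+\hv^{\perp}_{n,i,j}$ with $\hv^{\perp}_{n,i,j}$ an MMSE residual independent of $\bar{\hv}_{n,j}$. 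The coherent part is evaluated by the same Sherman--Morrison computation as in the numerator, producing a non-vanishing pilot-contamination ``beamforming'' term, while the orthogonal residual contributes a conventional RMT-type interference.

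The principal obstacle is this pilot-coupled case: the coherent terms at $k=j$ must be tracked simultaneously across all $N$ cells, and one must verify that summing them with the incoherent interference, adding the noise, and dividing by the numerator collapses into the compact right-hand side of \eqref{eqn:DLSINRImpCSIConv}. This reduces to an algebraic identity that follows after substituting the definitions of $\hat{G}_{n,n,k}$, $\xi_{n,k}$, $\bar{m}^{\text{est}}_n$ and $(\bar{m}')^{\text{est}}_n$. Almost-sure convergence of the full SINR ratio is then obtained by applying the continuous-mapping theorem to the separately established almost-sure equivalents of numerator and denominator.
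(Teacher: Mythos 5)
Your overall architecture matches the paper's Appendix H: the pilot\-/alignment collapse $\hat{\hv}_{i,n,k}=(\sigma_{i,n,k}/\sigma_{i,i,k})\hat{\hv}_{i,i,k}$ reducing $\Psi_i$ to $K$ rank-one terms with loadings $\xi_{i,k}$ (which immediately gives \eqref{eqn:StilImpCSI} and \eqref{eqn:stilderImpCSI}), the Sherman--Morrison/trace-lemma evaluation of the useful signal, and the identification of the non-vanishing coherent term for the pilot-sharing interferer $k=j$ are all exactly what the paper does. There is, however, a concrete gap in your treatment of the $k\neq j$ interference terms. You argue that, since $\hv_{n,i,j}$ is independent of $\bar{\hv}_{n,k}$, removing the $k$-th pilot group from $\Psi_n$ lets the perfect-CSI argument of Lemma \ref{lem:DLInt} carry over and yields the simple product form $\bar{G}_{n,i,j}\bar{G}_{n,n,k}(\bar{m}^{\prime}_n)^{\text{est}}/N_t$. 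This is not correct: even after that removal, $\Psi_n^{-1}$ still contains the rank-one contribution $\xi_{n,j}\hat{\xv}_{n,n,j}\hat{\xv}^H_{n,n,j}$, and $\hat{\hv}_{n,n,j}$ is \emph{correlated} with the victim channel $\hv_{n,i,j}$ through the contaminated pilot observation. Lemma \ref{lem:lemma2} therefore cannot be applied directly to $\xv^H_{n,i,j}\Psi_n^{-1}(\cdot)\Psi_n^{-1}\xv_{n,i,j}$; doing so produces only the $\sigma_{n,i,j}$ part of $B_{n,k}$ and misses the two correction terms $+\hat{\sigma}_{n,n,j}(\xi_{n,j}\sigma_{n,i,j}\sigma^{\prime}_{n,n,j}\hat{G}_{n,n,j}\bar{m}^{\text{est}}_n)^2$ and $-2\mu_{n,j}(\sigma_{n,i,j}\sigma^{\prime}_{n,n,j})^2\hat{G}_{n,n,j}\bar{m}^{\text{est}}_n$ that appear in the statement of \eqref{eqn:DLSINRImpCSIConv}.

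The missing step is the one the paper performs in \eqref{eqn:refimm11}--\eqref{eqn:refimm3}: before invoking the trace lemma, expand
$\Psi_n^{-1}=\Psi_{n,j}^{-1}-\xi_{n,j}\Psi_{n,j}^{-1}\hat{\xv}_{n,n,j}\hat{\xv}^H_{n,n,j}\Psi_{n,j}^{-1}\big/\big(1+\xi_{n,j}\hat{\xv}^H_{n,n,j}\Psi_{n,j}^{-1}\hat{\xv}_{n,n,j}\big)$
on both sides of the quadratic form, so that the remaining resolvent is independent of $\xv_{n,i,j}$. Equivalently, apply your own MMSE orthogonality split of $\hv_{n,i,j}$ to \emph{every} interference term, not only to the $k=j$ one; the resulting square and cross terms are precisely the extra pieces of $B_{n,k}$. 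With that repair, the rest of your plan --- the numerator, the coherent $k=j$ term, and the final assembly of the ratio via Lemma \ref{lem:peacock} --- is sound and coincides with the paper's argument.
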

\begin{proof}
The proof is provided in Appendix H.
\end{proof}

\begin{figure*}
\begin{align}
& (\Lambda^{\text{DL}}_{i,j}(\muv))_{\text{est}}  \asymp \frac{\bar{\delta}_{i,j} (\hat{\sigma}_{i,i,j} \hat{G}_{i,i,j} \bar{m}^{\text{est}}_i)^2 }{\sum^N_{ \substack{n = 1 \\ n \neq i}} \bar{\delta}_{n,j}  \LB \sigma_{n,i,j} {\sigma}^\prime_{n,n,j} \hat{G}_{n,n,j} \bar{m}^{\text{est}}_n \RB^2 + \frac{1}{N_t}\sum^N_{n = 1} \sum^K_{ \substack{k = 1 \\ k \neq j}} \bar{\delta}_{n,k} \hat{\sigma}_{n,n,k}  \hat{G}^2_{n,n,k} (\bar{m}^{\prime}_n)^{\text{est}} B_{n,k} }, \label{eqn:DLSINRImpCSIConv} \\
& \text{where} \nonumber \\
& B_{n,k} = \sigma_{n,i,j}  + \hat{\sigma}_{n,n,j}  ( \xi_{n,j}  {\sigma}_{n,i,j}  {\sigma}^\prime_{n,n,j} \hat{G}_{n,n,j} {\bar{m}^{\text{est}}}_n)^2  
- 2 \mu_{n,j}({\sigma}_{n,i,j}  {\sigma}^\prime_{n,n,j})^2 \hat{G}_{n,n,j} {\bar{m}^{\text{est}}}_n. \\
& \frac{ \mu_{i,j}(\hat{\sigma}_{i,i,j} \hat{G}_{i,i,j} \bar{m}^{\text{est}}_i)^2 }{\sum^N_{ \substack{n = 1 \\ n \neq i}} \mu_{n,j}\LB \sigma_{i,i,j} {\sigma}^\prime_{i,n,j} \hat{G}_{i,i,j} \bar{m}^{\text{est}}_i \RB^2 + \frac{1}{N_t}\sum^N_{n = 1} \sum^K_{ \substack{k = 1 \\ k \neq j}}  \mu_{n,k}\hat{\sigma}_{i,i,j}  \hat{G}^2_{i,i,j} (\bar{m}^{\prime}_i)^{\text{est}} B^\prime_{n,k} + \hat{\sigma}_{i,i,j} \hat{G}^2_{i,i,j} (\bar{m}^{\prime}_i)^{\text{est}} } , 
 \label{eqn:ULSINRImpCSIConv} \\
 & \text{where} \nonumber \\
& B^\prime_{n,k} = \sigma_{i,n,k}  + \hat{\sigma}_{i,i,k}  ( \xi_{i,k}  {\sigma}_{i,n,k}  {\sigma}^\prime_{i,i,k} \hat{G}_{i,i,k} {\bar{m}^{\text{est}}}_i)^2  
- 2 \mu_{i,k}({\sigma}_{i,n,k}  {\sigma}^\prime_{i,i,k})^2 \hat{G}_{i,i,k} {\bar{m}^{\text{est}}}_i.
\end{align}
\end{figure*}

\subsection{Modified ROBF (MROBF) Algorithm}
In this subsection, we propose a heuristic adaptation of the ROBF algorithm addressed as Modified ROBF (MROBF) algorithm
which is designed to accommodate the effects of imperfect CSI and pilot contamination. 

First, note that naive application of the ROBF algorithm may not yield good performance in 
the presence of imperfect CSI and pilot contamination. The following reasoning provides an intuitive understanding for this:
Consider the interference arising from the signal of $UT_{n,k},$ ($n \neq i, k \neq j)$ at the UT$_{i,j}$ (which
does not use the same pilot as UT$_{i,j}$), i.e., the term $|\wv_{n,k} \hv_{n,i,j}|^2.$
Following the derivation of Appendix H, it can be verified that this interference terms converges to 
 \begin{align}
 |\wv_{n,k} \hv_{n,i,j}|^2 \asymp \frac{\bar{\delta}_{n,k}}{N_t}\hat{\sigma}_{n,n,k}  \hat{G}^2_{n,n,k} (\bar{m}^{\prime}_n)^{\text{est}} B_{n,k}.
 \label{eqn:IntwoPilot}
 \end{align}
Now consider the interference arising from the signal of $UT_{n,j},$ ($n \neq i)$ at the UT$_{i,j}$ (which
reuses the same pilot as UT$_{i,j}$), i.e., the term $|\wv_{n,j} \hv_{n,i,j}|^2.$
It can be verified that asymptotically, this interference terms converges to 
 \begin{align}
 |\wv_{n,j} \hv_{n,i,j}|^2 \asymp \bar{\delta}_{n,j} \LB \sigma_{n,i,j} {\sigma}^\prime_{n,n,j} \hat{G}_{n,n,j} \bar{m}^{\text{est}}_n \RB^2 \label{eqn:IntwithPilot}.
 \end{align}
We note that the two interference terms are significantly different. In particular, \eqref{eqn:IntwoPilot}  is scaled by a factor of $\frac{1}{N_t},$
(and for large $N_t,$ this has a very low value). However, the term in \eqref{eqn:IntwithPilot} is not scaled. 
This is due to the fact that the BS cannot distinguish between the channels of 
UT$_{i,j}$ and UT$_{n,j}$ due to the pilot contamination effect.
This implies that naive application of the ROBF algorithm may significantly underestimate the
interference arising out of the UTs that reuse the same pilots. The algorithm performance can be enhanced by 
carefully accounting for these issues. This is indeed the main intuition behind the MROBF algorithm.

In the MROBF algorithm, we consider that the BS does not alter the structure of the beamforming 
vector, i.e., the BS retains the RZF beamforming vector as in \eqref{eqn:RZFBF}.
Nevertheless, the performance gain can still be obtained by redesigning the uplink power allocation,
and computation of $\bar{\deltav}.$ We redesign the uplink power allocation on similar lines as that of the ROBF algorithm.

First note that in the case of perfect CSI, the achieved uplink SINR in the asymptotic limit is given by
$\sigma_{i,i,j} \mu_{i,j} \bar{m}_{i}.$
Further, the uplink power allocation is chosen to satisfy $\gamma_{i,j} = \sigma_{i,i,j} \mu_{i,j} \bar{m}_{i},$ or
\begin{align}
\mu_{i,j} =   \frac{\gamma_{i,j}}{\sigma_{i,i,j}\bar{m}_{i}} \qquad \forall i,j \label{eqn:mucomp}.
\end{align}

We use a similar argument for the computation of the uplink power allocation in the case of imperfect CSI.
In order to do so, consider the uplink SINR with uplink receive filter formulated as in \eqref{eqn:MMSE} 
\begin{align}
(\Lambda^{\text{UL}}_{i,j}(\muv))_{\text{est}} = \frac{\frac{\mu_{i,j}}{N_t}|\hat{\vv}^{\text{est}}_{i,j}\hv_{i,i,j}|^2}{\sum_{(n,k) \neq (i,j)}\frac{\mu_{n,k}}{N_t}|\hat{\vv}^{\text{est}}_{i,j}\hv_{i,n,k}|^2+ ||\hat{\vv}^{\text{est}}_{i,j}||_2^2} \label{eqn:ULSINR_MROBF}.
\end{align}
Let us examine the uplink SINR in the large system domain. 
Using the derivation similar to Appendix H, by replacing the individual terms of \eqref{eqn:ULSINR_MROBF} by their asymptotic equivalents,
the uplink SINR in the large system limit can be approximated by \eqref{eqn:ULSINRImpCSIConv}\footnote{Note the this approach does not constitute a formal proof of convergence 
of the uplink SINR. Since the algorithm developed itself is heuristic, the proof of convergence has been omitted.}.
For convenience, let us denote that denominator of  \eqref{eqn:ULSINRImpCSIConv} by $\mathcal{I}^{\text{UL}}_{\text{asymp}}.$
Following the same approach as in \eqref{eqn:mucomp}, we compute the uplink power allocation as the solution 
to the following set of equations:
\begin{align}
 \mu_{i,j} = \frac{\gamma_{i,j} \mathcal{I}^{\text{UL}}_{\text{asymp}}}{(\hat{\sigma}_{i,i,j} \hat{G}_{i,i,j} \bar{m}^{\text{est}}_i)^2} \qquad \forall i,j. \label{eqn:mucompImp}
\end{align}
The $\mu_{i,j}$ that satisfies the set of equations \eqref{eqn:mucompImp} can be computed using the following iterative method: 
Starting from any initial $\mu^0_{i,j} > 0 \ \forall i,j$ the uplink power allocation is given by ${\mu}_{i,j} \defines \lim_{t \to \infty} \mu^t_{i,j},$ where 
\begin{align}
 \mu^{t+1}_{i,j} = \frac{\gamma_{i,j} (\mathcal{I}^{\text{UL}}_{\text{asymp}})^t }{(\hat{\sigma}_{i,i,j} \hat{G}^t_{i,i,j} (\bar{m}^{\text{est}}_i)^t)^2} \qquad \forall i,j.  \label{eqn:IterationsMROBF}
\end{align}
In \eqref{eqn:IterationsMROBF}, $(\mathcal{I}^{\text{UL}}_{\text{asymp}})^t, \hat{G}^t_{i,i,j}$ and  $(\bar{m}^{\text{est}}_i)^t$ denote the respective
quantities computed at $\mu^t_{i,j}, \ \forall i,j.$ We observe that numerically that the iterations of \eqref{eqn:IterationsMROBF}
converges to the solution of \eqref{eqn:mucompImp} (see numerical results in Section \ref{sec:SimRes}).

Finally, $\bar{\deltav}$ can be computed as a solution to the following linear equations
\begin{align}
\bar{\deltav}  = \Deltam^{-1} \gammav \label{eqn:DLAdaptMROBF}
\end{align}
where the matrix $\Deltam \in \CC^{NK \times NK} $ is defined as
\begin{equation}
\Deltam = \left(
\begin{array}{ccc}
\Deltam^{1,1} & \ldots & \Deltam^{1,N} \\
\vdots  & \ddots & \vdots \\
\Deltam^{N,1} & \ldots & \Deltam^{N,N}\\
\end{array} \right)
\end{equation}
where each submatrix $\Deltam^{i,j} \in \CC^{K \times K}$ is given by 
\begin{equation} 
\Deltam^{i,n}_{j,k} \defines \begin{cases} (\hat{\sigma}_{i,i,j} \hat{G}_{i,i,j} \bar{m}^{\text{est}}_i)^2 , & n=i ,\ k = j \\
                     -\LB \sigma_{n,i,j} {\sigma}^\prime_{n,n,j} \hat{G}_{n,n,j} \bar{m}^{\text{est}}_n \RB^2 ,& n \neq i,  k = j \\
                     \frac{-1}{N_t} \hat{\sigma}_{n,n,k}  \hat{G}^2_{n,n,k} (\bar{m}^{\prime}_n)^{\text{est}} B_{n,k} ,& n = i,  k \neq j \\  & \text{and}  \ n \neq i,  k \neq j.
                   \end{cases} 
\end{equation}

\section{Numerical Results}
\label{sec:SimRes}
In this section, we present some numerical results to demonstrate the performance of the ROBF algorithm in a massive MIMO setting.

\begin{figure}[!t]
 \includegraphics[width=3 in]{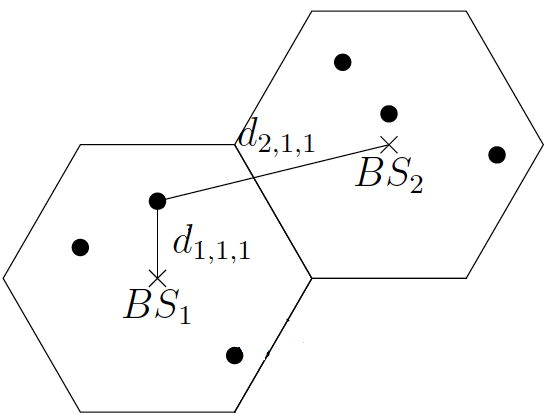} 
 \caption{Hexagonal cellular network consisting of $2$ cells.}
   \label{fig:2CellNwk} 
  \end{figure}
  
  \begin{figure*}[!t]
\begin{minipage}[t]{3.4 in}
\centering
\includegraphics[width=3.6 in, height  = 2.7 in]{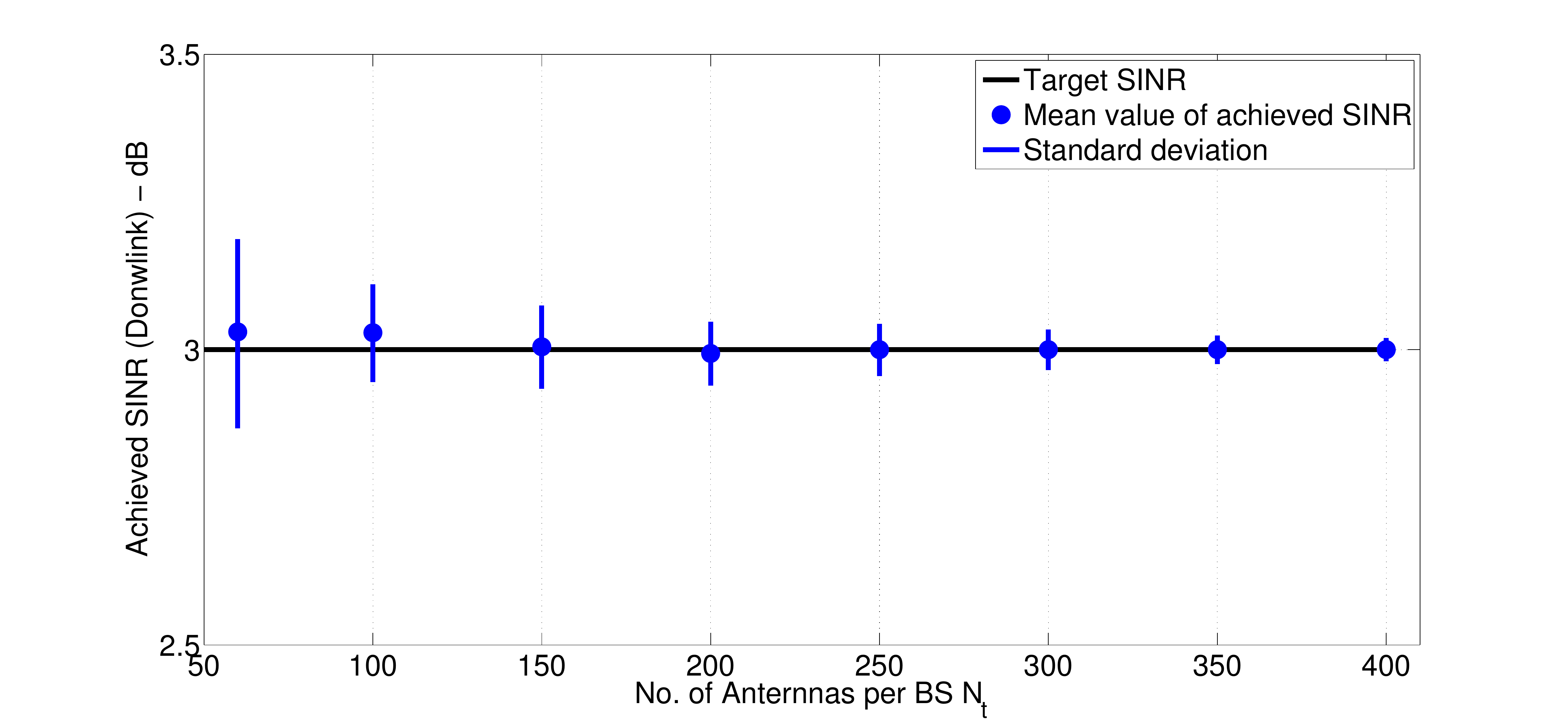}  
\caption{Fluctuations of the downlink SINR (ROBF algorithm) around the target value. $K = 25$ UTs per cell and target SINR = $3$ dB per UT.}
\label{fig:DLSINR_Variance}
\end{minipage}
\begin{minipage}[t]{3.4 in}
\centering
\includegraphics[width=3.6 in, height  = 2.7 in]{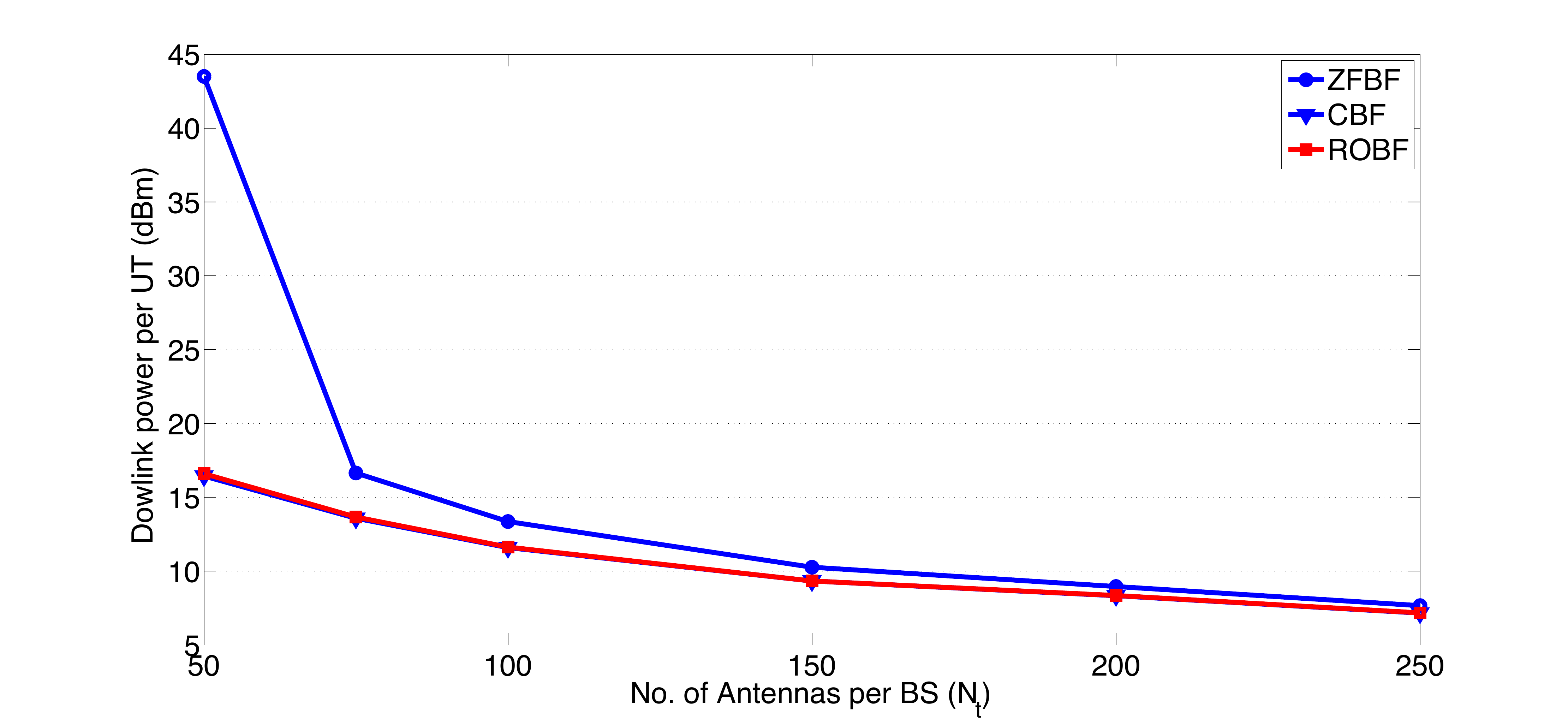}  
\caption{Comparison of downlink power per UT as a function of the number of antennas per BS. $K = 25$ UTs per cell and target rate = $3$ bits/s/Hz ($\log(1+\gamma_{i,j})$) per UT.}
\label{fig:ZFBFVsROBF_Nt}
\end{minipage}
\begin{minipage}[t]{3.4 in}
\centering
\includegraphics[width=3.6 in, height  = 2.7 in]{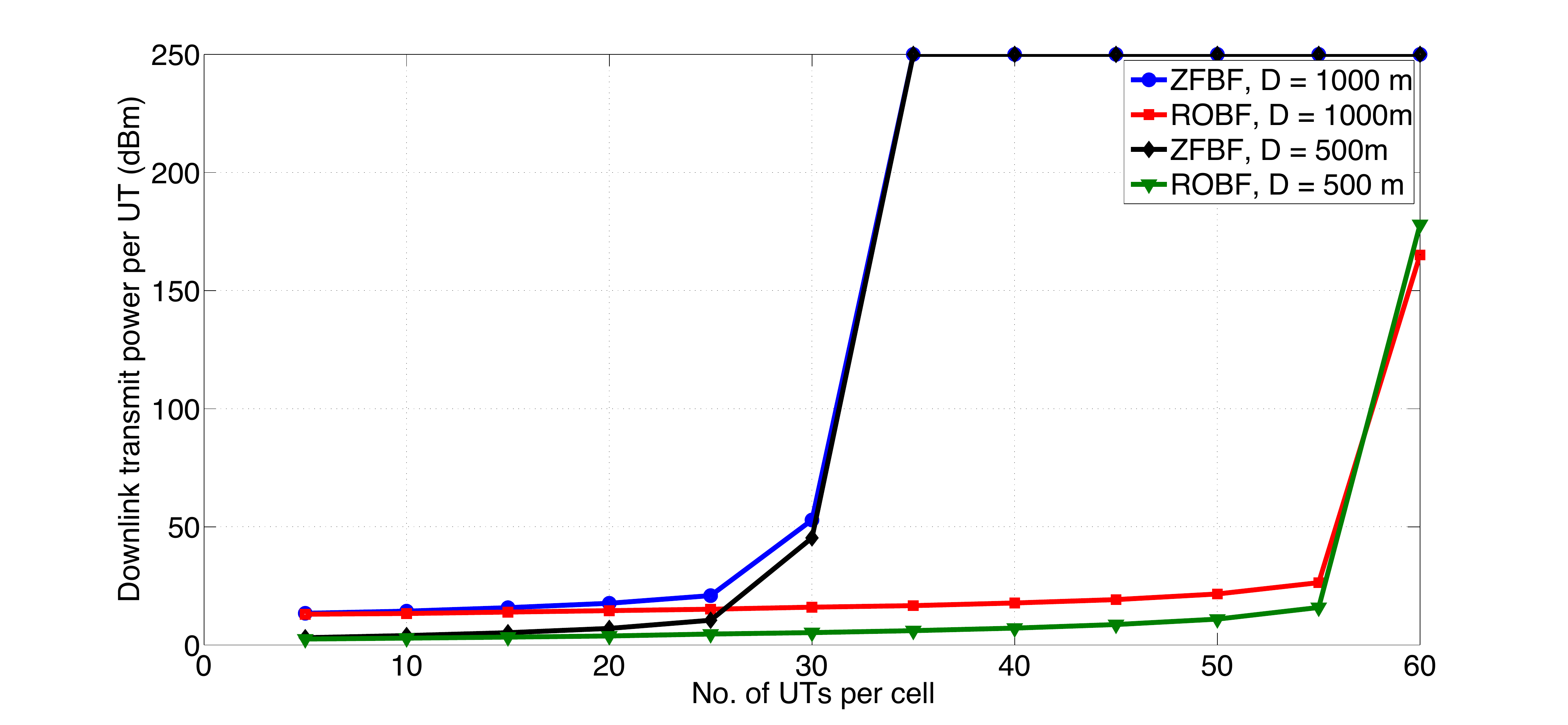}  
\caption{Comparison of downlink power per UT as a function of the number of UTs per cell. $N_t = 60$ antennas/ BS and target rate = $3$ bits/s/Hz ($\log(1+\gamma_{i,j})$) per UT. $D$ represents the distance between the two BSs.}
\label{fig:ROBFVsZFBF_Pathloss_60UTs}
\end{minipage}
\begin{minipage}[t]{3.4 in}
\centering
 \includegraphics[width=3.6 in, height  = 2.7 in]{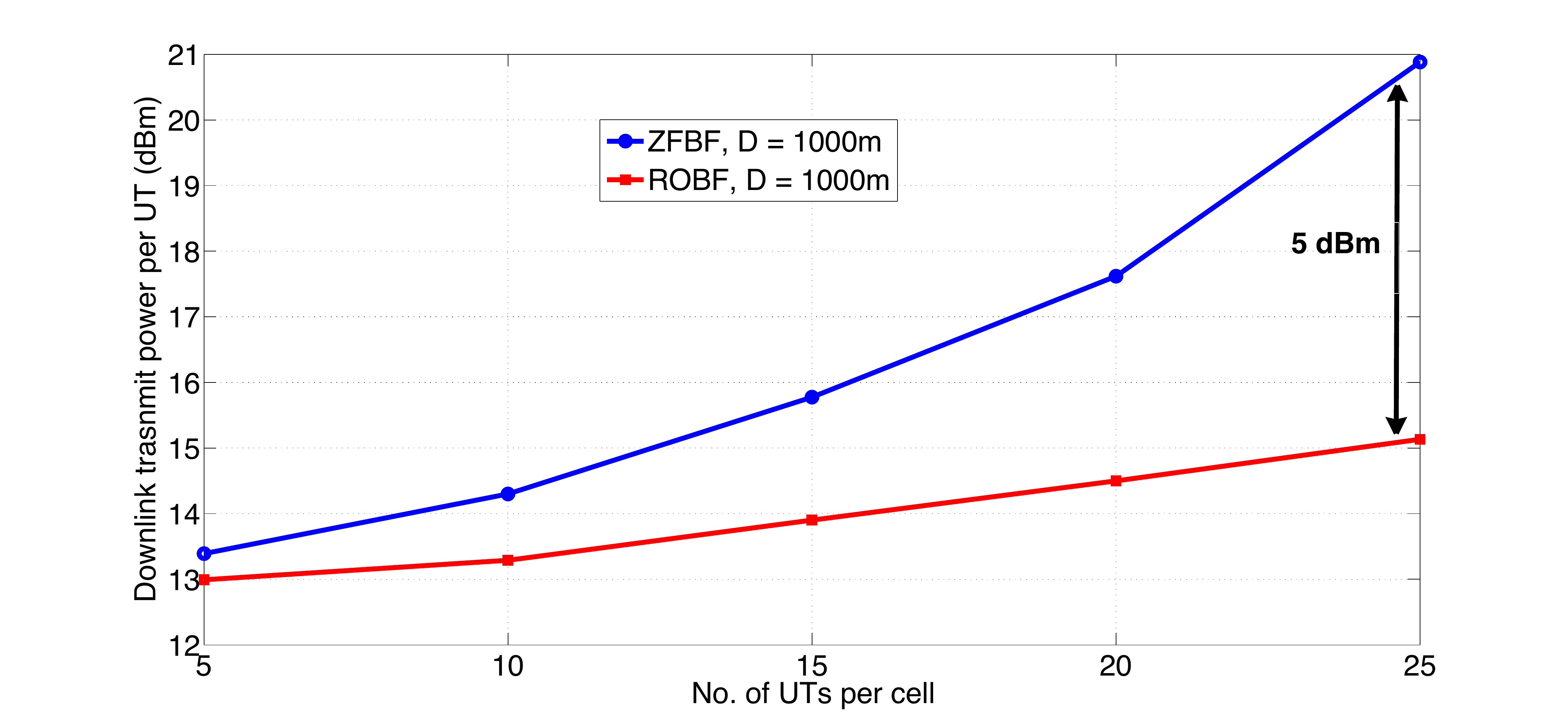}  
\caption{\ \ Comparison of downlink power per UT as a function of the $K$. \ \ Settings identical to Figure \ref{fig:ROBFVsZFBF_Pathloss_60UTs}.}
\label{fig:WithArrowROBFVsZFBF_Pathloss_25UTs}
\end{minipage}
\end{figure*}
We consider a hexagonal cellular system consisting of $2$ cells as shown in Figure~\ref{fig:2CellNwk}.
We use the distance dependent path loss model in which the
path loss from BS $i$ to UT$_{j,k}$ is given by
$$\sigma_{i,j,k} = \frac{d_0}{{d_{i,j,k}}^{\beta}},$$ 
where $d_{i,j,k}$ represents the distance between BS $i$ to UT$_{j,k}.$
$\beta$ represents the path loss exponent which is taken to be $3.6$
in all the simulation scenarios. $d_0$ represents the channel attenuation at reference point and is taken to $10^{-3.53}.$
Location of the UTs are obtained by generating uniform 
random numbers inside each hexagonal cell.
The distance $d_{\min} \leq d_{i,j,k} \leq d_{\max},$ where
$d_{\min}$ and $d_{\max}$ represent the minimum and the maximum distance 
between the UTs to the BSs of their respective cells. In our simulation, 
$d_{\min} = 20$ m and $d_{\max} = \{500,1000\}$ m depending on the distance between the two BSs. The noise power is taken to be $-104$ dBm over the operating bandwidth. All numerical results are plotted 
by varying the positions of the UTs inside the cell over $500$ iterations. 

First, we examine the performance of the ROBF algorithm in satisfying 
the UT SINR constraints. 
Accordingly, we plot the variation of the uplink and downlink SINR (averaged across the UTs) as a function of the number of antennas per 
BS for $1000$ channel realizations in Figure \ref{fig:DLSINR_Variance}. 
Herein, $K = 50$ UTs per cell and target SINR is $3$ dB per UT.
The horizontal line represents the target SINR which is $3$ dB. 
The bubbles represent the average achieved SINR values (averaged over the channel realizations), and 
the vertical lines around this bubble represent the variation of the achieved downlink SINR around the average value.
It can be observed that even for moderate number of antennas, e.g. $60$ antennas per BS (comparable to the number of UTs), the target SINR constraints are satisfied for almost every channel realization (since the fluctuations are small). This implies that the ROBF algorithm nearly optimal under this setting in terms of satisfying the SINR constraints. 

Next, we investigate the downlink power of the ROBF algorithm and compare it with the CBF algorithm and ZF beamforming\footnote{Note that for eigen beamforming and RZF, after fixing the beamforming direction, the power allocation needed to satisfy the SINR constraints does not yield to a simple structure (as in the case of ZF). In order to avoid the additional complexities associated with power allocation, only ZF is used for comparison purposes. Nevertheless, we expect the gains of the ROBF algorithm to hold true with respect to the other beamforming techniques as well.} (denoted by $\sqrt{P^{\text{ZF}}_{i,j}} \wv^{\text{ZF}}_{i,j}$, where $\wv^{\text{ZF}}_{i,j}$ is unit norm vector).
Note that our simulation setting, 
the BS performs ZF beamforming to null the interference UTs in both the cells (and not merely the UTs in its own cells). After nulling the interference, 
BS performs appropriate power allocation in order to meet the target SINR constraint of the UTs $$P^{\text{ZF}}_{i,j} = \frac{\gamma_{i,j} N_0 }{|\hv_{i,i,j} \wv^{\text{ZF}}_{i,j} |^2}.$$ 
We plot the downlink power per UT (i.e. $\frac{\text{Sum downlink power}}{K}$) as a function of the number of antennas per BS in Figure \ref{fig:ZFBFVsROBF_Nt}. Herein, $K = 25$ UTs per cell and target rate = $3$ bits/s/Hz ($\log(1+\gamma_{i,j})$) per UT.
The following conclusions can be drawn. First, the downlink power expended 
by ROBF very closely matches that of CBF, indicating that the ROBF is nearly optimal in terms of minimizing the downlink power as well.
Next, it can be seen that for moderate number of antennas per BS, e.g. $50-100$ antennas, ROBF provides substantial power gains 
as compared to ZF beamforming. This result highlights the gains obtained by optimizing the power allocation (as compared to naively nulling out interference to all the UTs in the system). When
$N_t$ is very large, the scaling of $\frac{1}{N_t}$ starts to play a dominant role, and ROBF ceases to provide substantial gains over ZF. 

Finally, we examine the efficacy of ROBF algorithm in terms of its ability to support greater number of UTs per cell. This is accomplished by fixing the number of antennas per BS to $60,$ and plotting the downlink power as a function of the number of UTs per cell in Figure \ref{fig:ROBFVsZFBF_Pathloss_60UTs}. It can be seen that beyond a certain number of UTs, the downlink power corresponding to the ZF beamforming becomes unbounded. In fact, this happens at $K = 30$ UTs per cell (note that $N_t = 2K$ at this point indicating that the BS has used up all its degrees of freedom).
At the same time, the downlink power of the ROBF algorithm grows unbounded at a much later stage, i.e. around $55$  UTs per cell. 
Moreover, we can also see that as $D$ (distance between the BSs) increases from $500$ m to $1000$ m, the transmit power decreases as expected. This is due to the reduced effect of inter-cell interference. 
Lastly, to quantify the transmit power gains obtained by the ROBF algorithm over ZF, we zoom into the previous plot in Figure \ref{fig:WithArrowROBFVsZFBF_Pathloss_25UTs}. Once again, as noted before, ROBF algorithm provides substantial power gains compared to ZF, a gain of 
$5$ dBm per UT for $K = 25$ UTs per cell.

\subsection*{ROBF with Individual BS Transmit Power Constraints}
In this subsection, we illustrate the performance of the ROBF algorithm with the individual BS transmit power constraints developed in Section \ref{sec:PeakPower}. In particular, we show the convergence as well as the performance results of this algorithm.
For illustration purposes, we consider a $2$ cell scenario and an identical system set up as in the previous case.
The peak power of BS$_1$ is set to $20$ dB and the peak power of BS$_2$ is set to $10$ dB.
We run the algorithm of Section \ref{sec:PeakPower} until the stopping criteria mentioned in Step 5 is satisfied. 
The transmit power per BS is plotted against the number of iterations in Figure \ref{fig:ROBF_With_MaxPower}.
The horizontal dashed lines in this figure correspond to the transmit powers obtained by running the original ROBF algorithm (without the individual BS transmit power constraints).
It can be seen that for this case, the peak transmit power of BS$_2$ is greater than $10$ dB, and hence its peak power of constraint is violated.
The solid lines represent the transmit powers obtained by running the ROBF algorithm with the individual BS transmit power constraints. 
It can be seen that the transmit powers of both the BSs converge within a few iterations. Further, the transmit power of BS$_1$ is equal to $10$ dB, and hence its
peak transmit power constraint is respected. The transmit power of BS$_2$ is greater for this case as compared to the original ROBF algorithm.
Further, we observed from the numerical results that the total downlink power of the system (summing the transmit power of both the BSs) obtained by running the
original ROBF algorithm is $14 $ dB where as its value for the case of ROBF algorithm with the individual BS transmit power constraints is $14.65$ dB. 
This illustrates the fact that the minimum downlink power with the individual BS transmit power constraints is greater than the minimum downlink power of the original optimization 
in \eqref{eqn:OPT_basic}.
\begin{figure}[!t]
\centering
\includegraphics[width=3.6 in, height  = 2.7 in]{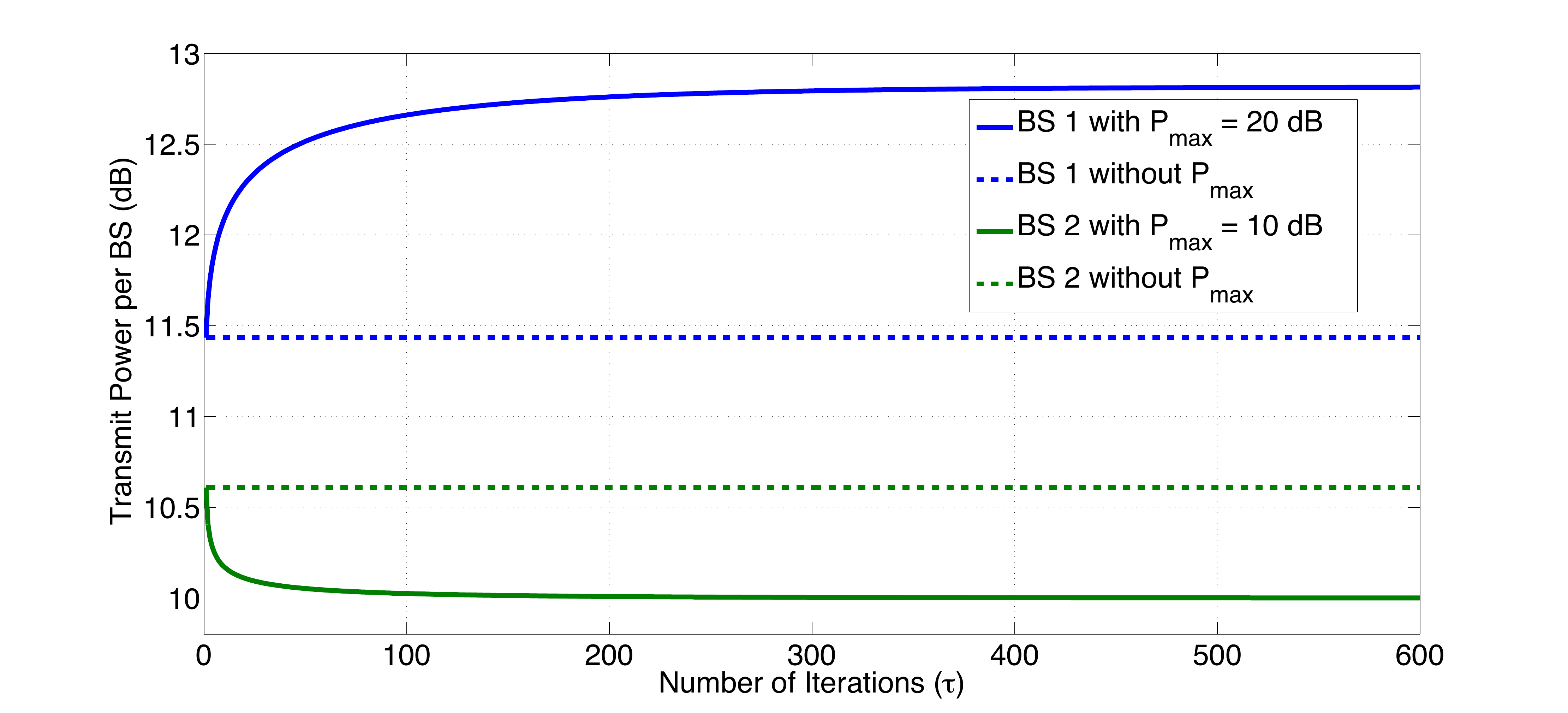}  
\caption{Iterations of the downlink power per BS. $N_t  = 100$ antennas/ BS, $K = 50$ UTs per cell, $P_{1,\max} = 20$ dB and $P_{2,\max} = 10.$ Target rate per user = $3$ bits/s/Hz.}
\label{fig:ROBF_With_MaxPower}
\end{figure}

\subsection*{Impact of Pilot Contamination Effect}
We finally investigate the impact of pilot contamination effect on the performance of the ROBF algorithm, and
the improvements obtained by the MROBF algorithm. We consider an identical system set up as in the case of 
perfect CSI.
Since our main objective is to characterize the performance
loss due to the pilot contamination effect,  we ignore the errors associated with the CSI estimation process, i.e., $P_{\text{Tr}}$ in \eqref{eqn:MMSEEstimate}  is set to a very high value.

First, we compare the achieved SINR in the downlink by the MROBF algorithm and the CBF algorithm in Figure \ref{fig:MROBFVsCBF}. 
For the implementation of CBF, we consider that Algorithm 1 is implemented with all the CSI values replaced 
by their estimates (i.e. $\hv_{i,n,k}$ 
replaced by $\hat{\hv}_{i,n,k} \ \forall i,n,k$ computed as in \eqref{eqn:MMSEEstimate}).
It can be seen that the MROBF algorithm very closely matches the target SINR values, where as the CBF algorithm
does not achieve the target SINR. This is due to the fact that in the MROBF algorithm, parameters can be computed
based on the slow fading co-efficient, which can be estimated accurately at the BS. However, the CBF algorithm requires knowledge
of the fast fading co-efficient, whose estimation suffers from the pilot contamination effect (in addition to the CSI estimation error
that has been ignored in our numerical results).

Finally, we provide a comparison of the downlink power reduction provided by the MROBF algorithm.
Notice that the CBF and ROBF do not achieve the target SINR. For the sake of comparison, we devise an algorithm that achieves the target SINR. One can think 
of retaining the uplink power allocation of the ROBF algorithm, and only adapting the computation of $\bar{\deltav}$ as in \eqref{eqn:DLAdaptMROBF},
such that the target SINR is achieved in the downlink. We address this algorithm by the name ROBF with downlink adaptation.
We plot the transmit power per UT as a function of the target SINR for both the MROBF algorithm and the ROBF with downlink adaptation in Figure \ref{fig:TxmitPower}.
It can be seen that the transmit power per UT is significantly lower for the MROBF algorithm. Moreover, beyond a certain target SINR value,
the downlink power for the ROBF with downlink adaptation becomes very high, implying that the target SINR cannot be supported by this algorithm. 
In contrast, the MROBF algorithm can support a greater range of target SINR values. 
This is due to the fact that the ROBF with downlink adaptation algorithm
significantly under estimates the interference arising out of the UTs which reuse the same pilot symbols, where as the MROBF algorithm accurately accounts
for this quantity.
 \begin{figure*}[!t]
\begin{minipage}[t]{3.4 in}
\centering
\includegraphics[width=3.6 in, height  = 2.7 in]{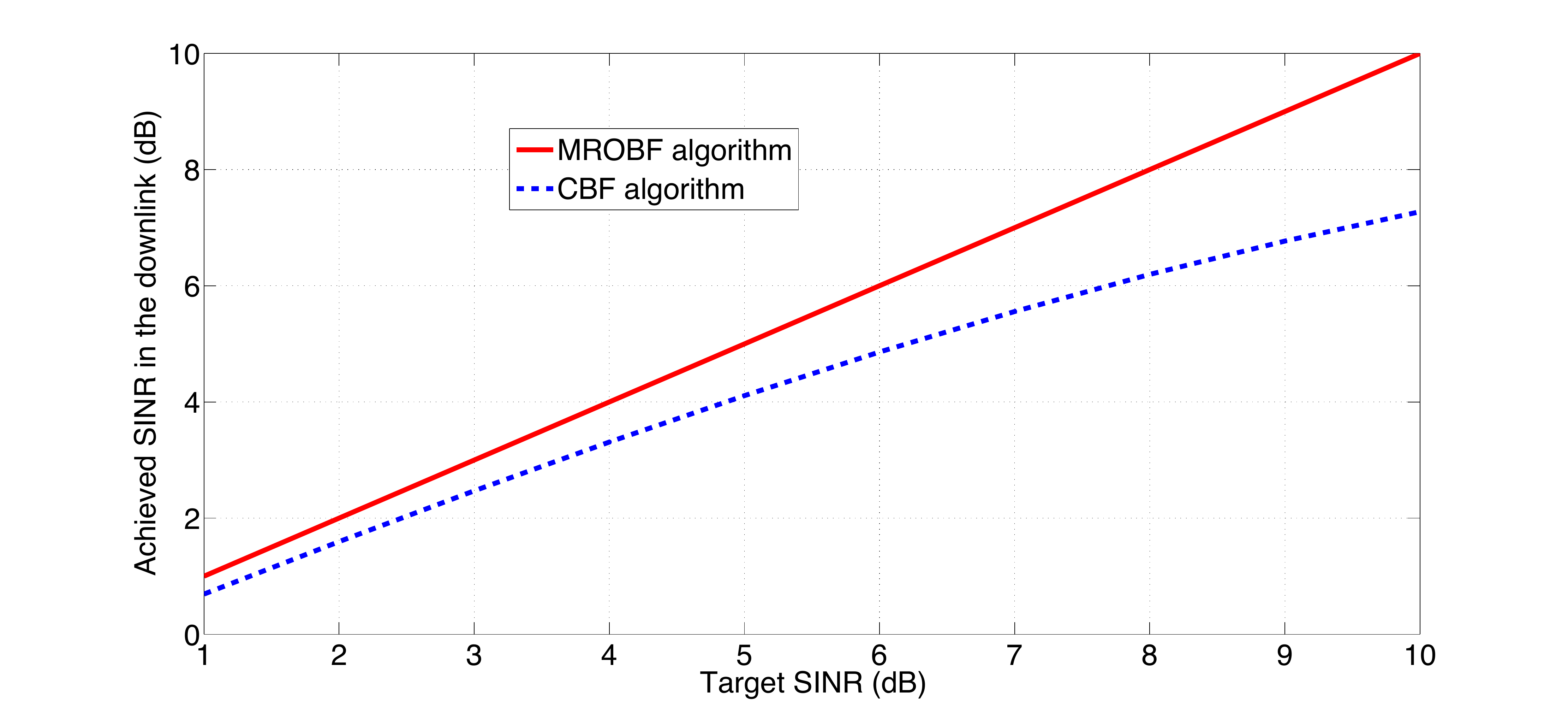}  
\caption{Achieved SINR Vs Target SINR for MROBF and CBF algorithm in the presence of pilot contamination effect for $N_t = 60$ antennas/BS and $K = 30$ UTs per cell.}
\label{fig:MROBFVsCBF}
\end{minipage}
\begin{minipage}[t]{3.4 in}
\centering
\includegraphics[width=3.6 in, height  = 2.7 in]{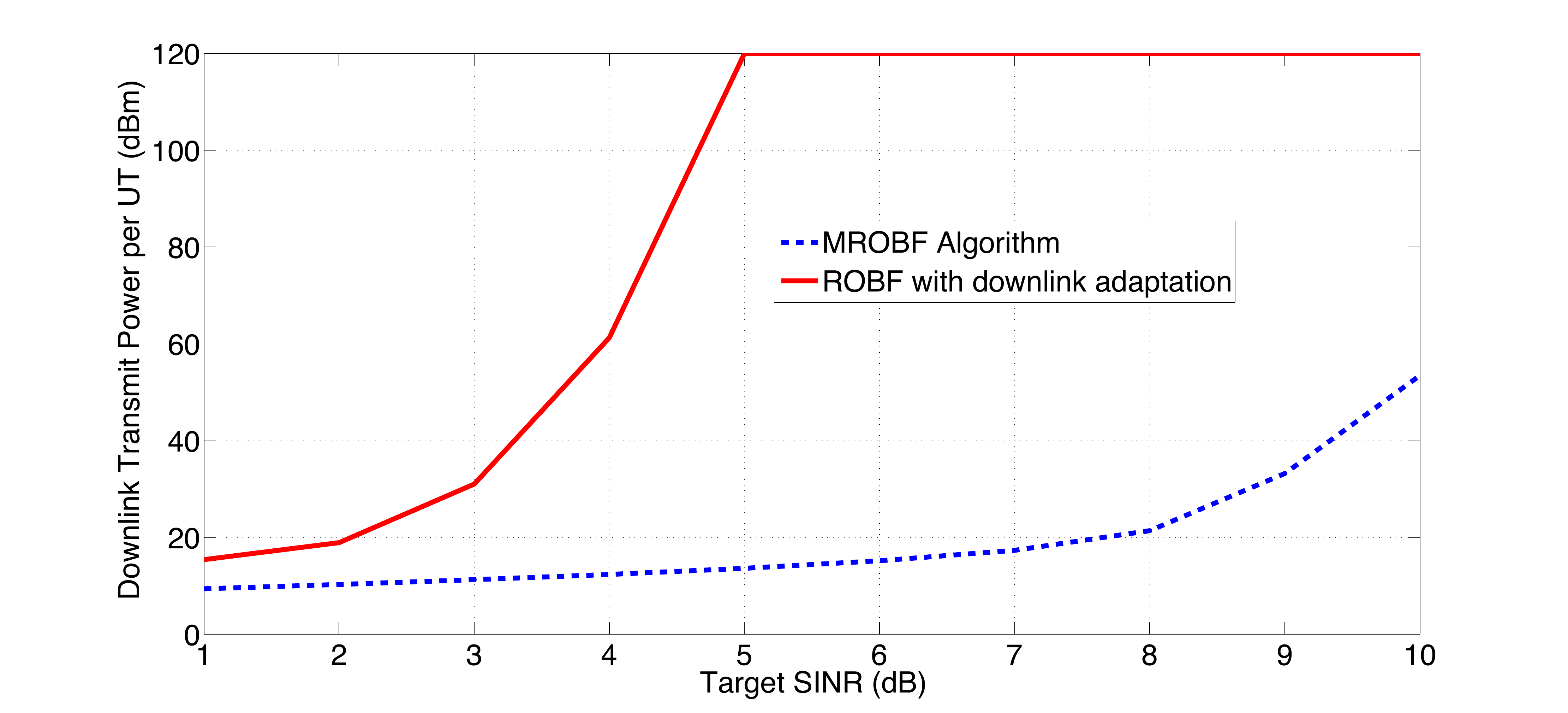}  
\caption{Comparison of downlink power per UT Vs Target SINR for the MROBF and ROBF algorithm with downlink adaptation.}
\label{fig:TxmitPower}
\end{minipage}
\end{figure*}

\section{Conclusion}
\label{sec:Conc}
In this work, we formulated a decentralized multi-cell beamforming algorithm for massive MIMO systems with the objective of minimizing the aggregate transmit power
across all the BSs subject to UT SINR constrains.
The algorithm requires only locally available CSI at the BSs, and some statistical side information of the channel gains to other UTs, and incurs a lower burden both in terms of the information exchange between BSs and the number of computations. 
Further we proved theoretically that this algorithm is asymptotically optimal for a large number of BS antennas and UTs. 
We also characterized a lower bound on the range of SINR values for which this algorithm is feasible.
We confirmed using numerical results that
the algorithm is nearly optimal in the regime of massive MIMO
systems. It also provides substantial power savings as compared to zero-forcing beamforming when the number of antennas per BS is of the same orders of magnitude as the number of UTs per cell. 
We proposed a heuristic extension of this algorithm to incorporate a practical constraint in multi-cell systems, i.e. the individual BS transmit power
constraints. We verified the convergence as well as the performance of this algorithm numerically.
Finally, we investigated the impact of CSI estimation errors and pilot contamination on the performance of the ROBF algorithm. Further, we proposed a heuristic modification of the ROBF algorithm, and showed with the help of numerical results that 
such an algorithm is more robust to the effects of CSI estimation errors and pilot contamination (as compared to an algorithm that uses
fast fading CSI values to compute the system parameters).

\section*{Appendix A: Some Relevant Results}
\label{sec:RandMtx}
\begin{lemma} (\cite{Rombaldi2000})
\label{lem:Rombaldi}
For two matrices $\Xm$ and $\Ym,$ if $0 \leq \Xm \leq \Ym,$
then $\rho(\Xm) \leq \rho(\Ym).$
\end{lemma}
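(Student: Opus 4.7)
The plan is to invoke Gelfand's formula for the spectral radius together with monotonicity of a suitably chosen matrix norm on the cone of entrywise nonnegative matrices. Recall that Gelfand's formula states $\rho(\Am) = \lim_{k\to\infty} \|\Am^k\|^{1/k}$ for any submultiplicative matrix norm, and that this limit is independent of the choice of norm. The key observation is that for nonnegative matrices, many standard norms (for instance the entrywise sum norm $\|\Am\|_1 = \sum_{p,q} |\Am(p,q)|$, or the max norm $\|\Am\|_{\max} = \max_{p,q} |\Am(p,q)|$) are monotone under the elementwise partial order: $0 \leq \Am \leq \Bm$ implies $\|\Am\| \leq \|\Bm\|$.

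The first step would be to establish by induction on $k \geq 1$ that $0 \leq \Xm^k \leq \Ym^k$ elementwise. The base case $k=1$ is the hypothesis. For the inductive step, assume $0 \leq \Xm^k \leq \Ym^k$. Writing $(\Ym^{k+1} - \Xm^{k+1})(p,q) = \sum_r \Ym^k(p,r)\Ym(r,q) - \sum_r \Xm^k(p,r)\Xm(r,q)$, I would add and subtract $\sum_r \Xm^k(p,r)\Ym(r,q)$ to split this into $\sum_r (\Ym^k - \Xm^k)(p,r)\Ym(r,q) + \sum_r \Xm^k(p,r)(\Ym - \Xm)(r,q)$, both of which are nonnegative sums of products of nonnegative quantities. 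Hence $\Xm^{k+1} \leq \Ym^{k+1}$, and nonnegativity of $\Xm^{k+1}$ is immediate since $\Xm \geq 0$.

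The second step is to apply a monotone submultiplicative norm (say $\|\cdot\|_1$) to obtain $\|\Xm^k\|_1 \leq \|\Ym^k\|_1$ for every $k \geq 1$. Taking $k$-th roots preserves the inequality, and passing to the limit $k \to \infty$ yields via Gelfand's formula
\begin{equation*}
\rho(\Xm) = \lim_{k\to\infty} \|\Xm^k\|_1^{1/k} \leq \lim_{k\to\infty} \|\Ym^k\|_1^{1/k} = \rho(\Ym),
\end{equation*}
which is the desired conclusion.

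There is no real obstacle in this argument; the only subtlety worth flagging is the inductive monotonicity of powers, which is why nonnegativity of \emph{both} matrices (not merely of their difference) is essential. If one wished to avoid Gelfand's formula entirely, an alternative route would be to invoke the Perron--Frobenius theorem directly: since $\Xm, \Ym \geq 0$, their spectral radii are attained as nonnegative eigenvalues with nonnegative eigenvectors, and a standard comparison via the Collatz--Wielandt characterization $\rho(\Am) = \max_{\vv \geq 0, \vv \neq 0} \min_{p:\vv_p > 0} (\Am\vv)_p / \vv_p$ for nonnegative matrices yields the same inequality. Either route suffices.
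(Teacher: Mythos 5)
The paper offers no proof of this lemma at all—it is simply imported from the cited reference—so there is nothing in-paper to compare your argument against; what matters is whether your self-contained proof is sound, and it is. The inductive decomposition $\Ym^{k+1}-\Xm^{k+1}=(\Ym^{k}-\Xm^{k})\Ym+\Xm^{k}(\Ym-\Xm)$ correctly yields $0\leq\Xm^{k}\leq\Ym^{k}$ for all $k$, and this is exactly the point where nonnegativity of both matrices (not just of the difference) enters, as you flag. The entrywise sum norm is monotone on the nonnegative cone and submultiplicative, so Gelfand's formula closes the argument. One trivial remark: the max norm you mention in passing is monotone but not submultiplicative without a dimension factor, so it is good that you actually run the limit with $\left\| \cdot \right\|_{1}$—though since Gelfand's limit is norm-independent in finite dimensions, even that would be harmless. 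The Collatz--Wielandt route you sketch at the end is equally valid and is essentially the standard textbook proof of this fact; either version fully establishes the lemma.
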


\begin{lemma} \label{lem:lemma1} (Equation 2.2, \cite{jackbai95})
Let $\Am$ be a Hermitian invertible matrix of size $N \times N$, then for any vector $\xv \in \mathbb{C}^N$ and scalar $\tau \in \mathbb{C}$ for which $\Am+\tau \xv \xv^H$ is invertible,
\begin{align*}
\LB \Am + \tau \xv \xv^H \RB^{-1} = \Am^{-1} - \frac{\Am^{-1} \tau \xv \xv^H \Am^{-1}}{1+ \tau \xv^H \Am^{-1}  \xv},
\end{align*}
and
\begin{align*}
\xv^H(\Am+\tau \xv\xv^H)^{-1} = \frac{\xv^H\Am^{-1}}{1+\tau \xv^H \Am^{-1} \xv}.
\end{align*}
\end{lemma}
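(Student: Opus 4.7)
The plan is to prove the Sherman--Morrison identity by direct multiplication and then obtain the second identity as a one-line corollary. First I would verify the first equation by computing
$$\left(\Am^{-1} - \frac{\tau\,\Am^{-1}\xv\xv^H\Am^{-1}}{1+\tau\,\xv^H\Am^{-1}\xv}\right)\!\LB\Am + \tau\,\xv\xv^H\RB$$
and checking it equals $\Id$. Writing $\alpha := \xv^H\Am^{-1}\xv$ (a scalar) the product expands into four pieces: the term $\Am^{-1}\Am = \Id$, a term $\tau\,\Am^{-1}\xv\xv^H$, and two subtracted rank-one terms that both factor through $\Am^{-1}\xv\xv^H$ once the inner product $\xv^H\Am^{-1}\xv$ is replaced by $\alpha$. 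Collecting the three rank-one contributions over the common denominator $1+\tau\alpha$ shows that the numerator is $\tau(1+\tau\alpha) - \tau - \tau^2\alpha = 0$, so the rank-one perturbations cancel exactly and only $\Id$ survives.

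Second, I would derive the identity for $\xv^H(\Am+\tau\xv\xv^H)^{-1}$ by left-multiplying the Sherman--Morrison formula above by $\xv^H$. The right-hand side becomes
$$\xv^H\Am^{-1} - \frac{\tau\alpha\,\xv^H\Am^{-1}}{1+\tau\alpha},$$
which combines over a common denominator to $\xv^H\Am^{-1}/(1+\tau\alpha)$, matching the claimed expression. So the second identity is a one-line consequence of the first.

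The main obstacle, honestly, is essentially nonexistent: this is the classical Sherman--Morrison formula, quoted here from \cite{jackbai95}, and the calculation is a short algebraic verification involving only scalar manipulations of $\alpha = \xv^H\Am^{-1}\xv$. The only technical care required is that the denominator $1+\tau\,\xv^H\Am^{-1}\xv$ is nonzero under the stated hypotheses; this follows from the matrix-determinant lemma $\det(\Am+\tau\xv\xv^H) = \det(\Am)\,(1+\tau\,\xv^H\Am^{-1}\xv)$ together with the assumed invertibility of both $\Am$ and $\Am+\tau\xv\xv^H$, which forces $1+\tau\,\xv^H\Am^{-1}\xv \neq 0$ and so legitimizes the fractional expression.
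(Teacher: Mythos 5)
Your verification is correct: the direct multiplication check, the cancellation of the rank-one terms over the common denominator $1+\tau\,\xv^H\Am^{-1}\xv$, the derivation of the second identity by left-multiplying by $\xv^H$, and the determinant-lemma argument for why the denominator is nonzero are all sound. The paper itself states this lemma only as a citation to \cite{jackbai95} and gives no proof, so there is nothing to compare against; your argument is the standard one and fills the (deliberate) gap correctly.
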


\begin{lemma} \label{lem:lemma2} (Lemma $2.6,$ \cite{jackbai98})
Let $\xv,\yv \sim \mathcal{CN}(0,\frac{1}{N} \Id_{N}) \in \mathbb{C}^{N},$ $\Am \in \CC^{N \times N}$  Hermitian matrix, and that $\xv$ and $\yv$ are mutually independent, and independent of $\Am.$ Consider $m \geq 2.$ Then there exists a constant $C_m$ independent 
of $N$ and $\Am$ such that
\begin{align*}
\mathbb{E} \LSB \big{|} \xv^H\Am\xv - \frac{1}{N}\trace(\Am) \big{|}^m\RSB \leq \frac{C_m} {N^{m/2}}||\Am||^m.
\end{align*}
This implies by the Markov inequality and the Borel Cantelli lemma for  $||\Am|| < \infty,$ that
$$ \xv^H\Am\xv - \frac{1}{N}\trace(\Am) \xrightarrow[N \to \infty]{\text{a.s.}} 0.$$
Additionally, $$ \xv^H\Am\yv  \xrightarrow[N \to \infty]{\text{a.s.}} 0.$$
\end{lemma}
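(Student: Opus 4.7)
The plan is to establish the $m$-th moment bound first and then derive the two almost-sure statements via Markov's inequality and the Borel--Cantelli lemma. I would begin by decomposing the centered quadratic form into diagonal and off-diagonal contributions: writing $\xv = (x_1,\ldots,x_N)^T$ with $x_i$ i.i.d.\ $\mathcal{CN}(0,1/N)$, we have
\[
\xv^H \Am \xv - \tfrac{1}{N}\trace(\Am) \;=\; \sum_{i=1}^{N} \Am(i,i)\LB |x_i|^2 - \tfrac{1}{N}\RB \;+\; \sum_{i \neq j} \bar{x}_i \Am(i,j) x_j,
\]
and both summands have mean zero, so it suffices to control each in $L^m$ by $C_m N^{-m/2}\|\Am\|^m$.

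For the diagonal piece, the terms $\Am(i,i)(|x_i|^2 - 1/N)$ are independent, centered, with variance $|\Am(i,i)|^2/N^2$ and higher moments governed by those of a $\chi^2_2$ random variable scaled by $1/N$. Rosenthal's inequality for sums of independent centered variables then yields an $L^m$ bound of order $\LB \sum_i |\Am(i,i)|^2/N^2\RB^{m/2}$, which is at most $(\|\Am\|^2/N)^{m/2}$ since $|\Am(i,i)| \leq \|\Am\|$. For the off-diagonal sum, I would realize the partial sums up to index $k$ as a martingale with respect to the filtration $\mathcal{F}_k = \sigma(x_1,\ldots,x_k)$ and apply Burkholder's inequality, bounding its $L^m$ norm by that of the conditional quadratic variation. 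A direct computation shows the latter is dominated in expectation by $\|\Am\|_F^2/N^2 \leq \|\Am\|^2/N$ (using $\|\Am\|_F^2 \leq N\|\Am\|^2$), so the $L^m$ norm is once again of the required order $N^{-m/2}\|\Am\|^m$.

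Combining the two bounds establishes the stated moment inequality. For the first almost-sure statement, choose any $m > 2$; Markov's inequality gives $\mathbb{P}\LB |\xv^H\Am\xv - \trace(\Am)/N| > \epsilon \RB \leq C_m/(\epsilon^m N^{m/2})$, and since $\sum_N N^{-m/2} < \infty$ for $m>2$, Borel--Cantelli delivers the claim (uniformly in $\Am$ with $\|\Am\| < \infty$). For the bilinear statement $\xv^H\Am\yv \to 0$, I would condition on $\yv$: given $\yv$, the quantity is a centered complex Gaussian with conditional variance $\|\Am\yv\|^2/N \leq \|\Am\|^2\|\yv\|^2/N$, and since $\|\yv\|^2 \to 1$ almost surely by the strong law, a Gaussian tail bound combined with Borel--Cantelli closes the argument. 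The main obstacle is a clean application of Burkholder's inequality in the complex setting with constants uniform in $N$ and $\Am$; this requires careful bookkeeping of the martingale increments and exploiting that $\mathbb{E}[x_k^2]=0$ for circularly symmetric complex Gaussians so that the various cross-terms in the conditional variance vanish.
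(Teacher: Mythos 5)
The paper offers no proof of this lemma to compare against: it is imported verbatim from Bai and Silverstein (Lemma~2.6 of \cite{jackbai98}) and used as a black box in Appendix~A. Your sketch is essentially the standard proof of that result specialized to the Gaussian case: split the centered quadratic form into a diagonal sum of independent variables (Rosenthal) and an off-diagonal part realized as a martingale in the filtration $\mathcal{F}_k=\sigma(x_1,\dots,x_k)$ (Burkholder), then Markov plus Borel--Cantelli for the almost-sure statement (any $m>2$ does suffice, as you say), and conditioning on $\yv$ for the bilinear form. All of that is sound, including the observation that circular symmetry ($\mathbb{E}[x_k^2]=0$) kills the cross-terms in the conditional variance.

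One step is stated too loosely to stand as written. Burkholder in its Rosenthal form bounds $\mathbb{E}|S_N|^m$ by $C_m\,\mathbb{E}\big[\big(\sum_k\mathbb{E}[|d_k|^2\mid\mathcal{F}_{k-1}]\big)^{m/2}\big]+C_m\sum_k\mathbb{E}|d_k|^m$, so you need the $L^{m/2}$ norm of the conditional quadratic variation, not merely a bound on its expectation; for $m>2$ Jensen runs the wrong way. In the Gaussian setting this is a one-line repair: the $k$-th conditional second moment is $\tfrac{2}{N}|a_k|^2$ with $a_k=\sum_{j<k}\Am(k,j)x_j$ a Gaussian linear form, so $\||a_k|^2\|_{L^{m/2}}\le C_m\,\mathbb{E}|a_k|^2$, and Minkowski's inequality in $L^{m/2}$ gives $\big\|\sum_k\tfrac{2}{N}|a_k|^2\big\|_{L^{m/2}}\le \tfrac{C_m}{N^2}\|\Am\|_F^2\le \tfrac{C_m}{N}\|\Am\|^2$ ($\|\cdot\|_F$ the Frobenius norm), which is exactly the order you claimed. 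Do not instead try to bound $\sum_k|a_k|^2$ by $\|L\|^2\|\xv\|^2$ with $L$ the strictly lower-triangular truncation of $\Am$: triangular truncation is not uniformly bounded in operator norm, and that route leaks a $\log N$. You also drop the second Rosenthal term $\sum_i\mathbb{E}|Y_i|^m$ on the diagonal piece, but it is of order $N^{1-m}\|\Am\|^m\le N^{-m/2}\|\Am\|^m$ for $m\ge 2$, so nothing is lost. With these two points made explicit, your argument is complete and matches the proof in the cited reference.
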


\begin{lemma} \label{lem:lemma3} (Lemma $2.6,$ \cite{jackbai95}) 
Let $z \in \mathbb{C}^+$ with $v = \Im(z)$ and $\Am$ and $\Bm$ are $N \times N$ matrices with $\Bm$ being Hermitian, $\tau \in \mathbb{R},$ and
$\qv \in \mathbb{C}^N,$ then
\begin{align*} 
|\trace\big((\Bm-z\Id)^{-1}-(\Bm+\tau \qv \qv^H -z\Id)^{-1})\Am\big)| \leq \frac{||\Am||}{v}.
\end{align*}
\end{lemma}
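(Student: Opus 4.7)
The plan is to prove the bound via the Sherman--Morrison rank-one update formula (Lemma~\ref{lem:lemma1}) combined with an imaginary-part lower bound on the Sherman--Morrison denominator. Write $\Rm = (\Bm - z\Id)^{-1}$. By Lemma~\ref{lem:lemma1},
\begin{align*}
(\Bm+\tau\qv\qv^H-z\Id)^{-1} = \Rm - \frac{\tau\,\Rm\qv\qv^H\Rm}{1+\tau\,\qv^H\Rm\qv},
\end{align*}
so the difference of resolvents is exactly the rank-one term. Multiplying by $\Am$, taking trace, and using cyclicity collapses the expression to a scalar:
\begin{align*}
\trace\bigl((\Rm - (\Bm+\tau\qv\qv^H-z\Id)^{-1})\Am\bigr) = \frac{\tau\,\qv^H\Rm\Am\Rm\qv}{1+\tau\,\qv^H\Rm\qv}.
\end{align*}

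The main work is bounding this ratio. For the numerator I would use sub-multiplicativity: $|\qv^H\Rm\Am\Rm\qv| \le \|\Rm^H\qv\|\cdot\|\Am\|\cdot\|\Rm\qv\|$. Because $\Bm$ is Hermitian, $\Rm$ and $\Rm^H=(\Bm-\bar{z}\Id)^{-1}$ are simultaneously diagonalizable and thus commute, so $\|\Rm^H\qv\|=\|\Rm\qv\|$, yielding a numerator bound of $|\tau|\,\|\Am\|\,\|\Rm\qv\|^2$. For the denominator, the key step is to pass to the imaginary part: since $\tau\in\RR$,
\begin{align*}
|1+\tau\,\qv^H\Rm\qv| \;\ge\; |\Im(1+\tau\,\qv^H\Rm\qv)| \;=\; |\tau|\cdot \qv^H\,\tfrac{\Rm-\Rm^H}{2i}\,\qv.
\end{align*}
Diagonalizing $\Bm=\sum_i\lambda_i\uv_i\uv_i^H$ gives $\tfrac{\Rm-\Rm^H}{2i}=v\sum_i|\lambda_i-z|^{-2}\uv_i\uv_i^H = v\,\Rm\Rm^H$, so the denominator is bounded below by $|\tau|\,v\,\|\Rm^H\qv\|^2=|\tau|\,v\,\|\Rm\qv\|^2$.

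Dividing cancels $|\tau|$ and $\|\Rm\qv\|^2$, leaving the asserted bound $\|\Am\|/v$. Degenerate cases are benign: if $\tau=0$ or $\qv=0$ the difference of resolvents vanishes, so the trace is zero and the bound holds trivially; in all other cases $\Rm\qv\ne 0$ (since $\Rm$ is invertible), and the Sherman--Morrison denominator is nonvanishing thanks to the same imaginary-part argument. I do not expect any serious obstacle here; the only subtlety worth flagging is the use of the commutation $\Rm\Rm^H=\Rm^H\Rm$, which relies on $\Bm$ being Hermitian and is what makes the two norm estimates collapse onto the single quantity $\|\Rm\qv\|^2$ that then cancels between numerator and denominator.
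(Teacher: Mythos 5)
Your argument is correct and complete: Sherman--Morrison reduces the resolvent difference to the scalar $\tau\,\qv^H\Rm\Am\Rm\qv/(1+\tau\,\qv^H\Rm\qv)$, and the identity $\Im(\qv^H\Rm\qv)=v\,\|\Rm\qv\|^2$ (valid because $\Bm$ is Hermitian) makes the numerator bound $|\tau|\,\|\Am\|\,\|\Rm\qv\|^2$ and the denominator bound $|\tau|\,v\,\|\Rm\qv\|^2$ cancel exactly, with the degenerate cases $\tau=0$ or $\qv=0$ handled trivially. The paper itself gives no proof --- it simply quotes this as Lemma 2.6 of \cite{jackbai95} --- and your derivation is precisely the standard argument from that reference, so there is nothing to reconcile.
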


\begin{theorem} (\cite{jackbai95})
\label{thm:deteqm1}
Consider the matrix $\Bm_{N_t} = \Xm_{N_t}\Tm_{N_t}\Xm^H_{N_t},$ where $\Xm_{N_t} = \frac{1}{\sqrt{N_t}} \Ym^{N_t} \in \CC^{N_t \times NK}$ with entries $\Ym^{N_t}(p,q) \sim \mathcal{CN}(0,1)$, and the matrix $\Tm_{N_t}$ a non random diagonal matrix given by
$\Tm_{N_t} = \text{diag}(t_1,\dots,t_{NK}) \in \RR^{NK \times NK}.$ 
Let 
$m(z) = \frac{1}{{N_t}}\trace(\Bm_{N_t}+z\Id)^{-1}, \ z > 0,$
Then, $m(z) - \bar{m}(z) \xrightarrow[N_t,K \to \infty]{\text{a.s.}} 0,$
where $\bar{m}(z)$ can be evaluated as the unique solution to the fixed point equation
\begin{align*}
\bar{m}(z) = \LB \frac{1}{N_t}\sum^{NK}_{i=1} \frac{t_i}{1+t_i \bar{m}(z)}+z\RB ^{-1}.
\end{align*}
\end{theorem}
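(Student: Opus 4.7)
The plan is to establish the deterministic equivalent for $m(z)$ via the classical Bai--Silverstein resolvent technique, combining the three concentration/perturbation lemmas already stated in this appendix. I would begin by defining $\Qm \defines (\Bm_{N_t}+z\Id)^{-1}$ and using the resolvent identity $\Qm \Bm_{N_t} = \Id - z\Qm$, which yields $z m(z) = 1 - \frac{1}{N_t}\trace(\Qm \Bm_{N_t}) = 1 - \frac{1}{N_t}\sum^{NK}_{j=1} t_j \xv_j^H \Qm \xv_j$, where $\xv_j$ is the $j$-th column of $\Xm_{N_t}$, so $\xv_j \sim \mathcal{CN}(0, \frac{1}{N_t}\Id_{N_t})$ and the $\xv_j$ are independent across $j$.

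Next, I would peel $\xv_j$ off the resolvent. By Lemma~\ref{lem:lemma1}, $\xv_j^H \Qm \xv_j = \frac{\xv_j^H \Qm_{(j)} \xv_j}{1+t_j \xv_j^H \Qm_{(j)} \xv_j}$, where $\Qm_{(j)} \defines (\Bm_{N_t} - t_j \xv_j \xv_j^H + z\Id)^{-1}$ is independent of $\xv_j$ and, since $\Bm_{N_t} - t_j \xv_j \xv_j^H$ remains positive semidefinite, satisfies the spectral norm bound $\|\Qm_{(j)}\| \leq 1/z$. Lemma~\ref{lem:lemma2} then gives $\xv_j^H \Qm_{(j)} \xv_j - \frac{1}{N_t}\trace(\Qm_{(j)}) \asto 0$, while Lemma~\ref{lem:lemma3} (in its natural extension to real $z > 0$) yields $|\trace(\Qm_{(j)}) - \trace(\Qm)| \leq 1/z$, so $\frac{1}{N_t}\trace(\Qm_{(j)}) - m(z) \asto 0$. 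Chaining these estimates, $\xv_j^H \Qm \xv_j - \frac{m(z)}{1+t_j m(z)} \asto 0$ uniformly in $j$, and substituting back delivers $z m(z) \asymp 1 - \frac{1}{N_t}\sum^{NK}_{j=1} \frac{t_j m(z)}{1+t_j m(z)}$, which rearranges into exactly the claimed fixed-point equation for $\bar{m}(z)$.

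To close the argument, I would verify that this equation admits a unique positive solution for each $z > 0$. This is a standard fact: the map $F(m) \defines \LB \frac{1}{N_t}\sum_i \frac{t_i}{1+t_i m} + z\RB^{-1}$ sends $(0, 1/z]$ into itself, is monotone increasing, and a direct computation shows $|F'(m)| < 1$ whenever $z > 0$ and $t_i \geq 0$. Uniqueness then follows from the Banach fixed-point theorem, and a compactness argument gives existence.

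The main obstacle will be upgrading the pointwise almost-sure convergence from Lemma~\ref{lem:lemma2} to a form that is uniform across the $NK$ summation indices, since a naive index-by-index a.s. statement is insufficient once we add $NK$ error terms. The remedy is to exploit the $m$-th moment bound $\mathbb{E}|\xv_j^H \Qm_{(j)} \xv_j - \frac{1}{N_t}\trace(\Qm_{(j)})|^m \leq C_m N_t^{-m/2} z^{-m}$ provided by Lemma~\ref{lem:lemma2}, choose $m$ large enough (e.g.\ $m=4$, using that $NK = O(N_t)$ by the fixed aspect ratio assumption), union-bound over $j$, and invoke Borel--Cantelli to obtain uniform almost-sure convergence. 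With this uniformity in hand, the aggregate error $\frac{1}{N_t}\sum_j t_j \LB \xv_j^H \Qm \xv_j - \frac{m(z)}{1+t_j m(z)}\RB$ vanishes almost surely, since $|t_j \xv_j^H \Qm \xv_j| \leq \|\Tm_{N_t}\|/z$ is bounded and the individual errors shrink at rate $N_t^{-1/2}$.
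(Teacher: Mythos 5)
The paper offers no proof of this theorem: it is quoted from \cite{jackbai95} as a known random-matrix result, so there is nothing internal to compare against. Your sketch reconstructs the standard Bai--Silverstein resolvent argument that underlies the cited reference --- resolvent identity, rank-one leave-one-out via Lemma~\ref{lem:lemma1}, trace concentration via Lemma~\ref{lem:lemma2}, rank-one trace perturbation via Lemma~\ref{lem:lemma3}, then self-consistency --- and this architecture is correct and is essentially the classical proof.

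Two details need repair. First, the global contraction claim $|F'(m)|<1$ on all of $(0,1/z]$ is false in general: with a single term $t_1=t$ and $t/N_t\gg z$ one gets $F'(0)=\frac{t^2/N_t}{(t/N_t+z)^2}\approx N_t$. Uniqueness should instead be argued by subtracting the equations for two putative fixed points $m_1,m_2$, which gives $m_1-m_2=(m_1-m_2)\cdot\frac{1}{N_t}\sum_i\frac{(t_im_1)(t_im_2)}{(1+t_im_1)(1+t_im_2)}$, and bounding the multiplier by Cauchy--Schwarz together with the identity $\frac{1}{N_t}\sum_i\frac{t_i\bar m}{1+t_i\bar m}=1-z\bar m<1$ valid at any fixed point; the same estimate is also what you need for the final, unstated step of your argument, namely passing from ``$m(z)$ approximately satisfies the fixed-point equation'' to ``$m(z)-\bar m(z)\to 0$.'' Second, $m=4$ in your union bound is not enough: $NK\cdot C_mN_t^{-m/2}=O(N_t^{-1})$ is not summable and Borel--Cantelli does not apply; you need $m\ge 6$, which is precisely the choice the paper itself makes in the analogous computation of its Appendix~D.
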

Further from \cite{sebastian2010}, under the same assumptions as Theorem \ref{thm:deteqm1}, let
$m^{\prime} (z) = \frac{1}{{N_t}} \trace(\Bm_{N_t}+z\Id)^{-2}.$
Then,
\begin{align}
m^{\prime}(z) - \bar{m}^{\prime}(z) \xrightarrow[N_t,K \to \infty]{\text{a.s.}} 0 \label{eqn:DerConv}
\end{align}
where $\bar{m}^{\prime}(z)$ can be evaluated as 
\begin{align}
\bar{m}^{\prime}(z) = \frac{\bar{m}^2(z)}{ 1- \frac{\bar{m}^2(z)}{N_t}\sum^{NK}_{i=1} \frac{t^2_i}{\LB 1+t_i \bar{m}(z)\RB^2}}
\label{eqn:stilderivative}. 
\end{align}
At this point, we remark that all the above results also
hold for a more general non-Gaussian vectors/ matrices satisfying some moment conditions.

\begin{lemma} (Lemma $1$,\cite{peacock2008})
\label{lem:peacock}
Let $a_N$, $b_N$, $x_N$ and $y_N$ denote four
infinite sequences of complex random variables indexed by
$N.$
If $a_N \asymp b_N$ and $x_N \asymp y_N$ and if $|a_N|,|y|^{-1}_N$ and/or
$|a_N|,|y|^{-1}_N$ are uniformly bounded above over $N,$ then 
$a_N/x_N \asymp b_N/y_N.$ The requirement of uniform bound can
be relaxed to boundedness almost surely.
\end{lemma}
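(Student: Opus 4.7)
The plan is to establish the almost sure convergence $a_N/x_N - b_N/y_N \xrightarrow{\rm a.s.} 0$ by means of the algebraic identity
\begin{align*}
\frac{a_N}{x_N} - \frac{b_N}{y_N} = \frac{a_N - b_N}{x_N} + \frac{b_N\,(y_N - x_N)}{x_N\,y_N},
\end{align*}
and then argue that each of the two terms on the right is a product of a sequence tending to zero almost surely with a sequence that is uniformly bounded almost surely. Since almost sure convergence is preserved under such products, the sum tends to zero a.s., which is precisely $a_N/x_N \asymp b_N/y_N$.

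First I would dispose of the boundedness bookkeeping. By hypothesis one of $|a_N|$ or $|b_N|$ is bounded uniformly (say $|a_N|$), and one of $|x_N|^{-1}$ or $|y_N|^{-1}$ is bounded uniformly (say $|y_N|^{-1}$, so $|y_N|$ is bounded away from $0$). Since $a_N - b_N \xrightarrow{\rm a.s.} 0$, the triangle inequality $|b_N| \leq |a_N| + |a_N - b_N|$ shows that $|b_N|$ is eventually bounded almost surely. Similarly, since $x_N - y_N \xrightarrow{\rm a.s.} 0$ and $|y_N|$ is bounded away from zero, for $N$ large enough (on a probability-one event) $|x_N| \geq |y_N|/2$, so $|x_N|^{-1}$ is eventually bounded almost surely as well. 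The symmetric case, where the bounded sequences are $|b_N|$ and $|x_N|^{-1}$, is handled identically by exchanging the roles.

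With these bounds in hand, the first term $\frac{a_N-b_N}{x_N}$ is the product of $(a_N-b_N)\xrightarrow{\rm a.s.} 0$ with the a.s.-bounded $|x_N|^{-1}$, hence vanishes a.s. The second term $\frac{b_N(y_N-x_N)}{x_N y_N}$ is the product of the a.s.-bounded $b_N$ and $|x_N y_N|^{-1}$ with $(y_N-x_N) \xrightarrow{\rm a.s.} 0$, hence also vanishes a.s. Summing the two yields the claim.

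The only subtle point — and the main obstacle, such as it is — is checking that the boundedness of $|y_N|^{-1}$ really does transfer to $|x_N|^{-1}$ on a probability-one event, because the hypothesis is stated in terms of $y_N$ alone. This is why the statement phrases the boundedness as ``$|a_N|, |y_N|^{-1}$ \emph{and/or} $|b_N|, |x_N|^{-1}$ uniformly bounded (a.s.)'': having either pair suffices, because the a.s.\ closeness $x_N \asymp y_N$ propagates the bound from one to the other. Everything else is a routine triangle-inequality estimate, and no random matrix machinery is required.
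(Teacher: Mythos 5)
Your proof is correct. Note, however, that the paper does not prove this statement at all: it is quoted as Lemma 1 of the cited reference (Peacock et al.) in the appendix of ``relevant results,'' so there is no in-paper argument to compare yours against. Your decomposition
\begin{align*}
\frac{a_N}{x_N} - \frac{b_N}{y_N} = \frac{a_N - b_N}{x_N} + \frac{b_N\,(y_N - x_N)}{x_N\,y_N}
\end{align*}
is the standard route, and the two supporting observations --- that $|b_N|$ inherits an eventual a.s.\ bound from $|a_N|$ via $a_N - b_N \asto 0$, and that $|x_N|^{-1}$ inherits one from $|y_N|^{-1}$ via $x_N - y_N \asto 0$ --- are exactly what is needed to make each term a product of an a.s.-null sequence with an a.s.-bounded one. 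You also correctly repaired the typo in the statement as printed (the ``and/or'' clause repeats the pair $|a_N|, |y_N|^{-1}$ twice; the intended alternative is $|b_N|, |x_N|^{-1}$), and your reading in fact covers the mixed cases as well, since the bound transfers within each pair on a probability-one event. The only hygiene point worth flagging is that $a_N/x_N$ need only be well defined for $N$ large enough on the almost-sure event where $|x_N|$ is bounded away from zero, which is all the asymptotic claim requires.
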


\section*{Appendix B: Proof of the Fixed Point Equation}
\label{sec:ProofProp1}
In order to prove the convergence of the iterative equation $\eqref{eqn:lam_stil}$,
we use the arguments of standard function \cite{Yates1995}.
A $K-$variate function $\gv(\xv) = \LSB g_1(\xv),\dots,g_K(\xv) \RSB \in \RR^K$ for $\xv \in \CC^K$ is said to be standard if it fulfills the following conditions:
\begin{itemize}
\item Positivity: $\gv(\xv) > 0,$ for $\xv \geq 0.$
\item Scalability: For $\beta > 1,$ 
$\beta \gv(\xv) >  \gv(\beta \xv).$
\item Monotonicity: For $\xv^{\prime} \geq \xv,$
$\gv(\xv^{\prime}) \geq \gv(\xv).$
\end{itemize}
Now consider the following iterative algorithm given by,
\begin{align}
\xv^{t+1} = \gv(\xv^{t}), \qquad t \geq 1 \label{eqn:jaja680} 
\end{align}
If the $K-$variate function is standard, then it ensures the 
convergence of \eqref{eqn:jaja680} to its unique fixed point solution $\xv = \gv(\xv)$, if the solution exists.

Consider the iterative equations in $\eqref{eqn:lam_stil}$ which can be represented as 
\begin{align}
\mu_{i,j}^{t+1}  = f_{i,j} (\muv^t) \label{eqn:jaja865}
\end{align}
where $\muv^t_i = [\mu^t_{i,1},\dots,\mu^t_{i,K}]^T$ and 
 $\muv^t = [\muv^t_{i},\dots,\muv^t_n]^T$
and
\begin{align}
f_{i,j} (\muv^t) \defines \frac{\gamma_{i,j}}{\sigma_{i,i,j}\bar{m}^t_{i}} \qquad \forall i,j . 
\end{align}
Let us also define the $NK-$ variate function
\begin{align}
\fv_i (\muv^t) & \defines [f_{i,1}(\muv^t),\dots,f_{i,K}(\muv^t)]^T \in \RR^{K} \nonumber \\
\fv(\muv^t) & \defines [\fv_1 (\muv^t),\dots,\fv_N(\muv^t)]^T \in \RR^{NK}.
\end{align}
The existence of the fixed point to the equation $\muv = \fv(\muv)$
follows from Lemma 1.
We now prove that the $NK-$ variate function $\fv(\muv)$
is a standard function and hence the iterations of  \eqref{eqn:jaja865} 
converges to its unique fixed point.

In the subsequent part of this proof, 
we introduce the notation $\bar{m}^t_i(z)$ to denote 
the solution of the fixed point equation 
\begin{align}
& \bar{m}^t_i(z)  = \LB \frac{1}{N_t}\sum^N_{n = 1} \sum^K_{k = 1} \frac{\sigma_{i,n,k} \mu^t_{n,k}}{1+\sigma_{i,n,k}\mu^t_{n,k} \bar{m}^t_i(z)}+z \RB^{-1}. \label{eqn:FPeqn1}
\end{align}
With this notation, the solution to the fixed point equation of \eqref{eqn:FPeqn} 
can be written as $\bar{m}^t_i(1).$
For notational convenience, we also drop the 
superscript $t.$  \\
{\bf Positivity:}
The positivity result follows directly since $\bar{m}_i(1)$ is positive whenever $\muv \geq 0.$ 
Hence $f_{i,j}(\muv) > 0.$ \\
{\bf Scalability: }
Let us consider the difference between the following quantities.
\begin{align}
 \beta f_{i,j}(\muv) -  f_{i,j}(\beta \muv)
& = \frac{\beta \gamma_{i,j}}{\sigma_{i,i,j}\bar{m}^{(1)}_i(1)} -  \frac{ \gamma_{i,j}}{\sigma_{i,i,j}\bar{m}^{(2)}_i(1)} \nonumber \\
& = \frac{ \gamma_{i,j}}{\sigma_{i,i,j}}\LB \frac{\beta}{\bar{m}^{(1)}_i(1)} - \frac{1}{\bar{m}^{(2)}_i(1)}\RB \label{eqn:scale_diff}
\end{align}
where $\beta>1$ and $\bar{m}^{(1)}_i(1)$ and $\bar{m}^{(2)}_i(1)$ are the unique solutions to $\eqref{eqn:FPeqn}$ evaluated at $\muv$ and 
  $\beta\muv.$
In order to evaluate $\bar{m}^{(2)}_i(1)$, we go back to the definition $\bar{m}^{(2)}_i(1)$ evaluated at
$\beta\muv.$
\begin{align}
\bar{m}^{(2)}_i(1) = \frac{1}{N} \trace(\Hm_i \beta \Mm \Hm^H_i+\Id)^{-1} \asymp \frac{1}{\beta} \bar{m}_i\LB\frac{1}{\beta}\RB \label{eqn:m2z}
\end{align}
where $\Mm = \diag [\muv].$ Clearly $1/\beta < 1.$ 
$\bar{m}^{(2)}_i(1)$ can be evaluated 
as the solution  to the fixed point equation $\eqref{eqn:FPeqn}$ evaluated
at the point $1/\beta$ and then scaling the result by $\beta.$
From $\eqref{eqn:scale_diff}$ and $\eqref{eqn:m2z}$, it can be concluded that in order to prove the scalability result, it is sufficient to show that $\bar{m}_i(z)$ is a decreasing function 
of $z.$ 

In order to prove the same, let us consider an extended version of the channel matrix which is
constructed as follows. Defining,
$\Rm_{i,j,k} = \sigma_{i,j,k} \Id_{N_t} \in \RR^{N_t \times N_t}$ and
$\Rm^L_{i,j,k} = \sigma_{i,j,k} \Id_{N_t L} \in \RR^{N_t L \times N_t L}.$ 
The matrix $\Hm^L_{i}$ is constructed as follows:
\begin{align}
\Hm^L_{i,j} & = \frac{1}{\sqrt{L}} \LSB \Rm^{1/2}_{i,j,1}\Xm^L_{i,j},\dots,\Rm^{1/2}_{i,j,K}\Xm^L_{i,j} \RSB \in \CC^{N_t L \times KL} \nonumber \\
\Hm^L_{i} & = \LSB \Hm^L_{i,1},\dots,\Hm^L_{i,N} \RSB \in \CC^{N_t L \times NKL} \label{eqn:jaja776}
\end{align}
where the matrix $\Xm^L_{i,j} \in \CC^{Nt \times K},$ whose elements
$\Xm^L_{i,j}(p,q) \sim \mathcal{CN} (0,\frac{1}{N_t}).$ 
Also, let us define the following,
\begin{align}
& \lambdav^L_{i,j} = \Big{[} \lambda_{i,j},\dots,\lambda_{i,j}\Big{]}^T \in \RR^{L \times 1} \nonumber \\
& \lambdav^L_{i} = \LSB \lambdav^L_{i,1},\dots,\lambdav^L_{i,K}\RSB^T \in \RR^{KL \times 1} \nonumber \\
& \Lambdam^L = \diag \LB \lambdav^L_{1},\dots,\lambdav^L_{N} \RB \in \RR^{NKL \times NKL} 
\end{align}
and let $\Qm_i(z) = (\Hm_i \Lambdam \Hm^H_i+z \Id_{N_t L})^{-1}$ and $\Qm^L_i(z) = (\Hm^L_i \Lambdam^L (\Hm^L)^H_i+z \Id_{N_t L})^{-1}.$
Let us denote
\begin{align}
m^L_i(z) = \frac{1}{N_t L} \trace (\Qm^L_i(z)).
\end{align}
It can be verified that for any fixed $N,N_t$ and $K,$ the following limit holds,
\begin{align}
m^L_i(z) - \bar{m}_i(z) \xrightarrow[L \to \infty]{\text{a.s.}} 0. \label{eqn:jaja675}
\end{align}
Now consider the difference between the following two quantities, for any $z_2 > z_1 > 0$ and for any positive $L,$
we have
\begin{align}
& m^L_i(z_1) -m^L_i(z_2)  \nonumber
\\ & = \frac{1}{N_t L} \trace \LB \Qm^L_i(z_1) - \Qm^L_i(z_2)\RB \nonumber
\\ & \stackrel{(a)}{>}  0 \label{eqn:jaja670}
\end{align}
where inequality $(a)$ follows by the following identity:
For invertible matrices $\Am$ and $\Bm,$ 
\begin{align}
\Am^{-1} - \Bm^{-1} = -\Am^{-1} (\Am- \Bm) \Bm^{-1}.
\end{align}
We now show that for any fixed $N,N_t$ and $K,$ the inequality in \eqref{eqn:jaja670} of the random quantities ($m^L_i(z)$) also hold for their respective deterministic approximations ($\bar{m}_i(z)$). This can be argued as follows.
Consider the difference
\begin{align}
m^L_i(z_1) - m^L_i&(z_2)  = m^L_i(z_1) - \bar{m}_i(z_1) - m^L_i(z_2) \nonumber \\ & + \bar{m}_i(z_2) + \bar{m}_i(z_1)- \bar{m}_i(z_2). \label{eqn:jaja672}
\end{align}
First note that since the matrix $\Hm^L_{i}$ has bounded spectral norm almost surely,
it follows that, almost surely,
\begin{align}
\lim_{L \to \infty} m^L_i(z) > 0. \label{eqn:jaja755}
\end{align}
From \eqref{eqn:jaja755}, further applying the result of \eqref{eqn:jaja675} in the right hand side of \eqref{eqn:jaja672} 
and the inequality of \eqref{eqn:jaja670}, it follows that 
\begin{align}
\bar{m}_i(z_1)- \bar{m}_i(z_2) > 0. \label{eqn:jaja756}
\end{align}
Thus, $\bar{m}_i(z)$ is a decreasing function of $z$. 
We remark that the strict positivity of the $m^L_i(z)$ in \eqref{eqn:jaja755} is essential for the strict
positivity argument of \eqref{eqn:jaja756}.\\
{\bf Monotonicity: }
Consider $\muv^{\prime} \geq \muv.$ 
In this case, we denote $\bar{m}^{(1)}_i(1)$ and $\bar{m}^{(2)}_i(1)$ as the solutions to the fixed point equations in $\eqref{eqn:FPeqn}$ evaluated at $\muv^{\prime}$ and $\muv$ respectively.
As before, let us consider the 
difference between the quantities, 
\begin{align}
f_{i,j}(\muv^{\prime}) - f_{i,j}(\muv) =  \frac{ \gamma_{i,j}}{\sigma_{i,i,j}\bar{m}^{(1)}_i(1)} -  \frac{ \gamma_{i,j}}{\sigma_{i,i,j}\bar{m}^{(2)}_i(1)}.
\end{align}
We now have to show that $\bar{m}^{(1)}_i(1) \leq \bar{m}^{(2)}_i(1)$ in order to prove the monotonicity result.
This can be shown by constructing the extended matrices as in \eqref{eqn:jaja776} and noting the monotonicity property for the associated random quantities and
extending the result to their deterministic approximations for any system dimensions. The proof is similar to the scalability result and hence omitted here.

\section*{Appendix C: Feasibility Conditions}
\subsection{Part I: Proof of Lemma 2}
Let us first consider the expression for the achieved uplink SINR
by the ROBF algorithm given by
\begin{align}
\Lambda_{i,j} (\muv) = \frac{\frac{\mu_{i,j}}{N_t}|\hat{\vv}^H_{i,j}\hv_{i,i,j}|^2}{\sum_{(n,k) \neq (i,j)} \frac{\mu_{n,k}}{N_t}|\hat{\vv}^H_{i,j}\hv_{i,n,k}|^2+ ||{\hat{\vv}}_{i,j}||_2^2}. 
\end{align}
We examine the expression for the uplink SINR 
in the large system regime.
It can be shown that the expression
for the uplink SINR converges asymptotically to the following (the proof of this is very similar to the convergence result of the downlink interference and downlink SINR provided in Lemma 4 and Appendix E, and hence omitted in order to avoid repetition): 
\begin{align}
\Lambda_{i,j} (\muv) & \asymp \frac{\mu_{i,j} \sigma_{i,i,j} \bar{G}_{i,i,j} \bar{m}^2_i}{\sum_{(n,k) \neq (i,j)} \frac{1}{N_t} \mu_{n,k} \bar{G}_{i,n,k} \bar{G}_{i,i,j} \bar{m}^\prime_i + \alpha_i \bar{G}_{i,i,j} \bar{m}^\prime_i} \nonumber \\
& = \frac{\mu_{i,j} \sigma_{i,i,j}  \bar{m}^2_i}{\sum_{(n,k) \neq (i,j)} \frac{1}{N_t} \mu_{n,k} \bar{G}_{i,n,k}  \bar{m}^\prime_i + \alpha_i  \bar{m}^\prime_i}. \label{eqn:UL_asymp_form1simp}
\end{align}
Let us we consider a slightly modified version 
of the expression \eqref{eqn:UL_asymp_form1simp} 
as follows:
\begin{align}
\frac{\mu_{i,j} \sigma_{i,i,j}  \bar{m}^2_i}{\sum_{n,k} \frac{1}{N_t} \mu_{n,k} \bar{G}_{i,n,k}  \bar{m}^\prime_i + \alpha_i  \bar{m}^\prime_i}. \label{eqn:UL_asymp_form2simp}
\end{align}

We now examine the expression \eqref{eqn:UL_asymp_form2simp}
in detail.
Recall the fixed point equation for the computation of $\bar{m}_i$ in \eqref{eqn:FPeqn}. Upon
rearranging the terms, we have
\begin{align}
 \bar{m}_i  = 1  - \frac{1}{N_t} \sum_{n,k} \frac{\sigma_{i,n,k} \mu_{n,k} \bar{m}_i}{1+\sigma_{i,n,k}\mu_{n,k} \bar{m}_i} .
\end{align}
Substituting the expressions for $\bar{G}_{i,n,k},$ $\bar{m}_i$ and $\bar{m}^\prime_i$ in \eqref{eqn:UL_asymp_form2simp}, we get
\begin{align}
 \eqref{eqn:UL_asymp_form2simp} = \frac{\sigma_{i,i,j} \mu_{i,j} \LB 1-\frac{1}{N_t} \sum_{n,k} \frac{(\sigma_{i,n,k} \mu_{n,k}\bar{m}_i)^2}{(1+\sigma_{i,n,k} \mu_{n,k}\bar{m}_i)^2 } \RB}{1+\frac{1}{N_t} \sum_{n,k} \frac{\sigma_{i,n,k} \mu_{n,k}}{(1+\sigma_{i,n,k} \mu_{n,k}\bar{m}_i)^2}}. \label{eqn:jaja403}
\end{align}
Multiplying and diving by $\bar{m}_i$ in \eqref{eqn:jaja403},
we obtain
\begin{align}
&  \eqref{eqn:UL_asymp_form2simp} = \frac{\sigma_{i,i,j} \mu_{i,j} \bar{m}_i \LB 1-\frac{1}{N_t} \sum_{n,k} \frac{(\sigma_{i,n,k} \mu_{n,k}\bar{m}_i)^2}{(1+\sigma_{i,n,k} \mu_{n,k}\bar{m}_i)^2 } \RB}{\bar{m}_i+\frac{1}{N_t} \sum_{n,k} \frac{\sigma_{i,n,k} \mu_{n,k}\bar{m}_i}{(1+\sigma_{i,n,k} \mu_{n,k}\bar{m}_i)^2}} \nonumber \\
& = \frac{\sigma_{i,i,j} \mu_{i,j} \bar{m}_i \LB 1-\frac{1}{N_t} \sum_{n,k} \frac{(\sigma_{i,n,k} \mu_{n,k}\bar{m}_i)^2}{(1+\sigma_{i,n,k} \mu_{n,k}\bar{m}_i)^2 } \RB}{1  - \frac{1}{N_t} \sum_{n,k} \frac{\sigma_{i,n,k} \mu_{n,k} \bar{m}_i}{1+\sigma_{i,n,k}\mu_{n,k} \bar{m}_i} +\frac{1}{N_t} \sum_{n,k} \frac{\sigma_{i,n,k} \mu_{n,k}\bar{m}_i}{(1+\sigma_{i,n,k} \mu_{n,k}\bar{m}_i)^2}} \nonumber \\
& = \frac{\sigma_{i,i,j} \mu_{i,j} \bar{m}_i \LB 1-\frac{1}{N_t} \sum_{n,k} \frac{(\sigma_{i,n,k} \mu_{n,k}\bar{m}_i)^2}{(1+\sigma_{i,n,k} \mu_{n,k}\bar{m}_i)^2 } \RB}{1  - \frac{1}{N_t} \sum_{n,k} \frac{(\sigma_{i,n,k} \mu_{n,k} m_i)^2}{(1+\sigma_{i,n,k}\mu_{n,k} \bar{m}_i)^2}} \nonumber \\
& = \sigma_{i,i,j} \mu_{i,j} \bar{m}_i. \label{eqn:jaja405}
\end{align}
We notice that that the expression at \eqref{eqn:jaja405}
is exactly in the same form as the right hand side of the
fixed point equation \eqref{eqn:FPconv}.
Therefore, we can rewrite \eqref{eqn:FPconv} as follows:
\begin{align*}
\gamma_{i,j} & = \frac{\mu_{i,j} \sigma_{i,i,j} \bar{G}_{i,i,j} \bar{m}^2_i}{\sum_{n,k} \frac{1}{N_t}\mu_{n,k} \bar{G}_{i,n,k} \bar{G}_{i,i,j} \bar{m}^\prime_i +  \bar{G}_{i,i,j} \bar{m}^\prime_i} \qquad \forall i,j.
\end{align*}
Rearranging, we have
\begin{align*}
& \frac{1}{\gamma_{i,j}}\mu_{i,j} \sigma_{i,i,j} \bar{G}_{i,i,j} \bar{m}^2_i=\sum_{n,k} \frac{\mu_{n,k}}{N_t} \bar{G}_{i,n,k} \bar{G}_{i,i,j} \bar{m}^\prime_i \\ & \qquad \qquad \qquad \qquad \qquad+ \bar{G}_{i,i,j} \bar{m}^\prime_i \qquad \forall i,j \\
& \implies \mu_{i,j}= \sum_{n,k}  \frac{\gamma_{i,j} \mu_{n,k}}{N_t}  \frac{ \bar{G}_{i,n,k} \bar{G}_{i,i,j} \bar{m}^\prime_i}{\sigma_{i,i,j}\bar{G}_{i,i,j} \bar{m}^2_i} \\ & \qquad \qquad \qquad \qquad \qquad+ \frac{  \bar{G}_{i,i,j} \bar{m}^\prime_i}{\sigma_{i,i,j}\bar{G}_{i,i,j} \bar{m}^2_i} \qquad \forall i,j.
\end{align*} 
The equation in matrix form can be written as 
\begin{align}
& \muv = \Gammam (\Deltam^\prime)^T \muv + {\bf \kappa} \label{eqn:UL_lineq}
\end{align}
where ${\bf \kappa}_i = \LSB \frac{\bar{m}^{\prime}_i}{\sigma_{i,i,1} \bar{m}^2_i},\dots, \frac{\bar{m}^{\prime}_i}{\sigma_{i,i,K} \bar{m}^2_i}\RSB^T$ and ${\bf \kappa} = \LSB {\bf \kappa}_1,\dots,{\bf \kappa}_N\RSB^T.$
The matrix $\Gammam$ is the same as the matrix $\Gammam$  defined in the description of the ROBF algorithm.
The matrix $\Deltam^\prime$ is defined as follows.
\begin{equation}
\Deltam^\prime = \left(
\begin{array}{ccc}
(\Deltam^\prime)^{1,1} & \ldots & (\Deltam^\prime)^{1,N} \\
\vdots  & \ddots & \vdots \\
(\Deltam^\prime)^{N,1} & \ldots & (\Deltam^\prime)^{N,N}\\
\end{array} \right)
\end{equation}
where each submatrix $(\Deltam^\prime)^{i,j} \in \CC^{K \times K}$ is given by 
\begin{equation} 
(\Deltam^\prime)^{i,n}_{j,k} \defines \begin{cases} \frac{1}{N_t}\bar{G}_{i,i,j} \bar{G}_{i,i,j} \bar{m}^{\prime}_{i} , & n=i ,\ k = j \\
                      \frac{1}{N_t}\bar{G}_{i,i,j} \bar{G}_{i,i,k} \bar{m}^{\prime}_{i} ,& n = i,\ k \neq j \\
                    \frac{1}{N_t}\bar{G}_{n,i,j} \bar{G}_{n,n,k}  \bar{m}^{\prime}_{n}  ,& n\neq i.
                   \end{cases} \label{eqn:coeff_mtx1}
\end{equation}
Notice that if the SINR targets are feasible for the uplink solution, the linear equations in \eqref{eqn:UL_lineq}
must have a solution. 
Problems with similar structure has been studied before in the context of power allocation in wireless networks \cite{Chaing2008}. 
It has been established that a feasible solution exists for such problems if and only if
$\rho(\Gammam (\Deltam^\prime)^T) < 1$ (where $\rho$ is the spectral radius of the matrix) and the matrix $\Id-\Gammam (\Deltam^\prime)^T$ is invertible \cite{Chaing2008}.
Now recall the matrix $\Deltam$ defined in equation \eqref{eqn:coeff_mtx}.
Observe that the matrix $\Deltam$ and $\Deltam^\prime$ only differ in the diagonal element.
Also, we can note that $\Deltam \leq \Deltam^\prime$ (denotes element wise inequality, refer to the notations), and hence
$\Gammam \Deltam \leq \Gammam \Deltam^\prime.$
Hence, it follows from Lemma \ref{lem:Rombaldi}, that $\rho(\Gammam (\Deltam)^T) \leq \rho(\Gammam (\Deltam^\prime)^T) < 1.$ 

Now recall the linear equations for computing $\bar{\deltav}$ given by
\begin{align*}
 \bar{\deltav} = \Gammam \Deltam \bar{\deltav} + \rhov. 
 \end{align*}
  
It has been established in works before \cite{ZanderFrodigh1994}, the eigen values 
of the matrices $\Gammam \Deltam$ and $\Gammam \Deltam^T$
are the same. For completeness, this can be argued as follows,
\begin{align*}
|\Gammam \Deltam - \lambda \Id| & = |\Gammam||\Deltam-\lambda\Gammam^{-1}| = |\Gammam||\LB \Deltam-\lambda\Gammam^{-1} \RB^T| \\
& = |\Gammam| | \Deltam^T-\lambda\Gammam^{-1} | = | \Gammam \Deltam^T-\lambda \Id | = 0.
\end{align*}
From the above discussion, we conclude that the spectral radius of $\Gammam \Deltam$ and $\Gammam \Deltam^T$ are the same
and, hence, if $\rho ( \Gamma \Deltam) < 1$ then $\rho (\Gammam \Deltam^T) < 1$ and vice versa.
Consequently, the matrix $\Id-\Gammam \Deltam$ is invertible.
This proves the result of Lemma \ref{lem:updownfeas}.

\subsection{Part II: Feasible SINR targets for the modified system}
Consider the equation \eqref{eqn:ref668}.
Rearranging \eqref{eqn:ref668}, we obtain
\begin{align}
1  & = \frac{1}{N_t} \sum^K_{k = 1}  \frac{\sigma_{i,i,k} \mu^{\text{mod}}_{i,k} \bar{m}^{\text{mod}}_i}{1+\sigma_{i,i,k}\mu^{\text{mod}}_{i,k} \bar{m}^{\text{mod}}_i} \nonumber \\ & \qquad \qquad +\frac{1}{N_t} \sum^N_{\stackrel{n = 1} { n \neq i}} \sum^K_{k = 1} \frac{\sigma_{max} (n) \mu^{\text{mod}}_{n,k}}{1+\sigma_{max} (n)\mu^{\text{mod}}_{n,k} \bar{m}^{\text{mod}}_i} +  \bar{m}^{\text{mod}}_i \nonumber \\
& = \frac{1}{N_t} \sum^K_{k = 1} \frac{\gamma_{i,k}}{1+\gamma_{i,k}} \nonumber \\ & \qquad \qquad +\frac{1}{N_t}  \sum^N_{\stackrel{n = 1} { n \neq i}} \sum^K_{k = 1} \frac{\sigma_{max} (n) \mu^{\text{mod}}_{n,k} \bar{m}^{\text{mod}}_i}{1+\sigma_{max} (n) \mu^{\text{mod}}_{n,k} \bar{m}^{\text{mod}}_i} +  \bar{m}^{\text{mod}}_i. \label{eqn:refhere401}
\end{align}
Equation \eqref{eqn:refhere401} is true for all $\bar{m}^{\text{mod}}_i, i = 1,\dots,N.$ Therefore, by symmetry of the fixed point equation $\bar{m}^{\text{mod}}_i = \bar{m}^{\text{mod}}, i = 1,\dots,N.$
Thus, 
\begin{align}
\sigma_{max} (n) \mu^{\text{mod}}_{n,k} \bar{m}^{\text{mod}}_i = \sigma_{max} (n) \mu^{\text{mod}}_{n,k} \bar{m}^{\text{mod}} = \frac{\sigma_{max} (n)}{\sigma_{n,n,k}} \gamma_{n,k} \label{eqn:ref667}. 
\end{align}
Substituting \eqref{eqn:ref667} in \eqref{eqn:refhere401} yields
\begin{align}
 1 = \frac{1}{N_t} \sum^K_{k = 1} \frac{\gamma_{i,k}}{1+\gamma_{i,k}}+\frac{1}{N_t} \sum^N_{\stackrel{n = 1} { n \neq i}} \sum^K_{k = 1} \frac{\frac{\sigma_{max} (n)}{\sigma_{n,n,k}} \gamma_{n,k}}{1+\frac{\sigma_{max} (n)}{\sigma_{n,n,k}} \gamma_{n,k}} + \bar{m}_i \label{eqn:jaja0031}. 
\end{align}
Now recall \eqref{eqn:jaja666}. Rearranging, we obtain
\begin{align}
\mu^{\text{mod}}_{i,j} = \frac{\gamma_{i,j}}{ \sigma_{i,i,j}\bar{m}^{\text{mod}}_i}. \label{eqn:jaja00301}
\end{align}
The feasibility condition {\bf [C1]} requires that
$\mu^{\text{mod}}_{i,j} < \infty, \ \forall i,j$ This implies from \eqref{eqn:jaja00301} that $\bar{m}^{\text{mod}}_i > 0.$
Using this in \eqref{eqn:jaja0031}, we obtain the
following condition on the target SINR for feasibility:
\begin{align}
 \frac{1}{N_t} \sum^K_{k = 1} \frac{\gamma_{i,k}}{1+\gamma_{i,k}}+\frac{1}{N_t} \sum^N_{\stackrel{n = 1} { n \neq i}} \sum^K_{k = 1} \frac{\frac{\sigma_{max} (n)}{\sigma_{n,n,k}} \gamma_{n,k}}{1+\frac{\sigma_{max} (n)}{\sigma_{n,n,k}} \gamma_{n,k}} < 1. 
\end{align}

\subsubsection{Proof of Corollary 1}
Let us rewrite \eqref{eqn:jaja0031} as
\begin{align}
\bar{m}_i =  1- \frac{1}{N_t} \sum^K_{k = 1} \frac{\gamma_{i,k}}{1+\gamma_{i,k}}+\frac{1}{N_t} \sum^N_{\stackrel{n = 1} { n \neq i}} \sum^K_{k = 1} \frac{\frac{\sigma_{max} (n)}{\sigma_{n,n,k}} \gamma_{n,k}}{1+\frac{\sigma_{max} (n)}{\sigma_{n,n,k}} \gamma_{n,k}}  \label{eqn:jaja00310}. 
\end{align}
The feasibility of the uplink problem in the asymptotic domain implies that $\limsup_{N_t,K \to \infty} \mu^{\text{mod}}_{i,j} < \infty.$
This in turn implies that $\liminf_{N_t,K \to \infty} \bar{m}^{\text{mod}}_i > 0$ (strictly positive).
Thus, we consider \eqref{eqn:jaja00310} in the asymptotic regime, and take $\liminf_{N_t,K \to \infty}.$ This yields,
\begin{align}
& \liminf_{N_t,K \to \infty}  \bar{m}_i  =  \liminf_{N_t,K \to \infty} \Big{[} 1- \frac{1}{N_t} \sum^K_{k = 1} \frac{\gamma_{i,k}}{1+\gamma_{i,k}} \nonumber \\ & \qquad \qquad
\qquad -\frac{1}{N_t} \sum^N_{\stackrel{n = 1} { n \neq i}} \sum^K_{k = 1} \frac{\frac{\sigma_{max} (n)}{\sigma_{n,n,k}} \gamma_{n,k}}{1+\frac{\sigma_{max} (n)}{\sigma_{n,n,k}} \gamma_{n,k}} \Big{]} . \\
& 
=  1- \limsup_{N_t,K \to \infty} \Big{[}  \frac{1}{N_t} \sum^K_{k = 1} \frac{\gamma_{i,k}}{1+\gamma_{i,k}} \nonumber \\
& \qquad \qquad
\qquad  -\frac{1}{N_t} \sum^N_{\stackrel{n = 1} { n \neq i}} \sum^K_{k = 1} \frac{\frac{\sigma_{max} (n)}{\sigma_{n,n,k}} \gamma_{n,k}}{1+\frac{\sigma_{max} (n)}{\sigma_{n,n,k}} \gamma_{n,k}} \Big{]} \label{eqn:jaja00311}
\end{align}
Since $\liminf_{N_t,K \to \infty} \bar{m}^{\text{mod}}_i > 0,$  
and \eqref{eqn:jaja00311}, it follows that the feasibility conditions are given by,
\begin{align}
 \limsup_{N_t,K \to \infty} \LSB \frac{1}{N_t} \sum^K_{k = 1} \frac{\gamma_{i,k}}{1+\gamma_{i,k}}+\frac{1}{N_t} \sum^N_{\stackrel{n = 1} { n \neq i}} \sum^K_{k = 1} \frac{\frac{\sigma_{max} (n)}{\sigma_{n,n,k}} \gamma_{n,k}}{1+\frac{\sigma_{max} (n)}{\sigma_{n,n,k}} \gamma_{n,k}} \RSB \nonumber \\ < 1 \qquad \forall i. \label{eqn:jaja406}
\end{align}

\subsection{Part III: Proof of Lemma 3}
Let us denote $\muv^{\text{mod}}_i = [\mu^{\text{mod}}_{i,1},\dots,\mu^{\text{mod}}_{i,K}]^T$
and $\muv^{\text{mod}} = [\muv^{\text{mod}}_1,\dots,\muv^{\text{mod}}_N]^T.$
The main idea behind this proof is the following:
Consider the pair of vector $\{ \gammav,\muv^{\text{mod}} \}$
that satisfies \eqref{eqn:jaja666}.
We show that the when the power allocation $\muv^{\text{mod}}$
is used in the original system, then 
\begin{align}
\sigma_{i,i,j}\mu^{\text{mod}}_{i,j} \bar{m}^{\text{mod1}}_i \geq \gamma_{i,j} \qquad \forall i,j \label{eqn:jaja1161}
\end{align}
where $\bar{m}^{\text{mod1}}_i$ satisfies
\begin{align}
 \frac{1}{\bar{m}^{\text{mod1}}_i} & =  \frac{1}{N_t}  \frac{\sigma_{i,i,k}  \mu^{\text{mod}}_{i,k}}{1+\sigma_{i,i,k}\mu^{\text{mod}}_{i,k} \bar{m}^{\text{mod1}}_i} \nonumber \\ & +\frac{1}{N_t} \sum^N_{\stackrel{n = 1} { n \neq i}} \sum^K_{k = 1} \frac{\sigma_{i,n,k} \mu^{\text{mod}}_{n,k}}{1+\sigma_{i,n,k} \mu^{\text{mod}}_{n,k} \bar{m}^{\text{mod1}}_i}+1 \label{eqn:jaja1171},
\end{align}
where \eqref{eqn:jaja1171} represents the fixed point equation 
calculated by utilizing the power allocation $\muv^{\text{mod}}$
in the original system.
If \eqref{eqn:jaja1161} is true, then this implies that 
$\gammav$ is a feasible SINR target vector for the original system 
(since there exists a power allocation $\muv^{\text{mod}}$ that can achieve this SINR target). 

We prove this as follows:
Firstly, it is easy to see that 
\begin{align}
\sigma_{max} (n) \mu^{\text{mod}}_{n,k} \geq \sigma_{i,n,k} \mu^{\text{mod}}_{n,k}  \qquad \forall n,k \label{eqn:jaja407}
\end{align}
since $\sigma_{max} (n) \geq \sigma_{i,n,k}$.
Recall that the function
$f_{i,j}  \defines \frac{\gamma_{i,j}}{\sigma_{i,i,j} \bar{m}_i}$
is a standard function. From \eqref{eqn:jaja407} and the monotonicity property of standard functions, we have
\begin{align}
\frac{\gamma_{i,j}}{\sigma_{i,i,j} \bar{m}^{\text{mod}}_i} \geq \frac{\gamma_{i,j}}{\sigma_{i,i,j} \bar{m}^{\text{mod1}}_i}. \label{eqn:jaja408}
\end{align}
Rearranging \eqref{eqn:jaja408} we have,
$\sigma_{i,i,j} \bar{m}^{\text{mod1}}_i \geq \sigma_{i,i,j} \bar{m}^{\text{mod}}_i.$
Multiplying by $\mu^{\text{mod}}_{i,j},$ it follows that
\begin{align}
\sigma_{i,i,j} \mu^{\text{mod}}_{i,j} \bar{m}^{\text{mod1}}_i \geq \sigma_{i,i,j} \mu^{\text{mod}}_{i,j} \bar{m}^{\text{mod}}_i. \label{eqn:jaja409}
\end{align}
From \eqref{eqn:jaja666}, 
 $\sigma_{i,i,j} \mu^{\text{mod}}_{i,j} \bar{m}^{\text{mod}}_i = \gamma_{i,j}.$ Therefore, we can conclude that
\begin{align*}
\sigma_{i,i,j} \mu^{\text{mod}}_{i,j} \bar{m}^{\text{mod1}}_i \geq \gamma_{i,j}.
\end{align*}

\section*{Appendix D: Convergence of the Uplink SINR}
In this Appendix, we prove the result of Lemma \ref{thm:ULDLSINRConv},
statement \eqref{eqn:ULSINRConv}.
Recall that the expression of uplink SINR for the ROBF algorithm is given by \eqref{eqn:UL_form2}.
We make a variable change and
denote $\xv_{i,i,j} = \frac{\hv_{i,i,j}}{\sqrt{N_t}}.$
This implies that 
\begin{align}
\Lambda_{i,j}(\muv) =  \mu_{i,j}\xv^H_{i,i,j}(\Sigmam^{{\prime}^{\mu}}_i+\Id_{N_t})^{-1}\xv_{i,i,j} \qquad \forall i,j. \label{eqn:ULscaled}
\end{align}
Note that $\mu_{i,j}$ is independent of the elements of the channel matrix.
Applying Lemma \ref{lem:lemma2} to the quadratic term of $\xv^H_{i,i,j}(\Sigmam^{{\prime}^{\mu}}_i+\Id_{N_t})^{-1}\xv_{i,i,j}$  yields to,
\begin{align}
\mathbb{E} & \LSB \Big| \xv^H_{i,i,j}(\Sigmam^{{\prime}^{\mu}}_i+\Id_{N_t})^{-1}\xv_{i,i,j} - \frac{1}{N_t} \trace(\Sigmam^{{\prime}^{\mu}}_i+\Id_{N_t})^{-1}\Big|^k \RSB \nonumber \\& \qquad \qquad \qquad \qquad \qquad \leq \frac{C_1}{N^\frac{k}{2}_t}  \qquad \forall i,j \label{eqn:bound1}
\end{align}
for $k \geq 2,$  and constant $C_1$ independent of $N_t$ and $K.$
Additionally, from the result of Lemma 6.1, \cite{hachem07}, we have
\begin{align}
\mathbb{E} \LSB \Big|\frac{1}{N_t} \trace(\Sigmam^{{\prime}^{\mu}}_i+\Id_{N_t})^{-1}-\bar{m}_{i}\Big|^k \RSB \leq \frac{C_2}{N^\frac{k}{2}_t}, \ \forall i,j \label{eqn:bound2}
\end{align}
for $k \geq 2$ constant $C_2$ independent of $N_t$ and $K.$
Therefore, from $\eqref{eqn:bound1},$ $\eqref{eqn:bound2}$ and Holder's inequality
$(|x+y|^k \leq 2^{k-1}(|x|^k+|y|^k))$ we conclude that for some constant $C_3,$
\begin{align}
\mathbb{E} \LSB \Big| \xv^H_{i,i,j}(\Sigmam^{{\prime}^{\mu}}_i+\Id_{N_t})^{-1}\xv_{i,i,j} - \bar{m}_{i} \Big|^k\RSB \leq \frac{C_3}{N^\frac{k}{2}_t}, \ \forall i,j. \label{eqn:bound3}
\end{align}
From \eqref{eqn:ULscaled}, we have, $\xv^H_{i,i,j}(\Sigmam^{{\prime}^{\mu}}_i+\Id_{N_t})^{-1}\xv_{i,i,j} = \frac{\Lambda_{i,j}(\mu_{i,j})}{\sigma_{i,i,j} \mu_{i,j}}.$ Moreover, at the convergence of the fixed point equation \eqref{eqn:lam_stil}, $\bar{m}_i = \frac{\gamma_{i,j}}{\sigma_{i,i,j} \mu_{i,j}}.$ Substituting in \eqref{eqn:bound3}
we have,
\begin{align}
\mathbb{E} \LSB \Big|\Lambda_{i,j}(\mu_{i,j})-\gamma_{i,j}\Big|^k \RSB \leq \frac{C_4}{N_t^\frac{k}{2}}, \ \forall i,j \label{eqn:jaja555}
\end{align}
where $C_4 = C_3 \sigma_{i,i,j} \mu_{i,j}.$ Note that $\sigma_{i,i,j}$ is bounded. 
Moreover for any target SINR $\gammav$ 
satisfying the feasibility conditions of \eqref{eqn:feas_cond_final}, $\limsup_{N_t,K \to \infty} \mu_{i,j}$ is bounded.  
Therefore $\limsup_{N_t,K \to \infty} C_4$ is bounded as well.
In order to prove convergence results, we examine the properties of the supremum
over all the indices $i,j.$
\begin{align}
 \mathbb{E} \LSB \sup_{i,j} \Big|\Lambda_{i,j}(\mu_{i,j})-\gamma_{i,j}\Big|^k \RSB & \stackrel{(a)}{\leq}  \sum_{i,j} \mathbb{E} \LSB \Big|\Lambda_{i,j}(\mu_{i,j})-\gamma_{i,j}\Big|^k \RSB  \nonumber \\ 
 & \stackrel{(b)}{\leq} NK \frac{C_4}{N_t^\frac{k}{2}} =  \frac{C_5}{N_t^{\frac{k}{2}-1}} \label{eqn:junk}
\end{align}
where $C_5 = C_4 N \beta$ (where $\beta = \frac{K}{N_t},$ a finite value).
Inequality $(a)$ follows from the linearity of expectation operation and $(b)$ follows from
the bound in \eqref{eqn:jaja555}.
Finally, we make use of the following inequality (which can be shown easily, details omitted here)
\begin{align}   
   \sup_{i,j} \mathbb{E} \LSB |\Lambda_{i,j}(\mu_{i,j})-\gamma_{i,j}|^k \RSB & \leq  \mathbb{E} \LSB \sup_{i,j} |\Lambda_{i,j}(\mu_{i,j})-\gamma_{i,j}|^k \RSB \label{eqn:ineq1}.
\end{align}
From $\eqref{eqn:junk}$ and $\eqref{eqn:ineq1}$ we deduce,
\begin{align}
\sup_{i,j} \mathbb{E} \LSB |\Lambda_{i,j}(\mu_{i,j})-\gamma_{i,j}|^k \RSB \leq \frac{C_5}{N_t^{\frac{k}{2}-1}}. \label{eqn:haha2112}
\end{align}
By taking $k$ to be sufficiently high ($k \geq 6$ in this case), the right hand side of \eqref{eqn:haha2112}
is summable.
By Markov Inequality ($(5.31)$ of \cite{billing95}) and the Borel Cantelli lemma (Theorem $4.3$ of \cite{billing95}),
it follows that
$\Lambda_{i,j}(\mu_{i,j})-\gamma_{i,j} \xrightarrow[N_t,K \to \infty]{a.s.} 0 , \ \forall i,j.$

\section*{Appendix E: Convergence proof of the Downlink Interference}
We only focus on the convergence of inter-cell interference term
(the convergence of the intra-cell interference follows 
in a similar manner).
Throughout this section, we denote
$\Phi_n = \sum^N_{b = 1} \sum^K_{k = 1} \frac{\mu_{b,k}}{N_t} \hv_{n,b,k} \hv^H_{n,b,k} + \Id.$
Consider the inter-cell interference term
$ \sum_{k}|\vv^H_{n,k}\hv_{n,i,j}|^2 \ (n \neq i).$
Substituting for $\vv_{n,k} = \frac{\sqrt{\bar{\delta}_{n,k}}}{N_t} \Phi_n^{-1} \hv_{n,n,k},$ we obtain,
\begin{align}
\sum_k  |{\vv}^H_{n,k} & \hv_{n,i,j}|^2  =
\sum_k \frac{\bar{\delta}_{n,k}}{N^2_t} \hv^H_{n,i,j} \Phi_n^{-1} \hv_{n,n,k} \hv^H_{n,n,k}\Phi_n^{-1}\hv_{n,i,j}.
\end{align}
Performing a change of variable
$\xv_{n,i,j}  = \frac{1}{\sqrt{N_t}} \hv_{n,i,j}$
and $\xv_{n,n,k}  = \frac{1}{\sqrt{N_t}} \hv_{n,n,k},$
we obtain,
\begin{align}
\sum_k  |&{\vv}^H_{n,k}  \hv_{n,i,j}|^2 \nonumber \\ & =   \xv^H_{n,i,j} \Phi_n^{-1} \sum_k \LB \bar{\delta}_{n,k} \xv_{n,n,k} \xv^H_{n,n,k} \RB \Phi_n^{-1}\xv_{n,i,j}.
\end{align}
We use Lemma \ref{lem:lemma1} to remove the column
$\xv_{n,i,j}$ from the matrix $\Phi_n.$
Denoting $\Phi^{\prime}_n = \Phi_n-\mu_{i,j}\xv_{n,i,j} \xv^H_{n,i,j},$ 
we obtain
\begin{align}
\sum_k & |{\vv}^H_{n,k}  \hv_{n,i,j}|^2 \nonumber \\ & = \frac{\xv^H_{n,i,j} {\Phi_n^{\prime}}^{-1} \LB \sum_k  \bar{\delta}_{n,k} \xv_{n,n,k} \xv^H_{n,n,k} \RB {\Phi_n^{\prime}}^{-1}\xv_{n,i,j}}{(1+\mu_{i,j} \xv^H_{n,i,j} {\Phi_n^{\prime}}^{-1} \xv_{n,i,j})^2} \label{eqn:refhere001}.
\end{align}
First, it can be easily verified that the denominator term in \eqref{eqn:refhere001} converges to
\begin{align}
(1+\mu_{i,j}\xv^H_{n,i,j} {\Phi_n^{\prime}}^{-1} \xv_{n,i,j} )^2
\asymp (1+\mu_{i,j}\sigma_{n,i,j}\bar{m}_n)^2 \label{eqn:finres1}.
\end{align}
We now focus on the terms of the numerator 
of \eqref{eqn:refhere001}. 
Let us denote the matrix
$\Am = {\Phi_n^{\prime}}^{-1} \LB \sum_k  \bar{\delta}_{n,k} \xv_{n,n,k} \xv^H_{n,n,k} \RB {\Phi_n^{\prime}}^{-1}.$
Applying Lemma \ref{lem:lemma2} on the term $\xv^H_{n,i,j} \Am \xv_{n,i,j}$, we have
\begin{align}
\left| \xv^H_{n,i,j} \Am \xv_{n,i,j} - \frac{\sigma_{n,i,j}}{N_t} \trace({\Am}) \right|^m \leq \frac{C_m}{N^{\frac{m}{2}}_t} ||\Am||^m.
\end{align}
If $||\Am|| < \infty,$ then we
can conclude that 
\begin{align}
\xv^H_{n,i,j} \Am \xv_{n,i,j} \asymp \frac{\sigma_{n,i,j}}{N_t} \trace({\Am})
\label{eqn:finres2}.
\end{align}
The proof of the fact that $||\Am||$ is bounded 
is deferred till the end of this section (see Subsection D at the end of this appendix).

Let us now focus on the term $\frac{1}{N_t}\trace(\Am).$
First, we perform some straightforward manipulations.
\begin{align}
\frac{1}{N_t}\trace(\Am) & = \frac{1}{N_t}\trace \LB {\Phi_n^{\prime}}^{-1} \LB \sum_k  \bar{\delta}_{n,k} \xv_{n,n,k} \xv^H_{n,n,k} \RB{\Phi_n^{\prime}}^{-1} \RB \nonumber \\
& \stackrel{(a)}{=} \frac{1}{N_t}\trace \LB  \LB \sum_k  \bar{\delta}_{n,k} \xv_{n,n,k} \xv^H_{n,n,k} \RB{\Phi_n^{\prime}}^{-2} \RB \nonumber \\
& \stackrel{(b)}{=} \frac{1}{N_t} \sum_k  \bar{\delta}_{n,k}  \LB  \xv^H_{n,n,k} {\Phi_n^{\prime}}^{-2} \xv_{n,n,k} \RB, \label{eqn:intbd1a}
\end{align}
were $(a)$ follows from $\trace(\Am \Bm ) = \trace(\Bm \Am)$
and $(b)$ follows by noting $\trace(\sum_i \Am_i) = \sum_i \trace(\Am_i)$
and $\trace(\xv \xv^H \Am) = \xv^H \Am \xv.$
Applying Lemma \ref{lem:lemma1} to 
extract the column vector $\xv_{n,n,k}$ 
from the matrix ${\Phi^{\prime}_n}^{-1},$
and denoting ${\Phi^{\prime \prime}_n} = {\Phi^{\prime}_n} - \mu_{n,k} \xv_{n,n,k} \xv^H_{n,n,k},$ we obtain,
\begin{align}
\xv^H_{n,n,k} {\Phi_n^{\prime}}^{-2} \xv_{n,n,k} = 
\frac{\xv^H_{n,n,k} {\Phi_n^{\prime \prime}}^{-2} \xv_{n,n,k}}{(1+\mu_{n,k} \xv^H_{n,n,k} {\Phi_n^{\prime}}^{-1} \xv_{n,n,k})^2}. \label{eqn:refhere002}
\end{align}
Note that the denominator term of \eqref{eqn:refhere002}
converges to the following:
\begin{align}
(1+\mu_{n,k} \xv^H_{n,n,k} {\Phi_n^{\prime}}^{-1} \xv_{n,n,k})^2 
\asymp (1+\sigma_{n,n,k}\mu_{n,k} \bar{m}_n)^2 \label{eqn:finres3}.
\end{align}
Let us focus on the numerator term
$\frac{1}{N_t} \sum_k  \bar{\delta}_{n,k}  \LB  \xv^H_{n,n,k} {\Phi_n^{\prime}}^{-2} \xv_{n,n,k} \RB.$
Using the fact that $\frac{K}{N_t}$ is bounded, and applying the result of Theorem \ref{thm:deteqm1}, eq. \eqref{eqn:DerConv}, it can be shown that (the exact details of the derivation are omitted)
\begin{align}
\frac{\sum_k  \bar{\delta}_{n,k}  \LB  \xv_{n,n,k} {\Phi_n^{\prime \prime}}^{-2} \xv^H_{n,n,k} \RB}{N_t}  \asymp \frac{\sum_k \bar{\delta}_{n,k} \sigma_{n,n,k} {\bar{m}}^\prime_n}{N_t} \label{eqn:finres4}.
\end{align}
From the results of \eqref{eqn:refhere001}
and \eqref{eqn:finres4}, it can be concluded that
\begin{align}
& \sum_k |\vv^H_{n,k} \hv_{n,i,j}|^2 
\asymp \sum_k \frac{\bar{\delta}_{n,k}\bar{G}_{n,i,j} \bar{G}_{n,n,k} {\bar{m}}^\prime_n}{N_t}, \ \ n \neq i.
\end{align}

\subsection{Boundedness of $||\Am||$}
We will complete the proof by showing that
$||\Am|| < \infty.$
First, it can be easily noticed that the matrix 
$\Am$ is Hermitian.
We now show that the matrix $\Am$ is also
positive semi-definite.
For any vector $\gv \in \CC^{N_t \times 1},$
we examine $\gv^H \Am \gv.$
\begin{align}
& \gv^H {\Phi_n^{\prime}}^{-1} \LB \sum_k  \bar{\delta}_{n,k} \xv_{n,n,k} \xv^H_{n,n,k} \RB {\Phi_n^{\prime}}^{-1} \gv \nonumber \\
& = \sum_k  \bar{\delta}_{n,k} \gv^H {\Phi_n^{\prime}}^{-1} \xv_{n,n,k} 
\xv^H_{n,n,k} {\Phi_n^{\prime}}^{-1} \gv
\end{align}
Denoting $\yv = {\Phi_n^{\prime}}^{-1} \gv,$
\begin{align}
\sum_k  \bar{\delta}_{n,k} \gv^H \yv \yv \gv^H 
 = \sum_k  \bar{\delta}_{n,k} |\gv^H \yv|^2 \geq 0
\end{align}
since $\bar{\delta}_{n,k} \geq 0$ and $|\gv^H \yv|^2 \geq 0.$
Therefore, the matrix $\Am$ is semi-definite.
Next, we note that (since $\Am$ is Hermitian and positive semi-definite)
\begin{align}
||\Am|| = \lambda_{\max} (\Am).
\end{align}
Consider $\zv,$ the eigen vector corresponding to the
maximum eigen value of the matrix $\Am.$
\begin{align}
||\Am|| & \leq \zv^H {\Phi_n^{\prime}}^{-1} \LB \sum_k  \bar{\delta}_{n,k} \xv_{n,n,k} \xv^H_{n,n,k} \RB {\Phi_n^{\prime}}^{-1} \zv \nonumber \\
& = \sum_k  \bar{\delta}_{n,k} \zv^H {\Phi_n^{\prime}}^{-1} \xv_{n,n,k} 
\xv^H_{n,n,k} {\Phi_n^{\prime}}^{-1} \zv \nonumber \\
& = \sum_k  \LB \frac{\bar{\delta}_{n,k}}{\mu_{n,k}} \RB \mu_{n,k} \zv^H {\Phi_n^{\prime}}^{-1} \xv_{n,n,k} 
\xv^H_{n,n,k} {\Phi_n^{\prime}}^{-1} \zv \nonumber 
\end{align}
Note that $\beta = \max_{n,k} \frac{\bar{\delta}_{n,k}}{\mu_{n,k}}$
is a bounded value.
Substituting, we have
\begin{align}
||\Am|| & \leq \beta \sum_k   \mu_{n,k} \zv^H {\Phi_n^{\prime}}^{-1} \xv_{n,n,k} 
\xv^H_{n,n,k} {\Phi_n^{\prime}}^{-1} \zv \nonumber \\
& \leq \beta \sum_k   \mu_{n,k} \zv^H {\Phi_n^{\prime}}^{-1} \xv_{n,n,k} 
\xv^H_{n,n,k} {\Phi_n^{\prime}}^{-1} \zv + \beta \zv^H {\Phi_n^{\prime}}^{-2}\zv \nonumber.
\end{align}
where the last inequality follows by noting that
$\beta \zv^H {\Phi_n^{\prime}}^{-2}\zv \geq 0.$
Therefore,
\begin{align}
||\Am||  & \leq \beta \zv^H {\Phi_n^{\prime}}^{-1} \LB \sum_k   \mu_{n,k} \xv_{n,n,k} 
\xv^H_{n,n,k}+ \Id \RB {\Phi_n^{\prime}}^{-1} \zv \nonumber \\
& = \beta \zv^H {\Phi_n^{\prime}}^{-1}  {\Phi_n^{\prime}}  {\Phi_n^{\prime}}^{-1} \zv \nonumber \\
& \leq \beta \zv^H {\Phi_n^{\prime}}^{-1} \zv
\leq \beta \lambda_{\max} ({\Phi_n^{\prime}}^{-1} ) < \infty \label{eqn:BoundSpecNorm}.
\end{align}

\section*{Appendix F: Convergence of the Lagrangian}
\label{sec:Proof1}
First we start with the proof of \eqref{eqn:LagDLConv}.
Consider the Lagrangian given in the form of
\eqref{eqn:LagDL}. For simplicity of notations, let us define
\begin{align}
U_{i,j} & \defines  \frac{|\vv^H_{i,j} h_{i,i,j}|^2}{\gamma_{i,j}} \nonumber \\
I^{(1)}_{i,j} & \defines \sum_{k \neq j} |\vv^H_{i,k} \hv_{i,i,j}|^2, \  
\bar{I}^{(1)}_{i,j}  \defines  \sum_{k \neq i} \frac{\bar{\delta}_{i,k}\bar{G}_{i,i,j} \bar{G}_{i,i,k} {\bar{m}}^\prime_i}{N_t}  \nonumber \\
I^{(2)}_{i,j} & \defines \sum_{n,k} |\vv^H_{n,k} \hv_{n,i,j}|^2, \nonumber \  
\bar{I}^{(2)}_{i,j}  \defines  \sum_{n,k} \frac{\bar{\delta}_{n,k}\bar{G}_{n,i,j} \bar{G}_{n,n,k} {\bar{m}}^\prime_n}{N_t}  \nonumber \\
& \qquad \qquad \qquad \qquad \qquad
n \neq i.
\end{align}
We examine the asymptotic convergence of the terms of
the Lagrangian, i.e.,
$\lim_{N_t,K \to \infty} L(\vv,\mu).$
By using arguments similar to the derivation 
of \eqref{eqn:finres4}, the following can be shown:
\begin{align}
\sum_{i,j} & \frac{\mu_{i,j}}{N_t} \LSB \frac{U_{i,j}}{\gamma_{i,j}}-I^{(1)}_{i,j}-I^{(2)}_{i,j} - N_0\RSB \asymp 
\nonumber \\ & \sum_{i,j} \frac{\mu_{i,j}}{N_t} \LSB \frac{\bar{U}_{i,j}}{\gamma_{i,j}}-\bar{I}^{(1)}_{i,j}-\bar{I}^{(2)}_{i,j} -N_0\RSB \label{eqn:refhere0014}.
\end{align}
Therefore,
\begin{align}
\lim_{N_t,K \to \infty} & L^{\text{DL}}(\vv,\mu) 
 \asymp \lim_{N_t,K \to \infty} \sum_{i,j}\vv^H_{i,j} \vv_{i,j} \nonumber \\
& +  \sum_{i,j} \frac{\mu_{i,j}}{N_t} \LSB \frac{\bar{U}_{i,j}}{\gamma_{i,j}}-\bar{I}^{(1)}_{i,j}-\bar{I}^{(2)}_{i,j}-N_0\RSB.
\end{align}
Following the argument of \eqref{eqn:lin_eq}, we note that
\begin{align}
\frac{\bar{U}_{i,j}}{\gamma_{i,j}}-\bar{I}^{(1)}_{i,j}-\bar{I}^{(2)}_{i,j} -N_0 = 0 \qquad \forall i,j \label{eqn:refhere0015}.
\end{align}
From \eqref{eqn:refhere0014} and \eqref{eqn:refhere0015}, we conclude that 
\begin{align}
\lim_{N_t,K \to \infty} L^{\text{DL}}&(\vv,\mu) \asymp
\lim_{N_t,K \to \infty} \sum_{i,j}\vv^H_{i,j} \vv_{i,j}.
\end{align}
This completes the proof of \eqref{eqn:LagDLConv}.

Next, we proceed to the proof of \eqref{eqn:LagULConv}.
Consider the Lagrangian in the form of \eqref{eqn:LagUL}.
Using \eqref{eqn:refhere0011}, we can rewrite the term
$\sum_{i,j}\vv^H_{i,j} \Bm_{i,j} \vv_{i,j}$ as
\begin{align}
\sum_{i,j}  \vv^H_{i,j} \Bm_{i,j} \vv_{i,j} &  = 
\sum_{i,j} \Big{[} \vv^H_{i,j} \vv_{i,j} - \frac{\mu_{i,j}}{\gamma_{i,j} N_t} |\vv^H_{i,j} \hv_{i,i,j}|^2 \nonumber \\
& + \sum_{(n,k) \neq (i,j)}  \frac{\mu_{n,k}}{N_t}
|\vv^H_{i,j} \hv_{i,n,k}|^2 \Big{]} \label{eqn:refhere0012}. 
\end{align}
Substituting for $\vv_{i,j} = \sqrt{\frac{\bar{\delta}_{i,j}}{N_t}} \hat{\vv}_{i,j}$ in \eqref{eqn:refhere0012}, we obtain
\begin{align}
& \sum_{i,j}  \vv^H_{i,j} \Bm_{i,j} \vv_{i,j}   = 
\sum_{i,j} \frac{\bar{\delta}_{i,j}}{N_t} \Big{[} \hat{\vv}^H_{i,j} \hat{\vv}_{i,j} \nonumber \\ & \ - \frac{\mu_{i,j}}{\gamma_{i,j} N_t} |\hat{\vv}^H_{i,j} \hv_{i,i,j}|^2 
+ \sum_{(n,k) \neq (i,j)}  \frac{\mu_{n,k}}{N_t}
|\hat{\vv}^H_{i,j} \hv_{i,n,k}|^2  \Big{]} \label{eqn:refhere0013}.
\end{align}
We now examine the asymptotic convergence of
the terms on the right hand side of \eqref{eqn:refhere0013}.
The following convergence results can be shown (the details are
omitted here as they follow from steps similar to the proof of previous results).
\begin{align}
& \hat{\vv}^H_{i,j}\hat{\vv}_{i,j}  \asymp   \frac{\sigma_{i,i,j} \bar{m}^\prime_{i}}{(1+\mu_{i,j}\sigma_{i,i,j}\bar{m}_{i})^2 } 
\label{eqn:here1006}.   \\
& \frac{\mu_{i,j}}{N_t}|\hat{\vv}^H_{i,j}\hv_{i,i,j}|^2  \asymp \mu_{i,j} \LB \frac{\sigma_{i,i,j}\bar{m}_{i}}{1+\mu_{i,j}\sigma_{i,i,j}\bar{m}_{i}} \RB^2 
\label{eqn:here1003}.   \\
& \sum_{(n,k) \neq (i,j)} \frac{\mu_{n,k}}{N_t}  |\hat{\vv}^H_{i,j}\hv_{i,n,k}|^2  \asymp \nonumber \\ & \sum_{(n,k) \neq (i,j)}\frac{\mu_{n,k}\sigma_{i,i,j} \sigma_{i,n,k}\bar{m}^\prime_{i}}{N_t(1+\mu_{i,j}\sigma_{i,i,j}\bar{m}_{i})^2 (1+\mu_{n,k}\sigma_{i,n,k}\bar{m}_{i})^2} .   
\end{align}
From \eqref{eqn:here1003} and using the fact that $\gamma_{i,i,j} = \sigma_{i,i,j} \mu_{i,j}\bar{m}_i,$
it follows that
\begin{align}
\frac{\mu_{i,j}}{\gamma_{i,j}N^2_t}|\hat{\vv}^H_{i,j}\hv_{i,i,j}|^2 \asymp   \frac{\sigma_{i,i,j}\bar{m}_{i}}{(1+\mu_{i,j}\sigma_{i,i,j}\bar{m}_{i})^2} .
\label{eqn:here1003a}.   
\end{align}
Additionally, similar to the proof of \eqref{eqn:refhere0014},
it can be shown that
\begin{align}
& \sum_{i,j} \frac{\bar{\delta}_{i,j}}{N_t} \Big{[}\frac{1}{N_t} \hat{\vv}^H_{i,j} \hat{\vv}_{i,j} - \frac{\mu_{i,j}}{\gamma_{i,j} N^2_t} |\hat{\vv}^H_{i,j} \hv_{i,i,j}|^2  \nonumber \\
&  \sum_{(n,k) \neq (i,j)}  \frac{\mu_{n,k}}{N^2_t}
|\hat{\vv}^H_{i,j} \hv_{i,n,k}|^2  \Big{]} \asymp \nonumber \\
&\qquad  \sum_{i,j} \frac{\bar{\delta}_{i,j}}{N_t} \Big{[} \frac{\sigma_{i,i,j} \bar{m}^\prime_{i}}{(1+\mu_{i,j}\sigma_{i,i,j}\bar{m}_{i})^2 }   
 -    \frac{\sigma_{i,i,j}\bar{m}_{i}}{(1+\mu_{i,j}\sigma_{i,i,j}\bar{m}_{i})^2}  \nonumber \\
&\qquad + \sum_{(n,k) \neq (i,j)}\frac{\mu_{n,k}\sigma_{i,i,j} \sigma_{i,n,k}\bar{m}^\prime_{i}}{N_t(1+\mu_{i,j}\sigma_{i,i,j}\bar{m}_{i})^2 (1+\mu_{n,k}\sigma_{i,n,k}\bar{m}_{i})^2} \Big{]}.
\end{align}
Therefore, we conclude that
\begin{align}
& \lim_{N_t,K \to \infty}L^{\text{UL}}(\vv,\mu)   \asymp 
\lim_{N_t,K \to \infty} \sum_{i,j}  \frac{\mu_{i,j} N_0}{N_t} \nonumber \\
&\qquad + \sum_{i,j} \frac{\bar{\delta}_{i,j}}{N_t} \Big{[} \frac{\sigma_{i,i,j} \bar{m}^\prime_{i}}{(1+\mu_{i,j}\sigma_{i,i,j}\bar{m}_{i})^2 }   
 -    \frac{\sigma_{i,i,j}\bar{m}_{i}}{(1+\mu_{i,j}\sigma_{i,i,j}\bar{m}_{i})^2}  \nonumber \\
&\qquad +\sum_{(n,k) \neq (i,j)}\frac{\mu_{n,k}\sigma_{i,i,j} \sigma_{i,n,k}\bar{m}^\prime_{i}}{N_t(1+\mu_{i,j}\sigma_{i,i,j}\bar{m}_{i})^2 (1+\mu_{n,k}\sigma_{i,n,k}\bar{m}_{i})^2} \Big{]}.
\label{eqn:uplink1}
\end{align}
Finally, we show the following result:
\begin{align}
& \frac{\sigma_{i,i,j} \bar{m}^\prime_{i}}{(1+\mu_{i,j}\sigma_{i,i,j}\bar{m}_{i})^2 }   
 -    \frac{\sigma_{i,i,j}\bar{m}_{i}}{(1+\mu_{i,j}\sigma_{i,i,j}\bar{m}_{i})^2}  \nonumber \\
&\ \ + \sum_{(n,k) \neq (i,j)}\frac{\mu_{n,k}\sigma_{i,i,j} \sigma_{i,n,k}\bar{m}^\prime_{i}}{N_t(1+\mu_{i,j}\sigma_{i,i,j}\bar{m}_{i})^2 (1+\mu_{n,k}\sigma_{i,n,k}\bar{m}_{i})^2} = 0
\label{eqn:uplink0}.
\end{align}
The proof of this result is provided next. Consider the
sum of the first and last terms of \eqref{eqn:uplink0}.
\begin{align}
 & \sum_{n,k }\frac{\mu_{n,k}\sigma_{i,i,j} \sigma_{i,n,k}\bar{m}^\prime_{i}}{N_t(1+\mu_{i,j}\sigma_{i,i,j}\bar{m}_{i})^2 (1+\mu_{n,k}\sigma_{i,n,k}\bar{m}_{i})^2} \nonumber \\
&  \qquad \qquad \qquad + \frac{\sigma_{i,i,j} \bar{m}^\prime_{i}}{(1+\mu_{i,j}\sigma_{i,i,j}\bar{m}_{i})^2 }  \nonumber \\
& = \frac{\sigma_{i,i,j} \bar{m}^\prime_{i}}{(1+\mu_{i,j}\sigma_{i,i,j}\bar{m}_{i})^2 } \LB 1+ \frac{1}{N_t} \sum_{n,k}\frac{\mu_{n,k}\sigma_{i,n,k} }{(1+\mu_{n,k}\sigma_{i,n,k}\bar{m}_{i})^2 } \RB.
\label{eqn:refhere0018}
\end{align}
Multiplying and dividing the right hand side of \eqref{eqn:refhere0018} by $\bar{m}_i,$ 
we obtain
\begin{align}
& \eqref{eqn:refhere0018} = \nonumber \\ & \frac{\sigma_{i,i,j} \bar{m}^\prime_{i}}{\bar{m}_i (1+\mu_{i,j}\sigma_{i,i,j}\bar{m}_{i})^2 } \LB \bar{m}_i+ \frac{1}{N_t} \sum_{n,k}\frac{\mu_{n,k}\sigma_{i,n,k} \bar{m}_i }{(1+\mu_{n,k}\sigma_{i,n,k}\bar{m}_{i})^2 } \RB. \label{eqn:here1001}
\end{align}
Recall the fixed point equation for the computation of $\bar{m}_i$ in \eqref{eqn:FPeqn}. Upon rearranging the terms
\begin{align}
 \bar{m}_i  = 1  - \frac{1}{N_t} \sum_{n,k} \frac{\sigma_{i,n,k} \mu_{n,k} \bar{m}_i}{1+\sigma_{i,n,k}\mu_{n,k} \bar{m}_i} .
\end{align}
Using $\bar{m}_i$ for the terms inside the brackets of
the right hand side of \eqref{eqn:here1001}, we have,
\begin{align}
 1-\frac{1}{N_t} & \sum_{n,k} \frac{\sigma_{i,n,k} \mu_{n,k} \bar{m}_i}{1+\sigma_{i,n,k}\mu_{n,k} \bar{m}_i}+ \frac{1}{N_t} \sum_{n,k}\frac{\mu_{m,n}\sigma_{i,n,k} \bar{m}_i }{(1+\mu_{n,k}\sigma_{i,n,k}\bar{m}_{i})^2 } \nonumber \\
= &  1-\frac{1}{N_t} \sum_{n,k} \frac{(\sigma_{i,n,k} \mu_{n,k} \bar{m}_i)^2}{(1+\sigma_{i,n,k}\mu_{n,k} \bar{m}_i)^2} \label{eqn:refhere0020}.
\end{align}
Rearranging the expression for $\bar{m}^\prime_i$ in 
\eqref{eqn:stilderivative}, we have,
\begin{align}
\frac{\bar{m}^2_i}{\bar{m}^\prime_i} = 1-\frac{1}{N_t}
\sum_{n,k} \LB \frac{\sigma_{i,n,k} \mu_{n,k} \bar{m}_i}{1+\sigma_{i,n,k}\mu_{n,k} \bar{m}_i} \RB^2. \label{eqn:refhere0019}
\end{align}
Using \eqref{eqn:refhere0020} and \eqref{eqn:refhere0019} in \eqref{eqn:here1001}, it can be verified that
\begin{align} 
& \eqref{eqn:here1001} =  \frac{\sigma_{i,i,j} \bar{m}_i}{(1+\sigma_{i,n,k}\mu_{n,k} \bar{m}_i)^2}. \label{eqn:here1005}
\end{align}
From \eqref{eqn:here1005}, the result of 
\eqref{eqn:uplink0} follows.

Finally, using \eqref{eqn:uplink0} in \eqref{eqn:uplink1}
we conclude that 
\begin{align}
& \lim_{N_t,K \to \infty}L^{\text{UL}}(\vv,\mu)   \asymp 
\lim_{N_t,K \to \infty} \sum_{i,j}  \frac{\mu_{i,j} N_0}{N_t}.
\end{align}
This completes the proof of \eqref{eqn:LagULConv}.

\section*{Appendix G: Convergence of uplink power allocation}
The proof proceeds by an approach similar to \cite{Randa2010}
of finding the upper and lower bounds of the optimal solution and show that these bounds coincides asymptotically with the solution obtained by the ROBF algorithm.
Recall the fixed point equation for the computation
of the uplink power allocation in the CBF algorithm in
\eqref{eqn:lam_org}. It can be rewritten as
\begin{align} 
f_{i,j}(\betav) = \frac{1}{\frac{1}{\gamma_{i,j} N_t}\hv^H_{i,i,j}(\Sigmam^{{\prime}^\beta}_i+\Id_{N_t})^{-1}\hv_{i,i,j}} \ \forall i,j \label{eqn:FP_Colremoved}
\end{align}
Also recall from the arguments of \cite{Dahrouj2010}
that $f_{i,j}(\betav)$ is a standard function.

Let us denote $\mu_{i,j}(\delta_{i,j}),$ as the solution 
provided by the ROBF algorithm with target SINR 
$\gamma_{i,j}+\delta_{i,j} \ \forall i,j,$ 
where $\delta_{i,j} \geq 0$ is a small positive constant
(i.e. the solution provided by the fixed point equation $\eqref{eqn:lam_stil}$ with $\gamma_{i,j}$ replaced by 
$\gamma_{i,j}+\delta_{i,j}, \ \forall i,j).$ 
We now examine the achieved SINR in the uplink 
with a power allocation of $\mu_{i,j}(\delta_{i,j}),$
i.e.,
\begin{align}
\Lambda_{i,j} & (\muv(\deltav)) \nonumber \\ & = \frac{\mu_{i,j}(\delta_{i,j})}{N_t} \hv^H_{i,i,j}(\Sigmam^{{\prime}^{\mu_{i,j}(\delta_{i,j})}}_i+\alpha_i\Id_{N_t})^{-1}\hv_{i,i,j}. \label{eqn:refimmd}
\end{align}
Similar to the result of Theorem \ref{thm:ULDLSINRConv} (convergence of the uplink SINR), it can be proved that
\begin{align}
\Lambda_{i,j}(\mu_{i,j}(\delta_{i,j}))-(\gamma_{i,j}+\delta_{i,j}) \xrightarrow[N_t,K \to \infty]{a.s.} 0 , \ \forall i,j.
\end{align}
Since $\delta_{i,j} \geq 0$, we have 
\begin{align}
\Lambda_{i,j}(\mu_{i,j}(\delta_{i,j}))-\gamma_{i,j} \stackrel{a.s.}{\geq} 0 , \ \forall i,j \label{eqn:asconv1}
\end{align}
Since the achieved SINR with an uplink power
$\mu_{i,j}(\delta_{i,j})$ allocation is asymptotically greater than $\gamma_{i,j},$ the power allocation
$\mu_{i,j}(\delta_{i,j})$ is a feasible solution to the
uplink problem with target SINR $\gamma_{i,j}.$
From the monotonically property of the standard function 
$f(\betav),$ we can conclude that 
\begin{align}
\mu_{i,j}(\delta_{i,j}) \geq \lambda_{i,j} \qquad \forall i,j.
\end{align}

Similarly, let us define $\mu_{i,j}(-\delta_{i,j})$ as the solution to
the fixed point equation $\eqref{eqn:lam_stil}$ with target SINR
$\gamma_{i,j}-\delta_{i,j}, \ \forall i,j.$
By similar argument as above, we have
\begin{align}
\Lambda_{i,j}(\mu_{i,j}(-\delta_{i,j}))-(\gamma_{i,j}-\delta_{i,j}) \xrightarrow[N_t,K \to \infty]{a.s.} 0 , \ \forall i,j \label{eqn:asconv2}.
\end{align}
Using the definition of almost sure convergence, 
we can write that for $\epsilon_{i,j} > 0,$
there exists $N^\prime$ such that for all
$N_t > N^\prime,$ we can have
\begin{align}
|\Lambda_{i,j}(\mu_{i,j}(-\delta_{i,j}))-(\gamma_{i,j}-\delta_{i,j})| < \epsilon_{i,j}.
\end{align}
If we can select a value of $\epsilon_{i,j} < \delta_{i,j},$
then there exists a large enough $N^{\prime \prime}$
such that for all $N_t > N^{\prime \prime},$
the following
\begin{align}
\frac{\mu_{i,j}(-\delta_{i,j})}{N_t} \hv^H_{i,i,j}(\Sigmam^{{\prime}^{\mu_{i,j}(-\delta_{i,j})}}_i+\Id_{N_t})^{-1}\hv_{i,i,j} \leq \gamma_{i,j}, \ \forall i,j \label{eqn:jaja1141}
\end{align}
holds true.
Equation \eqref{eqn:jaja1141} implies that
\begin{align}
\mu_{i,j}(-\delta_{i,j}) & {\leq} \frac{1}{\frac{1}{\gamma_{i,j} N_t}\hv^H_{i,i,j}(\Sigmam^{{\prime}^{\mu_{i,j}(-\delta_{i,j})}}_i+\Id_{N_t})^{-1}\hv_{i,i,j}} , \ \forall i,j . \nonumber \\
& = f_{i,j}(\muv(-\deltav)) \label{eqn:monotonic}.
\end{align}
In other words, $\muv(-\deltav)$ an infeasible point
for the standard function $f_{i,j}(\muv(-\deltav)).$

Let us consider that we want to find a fixed point
to 
\begin{align}
\beta_{i,j} = f_{i,j}(\betav) \qquad \forall i,j.
\end{align}
The fixed point corresponding to this is the optimal 
uplink power allocation of the CBF algorithm $\lambda_{i,j}.$
We start with an initial value of $\beta^0_{i,j} = \mu_{i,j}(-\delta_{i,j}).$
Let us consider the iterations of $\beta^{t+1}_{i,j} = f_{i,j}(\betav^t),$
with $\betav^0 =  \muv(-\deltav).$ Rewriting the first iteration,
we have,
\begin{align}
\beta^{1}_{i,j} = f(\muv(-\deltav)) \stackrel{(a)}{\geq} \mu_{i,j}(-\delta_{i,j}) \qquad \forall i,j,
\end{align}
where $(a)$ follows from \eqref{eqn:monotonic}.
Also note that this is true for all $i,j.$
Using the fact that the function $f$ is standard
and hence monotonic, the sequence $\beta^t_{i,j}, \ t = 0,1,2,\dots,$ monotonically increases and converges to 
$\lambda_{i,j}.$
Therefore, it follows that
\begin{align}
\mu_{i,j} (-\delta_{i,j}) \leq \lambda_{i,j}.
\end{align}
Therefore, from the above arguments, we conclude that
\begin{align}
\mu_{i,j}(-\delta_{i,j}) \stackrel{a.s.}{\leq}
 \lambda_{i,j} \stackrel{a.s.}{\leq} \mu_{i,j}(\delta_{i,j}) \qquad \forall i,j.
\end{align}
By taking $\delta_{i,j}$ arbitrarily small, we have
\begin{align}
\mu_{i,j}-\lambda_{i,j} \xrightarrow[N_t,K \to \infty]{a.s.} 0 \qquad \forall i,j.
\end{align}
Similar to the proofs as before, 
\begin{align}
\frac{\sum_{i,j}\mu_{i,j}}{N_t}-\frac{\sum_{i,j}\lambda_{i,j}}{N_t} \xrightarrow[N_t,K \to \infty]{a.s.} 0 \qquad \forall i,j.
\end{align}

\section*{Appendix H: Impact of Imperfect CSI and Pilot Contamination}
Recall the uplink receive filer with CSI estimate in 
\eqref{eqn:MMSE}.
First note that for the MMSE estimate of the form 
\eqref{eqn:MMSEEstimate}, $\hat{\hv}_{i,n,k}, \ n \neq i$ and 
$\hat{\hv}_{i,i,k}$ are related as 
\begin{align}
\hat{\hv}_{i,n,k} = \frac{\sigma_{i,n,k}}{\sigma_{i,i,k}} \hat{\hv}_{i,i,k} \label{eqn:Relation}
\end{align}
Using \eqref{eqn:Relation} in \eqref{eqn:PsiDef}, we have
\begin{align}
\Psi_i & = \sum^N_{n = 1} \sum^K_{k = 1}  \mu_{n,k} \hat{\hv}_{i,n,k} \hat{\hv}^H_{i,n,k} + \Id  \nonumber \\
& = \sum^K_{k = 1}  \LB \frac{ \sum^N_{n = 1} \sigma_{i,n,k} \mu_{n,k} } { \sigma_{i,i,k}N_t} \RB \hat{\hv}_{i,i,k} \hat{\hv}^H_{i,i,k} + \Id \nonumber \\
& = \sum^K_{k = 1} \xi_{i,k}  \hat{\xv}_{i,i,k} \hat{\xv}^H_{i,i,k} + \Id
\end{align}
where
\begin{align}
\xi_{i,k} = \frac{ \sum^N_{n = 1} \sigma_{i,n,k} \mu_{n,k} } { \sigma_{i,i,k}}. 
\end{align}
Additionally, let us denote
\begin{align}
\Psi_{i,j} & = \sum^K_{k = 1, k \neq j} \xi_{i,k} \hat{\xv}_{i,i,k} \hat{\xv}^H_{i,i,k} + \Id_{N_t}  \nonumber \\ 
\Psi_{i,j,l} & = \sum^K_{k = 1, k \neq j, k \neq l} \xi_{i,k}\hat{\xv}_{i,i,k} \hat{\xv}^H_{i,i,k} + \Id_{N_t}.  
\end{align}

First, we focus on the asymptotic equivalent of the useful signal term.
\begin{align}
{\vv^{\text{est}}_{i,j}}^H \hv_{i,i,j} & = \frac{\sqrt{\bar{\delta}_{i,j}}}{N_t} \hat{\hv}^H_{i,i,j} \Psi^{-1}_i {\hv}_{i,i,j} \nonumber \\
& = {\sqrt{\bar{\delta}_{i,j}}} \hat{\xv}^H_{i,i,j} \Psi^{-1}_i {\xv}_{i,i,j} 
\label{eqn:refimm5}
\end{align}
First note that using Lemma \ref{lem:lemma1}, we can remove the column 
$\hat{\xv}_{i,i,j}$ from the matrix $\Psi^{-1}_i$ as follows:
\begin{align}
\eqref{eqn:refimm5} & =  \frac{\sqrt{\bar{\delta}_{i,j}} \hat{\xv}^H_{i,i,j} \Psi^{-1}_{i,j} {\xv}_{i,i,j}}{1+\xi_{i,j}  \hat{\xv}^H_{i,i,j} \Psi^{-1}_{i,j} \hat{\xv}_{i,i,j}} \nonumber \\
& \asymp \frac{ \sqrt{\bar{\delta}_{i,j}} \hat{\sigma}_{i,i,j}  \bar{m}^{\text{est}}_i}{1+\xi_{i,j} \hat{\sigma}_{i,i,j} \bar{m}^{\text{est}}_i} \label{eqn:impCSIusefulpower},
\end{align}
where $\bar{m}^{\text{est}}_i$ can be computed as in \eqref{eqn:StilImpCSI}.
The result in \eqref{eqn:impCSIusefulpower} can be derived as follows:
First, we look at the numerator term 
\begin{align}
& \hat{\xv}^H_{i,i,j} \Psi^{-1}_{i,j}  \xv_{i,i,j} \nonumber \\ & = {\sigma}^\prime_{i,i,j}\LB \sum^N_{b = 1} \xv_{i,b,j} + \frac{\nv_{\text{Tr}}}{\sqrt{P_{\text{Tr}}}} \RB^H {\Psi^{-1}_{i,j}}  \xv_{i,i,j} \nonumber \\
& \stackrel{(a)}{\asymp} {\sigma}^\prime_{i,i,j} \sigma_{i,i,j} \bar{m}^{\text{est}}_i \stackrel{(b)}{=} \hat{\sigma}_{i,i,j}  \bar{m}^{\text{est}}_i,
\end{align}
where ${(a)}$ follows from the results 
of Lemma \ref{lem:lemma2}, Theorem \ref{thm:deteqm1}, and noting that the column vectors $\xv_{i,b,j} \ \forall \ b \neq i, \nv_{\text{Tr}}$ are independent of $\xv_{i,i,j}.$ Step
${(b)}$ follows using the relation in \eqref{eqn:tilderelation}.
Similarly, the denominator term converges to
\begin{align*}
1+\xi_{i,j}  \hat{\xv}^H_{i,i,j} \Psi^{-1}_{i,j} \hat{\xv}_{i,i,j} \asymp
1+\xi_{i,j} \hat{\sigma}_{i,i,j} \bar{m}^{\text{est}}_i.
\end{align*}

We now investigate the asymptotic equivalent of the sum of interference power
at UT$_{i,j}$ from the BS of cell $n$, given by
\begin{align}
\sum_k  |{{\vv}^{\text{est}}_{n,k}}^{H} & \hv_{n,i,j}|^2   =
\sum_k \frac{\bar{\delta}_{n,k}}{N^2_t} \hv^H_{n,i,j} \Psi_n^{-1} \hat{\hv}_{n,n,k} \hat{\hv}^H_{n,n,k}\Psi_n^{-1}\hv_{n,i,j} \nonumber \\
& =  \sum_k  \bar{\delta}_{n,k} {\xv}_{n,i,j} \Psi_n^{-1}       \hat{\xv}_{n,n,k} \hat{\xv}^H_{n,n,k} \Psi_n^{-1} {\xv}_{n,i,j} \nonumber. 
\end{align}
We note that the CSI estimate of UT$_{i,j}$ is contaminated 
from the pilot signal of UT$_{n,j}, \ (n \neq i).$ Therefore, we 
first analyze the interference signal from UT$_{n,j}, \ (n \neq i),$ 
as follows:
\begin{align}
 & \bar{\delta}_{n,j} {\xv}^H_{n,i,j} \Psi_n^{-1}  \hat{\xv}_{n,n,j} \hat{\xv}^H_{n,n,j} \Psi_n^{-1} {\xv}_{n,i,j} \nonumber \\
& \stackrel{(a)}{=} \frac{ \bar{\delta}_{n,j} \hat{\xv}^H_{n,n,j} \Psi^{-1}_{n,j}  \xv_{n,i,j} \xv^H_{n,i,j}  {\Psi^{-1}_{n,j}} \hat{\xv}_{n,n,j}}{(1+\xi_{n,j} \hat{\xv}^H_{n,n,j} {\Psi^{-1}_{n,j}} \hat{\xv}_{n,n,j})^2} \nonumber \\
& \stackrel{(b)}{\asymp} \frac{  \bar{\delta}_{n,j}  \LB \sigma_{n,i,j} {\sigma}^\prime_{n,n,j}  \bar{m}^{\text{est}}_n \RB^2}{(1+\xi_{n,j}  \hat{\sigma}_{n,n,j} \bar{m}^{\text{est}}_n)^2}
\end{align}
where in $(a),$ we have removed the column $\hat{\xv}^H_{n,n,j}$
from the matrix $\Psi_n^{-1}$ using Lemma \ref{lem:lemma1}, and the result of
$(b)$ is derived similar to \eqref{eqn:impCSIusefulpower}.

Next, we consider the sum of interference signals from rest of
the UTs in cell $n,$ i.e.,
\begin{align}
& \sum_{k \neq j} 
\bar{\delta}_{n,k} {\xv}^H_{n,i,j} \Psi_n^{-1}  \hat{\xv}_{n,n,k} \hat{\xv}^H_{n,n,k} \Psi_n^{-1} {\xv}_{n,i,j} \nonumber \\
& =  {\xv}_{n,i,j} \Psi_n^{-1} \Bm_{n,j} \Psi_n^{-1} {\xv}_{n,i,j},
\label{eqn:refimm101}
\end{align}
where $\Bm_{n,j} = \sum_{k \neq j} \bar{\delta}_{n,k} \hat{\xv}_{n,n,k} \hat{\xv}^H_{n,n,k}.$
Using Lemma \ref{lem:lemma1}, we can decompose $\Psi^{-1}_{n}$ as
\begin{align}
\Psi^{-1}_{n} = \Psi^{-1}_{n,j} - \frac{ \xi_{n,j} \Psi^{-1}_{n,j} \hat{\xv}_{n,n,j} \hat{\xv}^H_{n,n,j} \Psi^{-1}_{n,j} }{1+\xi_{n,j} \hat{\xv}^H_{n,n,j}  \Psi^{-1}_{n,j} \hat{\xv}_{n,n,j}} \label{eqn:refimm11}.
\end{align}
Using \eqref{eqn:refimm11} in \eqref{eqn:refimm101}, we obtain
\begin{align}
& {\xv}^H_{n,i,j} \Psi_n^{-1} \Bm_{n,j} \Psi_n^{-1} {\xv}_{n,i,j} \nonumber \\
& = {\xv}^H_{n,i,j} \Psi^{-1}_{n,j} \Bm_{n,j} \Psi^{-1}_{n,j} {\xv}_{n,i,j} \label{eqn:refimm1}\\
& +  \frac{ \xi^2_{n,j} ({\xv}^H_{n,i,j} \Psi^{-1}_{n,j} \hat{\xv}_{n,n,j})^2 \hat{\xv}^H_{n,n,j} \Psi^{-1}_{n,j} \Bm_{n,j}\Psi^{-1}_{n,j} \hat{\xv}_{n,n,j} }{(1+\xi_{n,j} \hat{\xv}^H_{n,n,j}  \Psi^{-1}_{n,j} \hat{\xv}_{n,n,j})^2}  \label{eqn:refimm2} \\
&  - 2 \text{Re} \Big{\{}  \frac{  \xi_{n,j} ( \hat{\xv}^H_{n,n,j}  \Psi^{-1}_{n,j} {\xv}_{n,i,j}) {\xv}^H_{n,i,j} \Psi^{-1}_{n,j} B_{n,j} \Psi^{-1}_{n,j} \hat{\xv}_{n,n,j} }{1+\xi_{n,j} \hat{\xv}^H_{n,n,j}  \Psi^{-1}_{n,j} \hat{\xv}_{n,n,j}} \Big{\}} . \label{eqn:refimm3}
\end{align}
Consider the term \eqref{eqn:refimm1}. It can be proved that the spectral norm of $\Psi_{n,j}^{-1} \Bm_{n,j} \Psi_{n,j}^{-1}$ is bounded (following the steps in \eqref{eqn:BoundSpecNorm}). Therefore, using 
\begin{align}
{\xv}_{n,i,j} \Psi_{n,j}^{-1} \Bm_{n,j} \Psi_{n,j}^{-1} {\xv}_{n,i,j} 
\asymp \frac{\sigma_{n,i,j}}{N_t} \trace (\Psi_{n,j}^{-1} \Bm_{n,j} \Psi_{n,j}^{-1}).
\end{align}
Using some straightforward steps (similar to the derivation of \eqref{eqn:intbd1a}), it can be shown that
\begin{align}
\frac{1}{N_t}\trace (\Psi_{n,j}^{-1} \Bm_{n,j} & \Psi_{n,j}^{-1})  = \frac{1}{N_t} \sum_{k \neq j} \bar{\delta}_{n,k} \hat{\xv}^H_{n,n,k} \Psi_{n,j}^{-2} \hat{\xv}_{n,n,k}. \label{eqn:refimm103}
\end{align}
The right hand side of \eqref{eqn:refimm103} can be further analyzed as follows:
\begin{align}
\eqref{eqn:refimm103} & = \frac{1}{N_t} \sum_{k \neq j} \frac{\bar{\delta}_{n,k} \hat{\xv}^H_{n,n,k} \Psi_{n,j,k}^{-2} \hat{\xv}_{n,n,k} }{(1+\xi_{n,k} \hat{\xv}^H_{n,n,k} \Psi_{n,j,k}^{-1} \hat{\xv}_{n,n,k} )^2} \nonumber \\
& \asymp \frac{1}{N_t} \sum_{k \neq j} \frac{\bar{\delta}_{n,k} \hat{\sigma}_{n,n,k} (\bar{m}^{\prime}_n)^{\text{est}} }{(1+\xi_{n,k} \hat{\sigma}_{n,n,k} \bar{m}^{\text{est}}_n )^2}, \label{eqn:refimm102}
\end{align}
where $(\bar{m}^{\prime}_n)^{\text{est}}$ can be evaluated as in
\eqref{eqn:stilderImpCSI}, and the result of $\eqref{eqn:refimm102}$ has been derived following the approach of \eqref{eqn:finres4}.
Similarly, it can be shown that
\begin{align}
\eqref{eqn:refimm2} & \asymp \sum_{k \neq j}  \frac{ \bar{\delta}_{n,k} \xi^2_{n,j}  ({\sigma}_{n,i,j}  {\sigma}^\prime_{n,n,j} {\bar{m}^{\text{est}}}_n)^2 \hat{\sigma}_{n,n,j} \hat{\sigma}_{n,n,k} (\bar{m}^{\prime}_n)^{\text{est}} }{N_t (1+\xi_{n,j}  \hat{\sigma}_{n,n,j} \bar{m}^{\text{est}}_n )^2 (1+\xi_{n,k}  \hat{\sigma}_{n,n,k} \bar{m}^{\text{est}}_n )^2}, \nonumber \\
\eqref{eqn:refimm3} & \asymp -2\sum_{k \neq j} \frac{{\sigma}^2_{n,i,j}  ({\sigma}^\prime_{n,n,j})^2 {\bar{m}^{\text{est}}}_n \hat{\sigma}_{n,n,k} (\bar{m}^{\prime}_n)^{\text{est}} }{N_t (1+\xi_{n,j} \hat{\sigma}_{n,n,j} \bar{m}^{\text{est}}_n) (1+\xi_{n,k} \hat{\sigma}_{n,n,k} \bar{m}^{\text{est}}_n)^2}. \nonumber
\end{align}
Combining the convergence results of the useful signal and the interference signal terms derived in this Appendix, we obtain \eqref{eqn:DLSINRImpCSIConv}.

\bibliographystyle{IEEEtran}
\bibliography{IEEEabrv,refs}

\begin{IEEEbiography}
[{\includegraphics[width=1in,height=1.25in,clip,keepaspectratio]{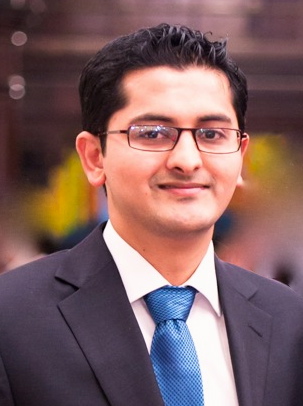}}]
{Subhash Lakshminarayana}
(S'07-M'12)
received his M.S. degree in Electrical
and Computer Engineering from The Ohio State
University in 2009, and his Ph.D. from the
Alcatel Lucent Chair on Flexible Radio
and the Department of Telecommunications at \'Ecole Sup\'erieure d'\'Electricit\'e (Sup\'elec), France in 2012.
Currently he is with The Singapore University of Technology
and Design (SUTD). He has held visiting research appointments at Princeton University from Aug-Dec
2013 and May-Nov 2014. He has also been has been a student researcher at the Indian Institute of Science, Bangalore during
2007.

His work was selected among the ``Best 50 papers" of IEEE Globecom 2014. He has served as TPC member for
several top international conferences.
His research interests broadly spans wireless communication and signal processing with emphasis on small cell networks
(SCNs), cross-layer design wireless networks, MIMO systems, stochastic network optimization, energy harvesting and smart grid systems.
\end{IEEEbiography}

\begin{IEEEbiography}
[{\includegraphics[width=1in,height=1.25in,clip,keepaspectratio]{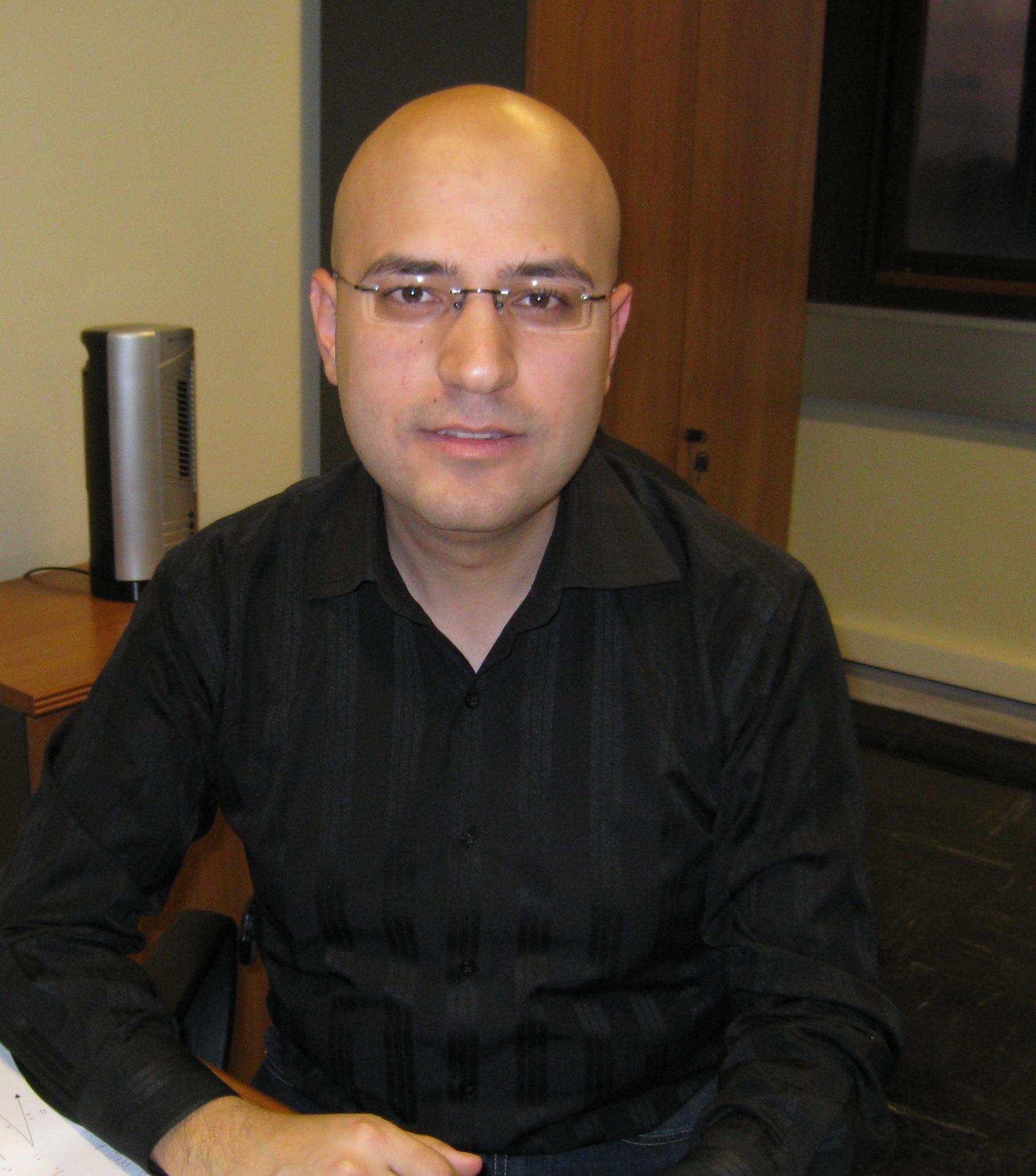}}]
{Mohamad Assaad}
received the BE degree in electrical engineering with high honors from Lebanese University, Beirut, in 2001, and the MSc and PhD degrees (with high honors), both in telecommunications, from Telecom ParisTech, Paris, France, in 2002 and 2006, respectively. Since March 2006, he has been with the Telecommunications Department at CentraleSup\'elec, where he is currently an associate professor. He has co-authored 1 book and more that 75 publications in journals and conference proceedings and serves regularly as TPC member for several top international conferences. His research interests include mathematical models of communication networks, resource optimization and cross-layer design in wireless networks, and stochastic network optimization.

\end{IEEEbiography}

\begin{IEEEbiography}
[{\includegraphics[width=1in,height=1.25in,clip,keepaspectratio]{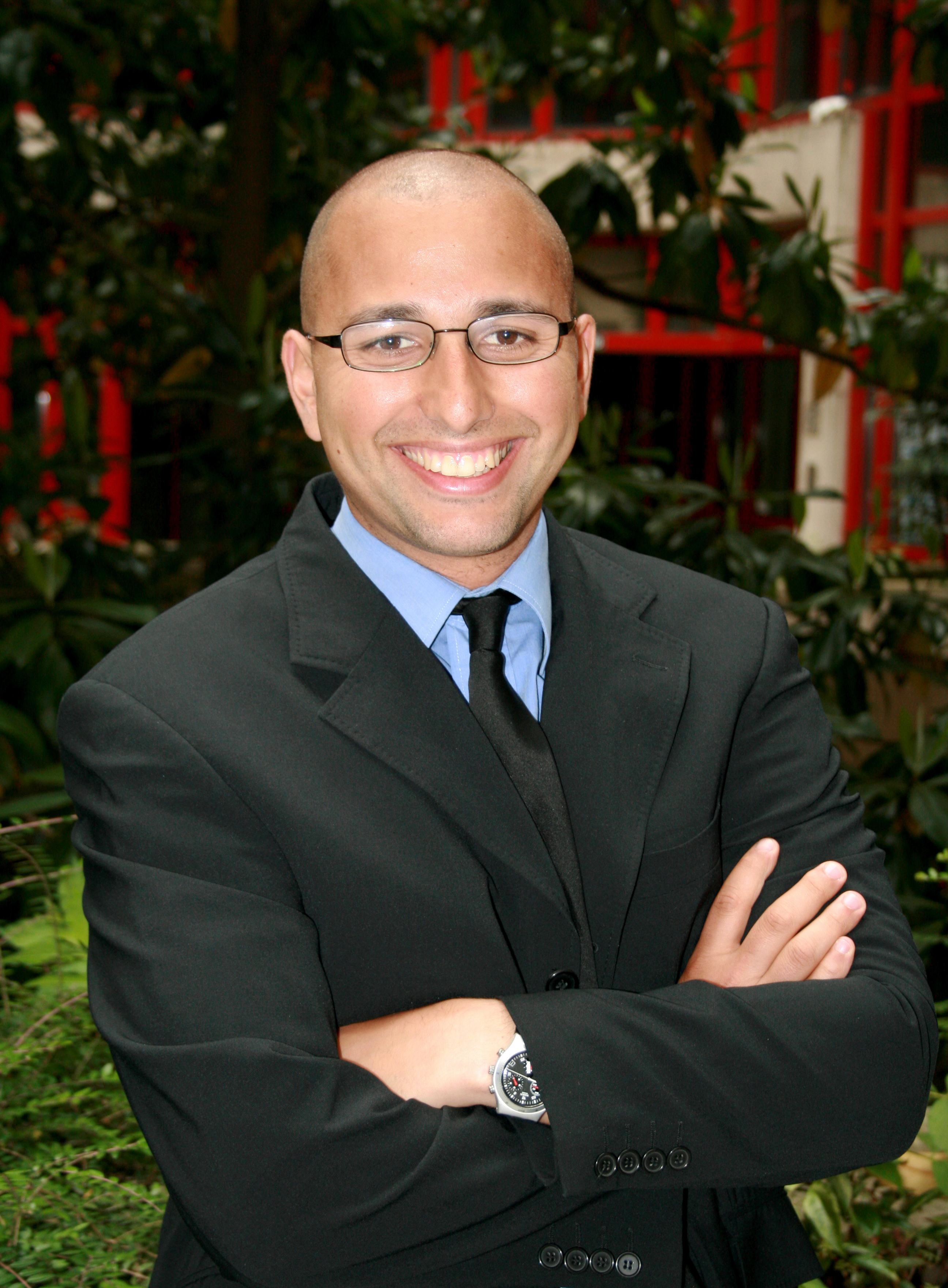}}]
{M\'erouane Debbah}
entered the \'Ecole Normale Sup\'erieure de Cachan (France) in 1996 where he received his M.Sc and Ph.D. degrees respectively. He worked for Motorola Labs (Saclay, France) from 1999-2002 and the Vienna Research Center for Telecommunications (Vienna, Austria) until 2003. From 2003 to 2007, he joined the Mobile Communications department of the Institut Eurecom (Sophia Antipolis, France) as an Assistant Professor. Since 2007, he is a Full Professor at Sup\'elec (Gif-sur-Yvette, France). From 2007 to 2014, he was director of the Alcatel-Lucent Chair on Flexible Radio. Since 2014, he is Vice-President of the Huawei France R\&D center and director of the Mathematical and Algorithmic Sciences Lab. His research interests are in information theory, signal processing and wireless communications. He is an Associate Editor in Chief of the journal Random Matrix: Theory and Applications and was an associate and senior area editor for IEEE Transactions on Signal Processing respectively in 2011-2013 and 2013-2014. M\'erouane Debbah is a recipient of the ERC grant MORE (Advanced Mathematical Tools for Complex Network Engineering). He is a IEEE Fellow, a WWRF Fellow and a member of the academic senate of Paris-Saclay. He is the recipient of the Mario Boella award in 2005, the 2007 IEEE GLOBECOM best paper award, the Wi-Opt 2009 best paper award, the 2010 Newcom++ best paper award, the WUN CogCom Best Paper 2012 and 2013 Award, the 2014 WCNC best paper award as well as the Valuetools 2007, Valuetools 2008, CrownCom2009 , Valuetools 2012 and SAM 2014 best student paper awards. In 2011, he received the IEEE Glavieux Prize Award and in 2012, the Qualcomm Innovation Prize Award.

\end{IEEEbiography}

\end{document}